\documentclass[11pt]{article}

\usepackage{amsmath}
\usepackage{amsthm}
\usepackage{amssymb}
\usepackage[usenames,dvipsnames]{xcolor}
\usepackage{graphicx}
\usepackage{longtable}
\usepackage[margin=1in]{geometry}
\usepackage[utf8]{inputenc}
\usepackage{scalefnt}
\usepackage[colorlinks=true, allcolors=blue]{hyperref}
\usepackage{mathtools}
\usepackage{scalerel}
\usepackage{makecell}

\usepackage{algorithm}
\usepackage{algorithmicx}
\usepackage[noend]{algpseudocode}

\usepackage{tikz}
\usetikzlibrary{shapes,matrix,calc,arrows,positioning,graphs}

\usepackage{float}

\usepackage{titlesec}
\titlespacing*{\section}
{0pt}{9pt plus 1pt minus 1pt}{7pt plus 1pt minus 1pt}
\titlespacing*{\subsection}
{0pt}{7pt plus 1pt minus 1pt}{5pt plus 1pt minus 1pt}

\newtheorem{theorem}{Theorem}[section]
\newtheorem{definition}[theorem]{Definition}
\newtheorem{proposition}[theorem]{Proposition}

\newtheorem{corollary}[theorem]{Corollary}
\newtheorem{lemma}[theorem]{Lemma}

\newtheorem{remark}[theorem]{Remark}

\newtheorem{assumption}{Assumption}
\newtheorem*{conjecture*}{Conjecture}

\newenvironment{replemma}[1]{%
  \begingroup
  \def\thetheorem{\ref{#1}}
  \begin{lemma}
}{%
  \end{lemma}
  \endgroup
}

\newenvironment{ccases}
  {\left\lbrace\begin{array}{>{\textstyle}c @{\quad} l}}
  {\end{array}\right.}
  
\newcommand{\AppendixBanner}{%
  \hbox to \textwidth{%
    \leaders\hbox{\rule[.55ex]{1em}{0.5pt}}\hfill
    \enspace\textbf{APPENDIX}\enspace
    \leaders\hbox{\rule[.55ex]{1em}{0.5pt}}\hfill
  }%
}

\makeatletter
\def\ps@appendix{%
  \let\@mkboth\markboth
  \def\@oddhead{\AppendixBanner}%
  \def\@evenhead{\AppendixBanner}%
  \def\@oddfoot{\hfil\thepage\hfil}%
  \def\@evenfoot{\hfil\thepage\hfil}%
}
\makeatother

\newcommand{\GE}{\;\ge\;}
\newcommand{\LE}{\;\le\;}
\newcommand{\EQ}{\;=\;}

\newcommand{\KT}{\,\mathrm{KT}}
\newcommand{\IT}{\,\mathrm{IT}}
\newcommand{\DT}{\,\mathrm{DT}}
\newcommand{\ST}{\,\mathrm{ST}}
\newcommand{\EC}{\,\mathrm{EC}}
\newcommand{\RT}{\,\mathrm{RT}}

\newcommand{\val}{\mathit{val}}
\newcommand{\flag}{\mathit{flag}}

\newcommand{\faked}{\mathsf{d}}

\newcommand{\SV}{Sharma and Vondr\'ak~\cite{SV14}}

\newcommand{\CKRal}{C{\u{a}}linescu et al.~\cite{CKR00}}
\newcommand{\KKSTYal}{Karger et al.~\cite{KKSTY04}}

\title{Improved Approximation Algorithms for Multiway Cut\\ by Large Mixtures of New and Old Rounding Schemes\thanks{Full version of paper to appear in the proceedings of STOC 2026. Companion code: \url{https://github.com/jbrakensiek/multiway-cut-verification}}}
\author{Joshua Brakensiek\thanks{University of California, Berkeley. Email: \texttt{josh.brakensiek@berkeley.edu}. Supported in part by a Simons Investigator award and NSF grants CCF-2211972 and DMS-2503280.}\and Neng Huang\thanks{University of Michigan. Email: \texttt{nengh@umich.edu}}\and Aaron Potechin\thanks{University of Chicago. Email: \texttt{potechin@uchicago.edu}}\and Uri Zwick\thanks{Blavatnik School of Computer Science, Tel Aviv University, Israel. Email: \texttt{zwick@tau.ac.il}. Work supported by ISF grant 2735/2025.}}
\date{}

\newcommand{\eps}{\varepsilon}
\newcommand{\E}{\mathop{\mathbb{E}}}

\newcommand{\RR}{\mathbb{R}}
\newcommand{\ee}{{\rm e}}
\newcommand{\dd}{{\rm d}}

\renewcommand{\Pr}{\mathbb{P}}

\newcommand{\be}{\boldsymbol{e}}

\newcommand{\bv}{\boldsymbol{v}}
\newcommand{\bu}{\boldsymbol{u}}

\newcommand{\bx}{{\boldsymbol{x}}}
\newcommand{\by}{{\boldsymbol{y}}}

\newcommand{\bR}{{\boldsymbol{R}}}
\newcommand{\bU}{{\boldsymbol{U}}}

\newcommand{\cR}{\mathcal{R}}

\newcommand{\argmin}{\operatorname{argmin}}
\newcommand{\argmax}{\operatorname{argmax}}

\newcommand{\true}{\mathtt{true}}
\newcommand{\false}{\mathtt{false}}

\newcommand{\leftsf}{\mathsf{L}}
\newcommand{\rightsf}{\mathsf{R}}

\newcommand{\ourAPX}{1.2787}  
\newcommand{\ourAPXd}{1.1489} 
\newcommand{\ourAPXe}{1.1837} 
\newcommand{\ourAPXf}{1.2149} 
\newcommand{\ourAPXg}{1.2399} 
\newcommand{\ourAPXh}{1.2499} 
\newcommand{\ourAPXi}{1.2549} 
\newcommand{\ourAPXj}{1.2599} 

\newcommand{\STprob}{0.138938}
\newcommand{\KTprob}{0.581747}
\newcommand{\ITprob}{0.120631}
\newcommand{\DTprob}{0.158685}
\newcommand{\numfun}{394}
\newcommand{\numfuna}{148} 

\newcommand{\MC}{Multiway Cut}

\begin{document}

\maketitle
\thispagestyle{empty} 

\begin{abstract}
\setlength{\parindent}{0pt}
\setlength{\parskip}{6pt plus 2pt}
\noindent
The input to the Multiway Cut problem is a weighted undirected graph, with nonnegative edge weights, and $k$ designated terminals. The goal is to partition the vertices of the graph into~$k$ parts, each containing exactly one of the terminals, such that the sum of weights of the edges connecting vertices in different parts of the partition is minimized. The problem is APX-hard for $k\ge3$. The currently best known approximation algorithm for the problem for arbitrary~$k$, obtained by Sharma and Vondr\'ak [STOC 2014] more than a decade ago, has an approximation ratio of 1.2965. We present an algorithm with an improved approximation ratio of 1.2787. Also, for small values of $k \ge 4$ we obtain the first improvements in 25 years over the currently best approximation ratios obtained by Karger, Klein, Stein, Thorup, and Young [STOC 1999]. (For $k=3$ an optimal approximation algorithm is known.)

Our main technical contributions are new insights on rounding the LP relaxation of C{\u{a}}linescu, Karloff, and Rabani [STOC 1998], whose integrality ratio matches Multiway Cut's approximability ratio, assuming the Unique Games Conjecture [Manokaran, Naor, Raghavendra, and Schwartz, STOC 2008]. First, we introduce a generalized form of a rounding scheme suggested by Kleinberg and Tardos [FOCS 1999] and use it to replace the Exponential Clocks rounding scheme used by Buchbinder, Naor, and Schwartz [STOC 2013] and by Sharma and Vondr\'ak. Second, while previous algorithms use a mixture of two, three, or four basic rounding schemes, each from a different family of rounding schemes, our algorithm uses a computationally-discovered mixture of hundreds of basic rounding schemes, each parametrized by a random variable with a distinct probability distribution, including in particular many different rounding schemes from the same family. We give a completely rigorous analysis of our improved algorithms using a combination of analytical techniques and interval arithmetic.

\end{abstract}

\clearpage
\setcounter{page}{1}

\section{Introduction}

The input to the (Minimum) Multiway Cut problem is a weighted undirected graph $G = (V,E,w)$, where $w:E\to\RR^+$, with $k$ designated vertices called \emph{terminals}. The objective is to find a partition of $V$ into $k$ parts, each containing exactly one terminal, so as to minimize the total weight of edges whose endpoints lie in different parts of the partition. For $k=2$ this is the classical min $s$-$t$ cut problem which admits efficient polynomial-time algorithms. For $k\ge 3$, however, the problem is APX-hard~\cite{DJPSY94}. This motivated a long series of papers \cite{DJPSY94,CKR00,KKSTY04,CCT06,SV14,BNS18,BSW19,BSW21} that design approximation algorithms for the problem, with gradually improving approximation ratios. The currently best approximation algorithm for arbitrary $k$, due to Sharma and Vondrák~\cite{SV14}, achieves a ratio of $1.2965$. For the full history of the problem, and a summary of all previous results, see Table~\ref{T-approx} and Section~\ref{sub:previous}.

For $k=3$, Karger, Klein, Stein, Thorup, and Young \cite{KKSTY04} and independently Cheung, Cunningham, and Tang \cite{CT99,CCT06}, combined with a result of Manokaran, Naor, Raghavendra, and Schwartz \cite{MNRS08}, showed that the best approximation ratio for the problem, assuming the Unique Games Conjecture (UGC), is $\frac{12}{11}=1.0909\ldots\;$. Karger et al.~\cite{KKSTY04} also give improved approximation algorithms for small values of~$k$. These results are summarized in Table~\ref{T-finite}.

Our main result is an improved approximation algorithm for the \MC\ problem, for arbitrary~$k$, whose approximation ratio is $\ourAPX$. This is the first improved approximation ratio for the problem for over a decade.

\begin{theorem}\label{thm:main}
\MC\ with an arbitrary number of terminals has a polynomial-time $\ourAPX$-approximation algorithm.
\end{theorem}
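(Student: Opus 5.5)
The plan is the standard CKR route. We solve the relaxation of \CKRal, which embeds each vertex $u$ as a point $\bx^u$ in the simplex $\Delta_k$, with terminal $i$ mapped to $\be_i$. The LP cost of an edge $(u,v)$ is $\frac12\|\bx^u-\bx^v\|_1$. We then round with a random partition of $\Delta_k$ that sends each vertex $\be_i$ to part $i$.

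By the standard reduction (as in \KKSTYal\ and \SV), it suffices to find a distribution over rounding schemes such that every pair $\bx,\by\in\Delta_k$ is separated with probability at most $\ourAPX\cdot\frac12\|\bx-\by\|_1$. Since the schemes we use are continuous, scale invariant and symmetric under permuting coordinates, it is enough to check infinitesimal edges $\by=\bx+\eps(\be_i-\be_j)$. The worst case is then an edge in direction $\be_i-\be_j$ at a point $\bx$. This reduces the problem to bounding a \emph{local density} $d(\bx;i,j)=\lim_{\eps\to0}\Pr[\text{separate}]/\eps$ by $\ourAPX$ for every $\bx$ and every $k$.

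The rounding is a convex combination of basic schemes from the families already in the literature:
\begin{itemize}
\item the single threshold $\ST$ scheme of \CKRal;
\item the independent threshold $\IT$ scheme;
\item the descending threshold $\DT$ scheme of \SV;
\item the exponential clocks $\EC$ scheme of Buchbinder, Naor and Schwartz, replaced here by the generalized Kleinberg--Tardos scheme $\KT$.
\end{itemize}
In $\KT$, terminals arrive in random order and each grabs the unassigned points whose coordinate exceeds a threshold drawn from a chosen distribution. Each family is parametrized by the distribution of its threshold random variable. We would derive a closed-form (integral) expression for the local density of each scheme as a function of $x_i$, $x_j$ and a few symmetric statistics of the remaining coordinates. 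In the spirit of \SV, we would also show that the worst configurations of the other coordinates can be taken to be "rounded" to a small number of distinct values. Letting $k\to\infty$ (or handling general $k$ by monotonicity) gives a finite-dimensional family of constraints.

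Next, we choose the mixture computationally. We discretize the space of possible threshold distributions and the space of configurations $(x_i,x_j,\text{profile})$, then solve a zero-sum game / LP: minimize $c$ subject to $\sum_s p_s d_s(\bx)\le c$ at all sampled $\bx$. Column generation adds new scheme distributions until the value approaches $\ourAPX$, which should yield a mixture of a few hundred schemes.

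The main obstacle is rigor. We must certify that the mixture satisfies the density bound at \emph{all} points, not just sampled ones, and uniformly in $k$. The approach is a branch-and-bound over boxes of the reduced parameter space, using interval arithmetic on the density formulas and subdividing any box where the enclosure exceeds $\ourAPX$. Near the boundary (when $x_i$ or $x_j$ is close to $0$, or $k$ is large), interval enclosures blow up. There we would prove analytic bounds separately, for instance by Taylor expansion of the densities near the boundary together with a tail argument for large $k$, leaving enough slack for the numerical certification to close.
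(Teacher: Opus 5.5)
Your high-level architecture matches the paper's: the CKR relaxation, reduction to cut densities of infinitesimal aligned edges, a computationally discovered mixture of $\ST$, $\IT$, $\DT$ and generalized $\KT$ schemes found by column generation on a zero-sum game, and an interval-arithmetic branch-and-bound certificate. The gap is the step that makes the certificate valid for \emph{every} $k$, which is the whole difficulty with an arbitrary number of terminals. You assert that the worst configurations of the other coordinates can be ``rounded'' to a small number of distinct values. You also propose to handle large $k$ and the boundary with Taylor expansions and a tail argument. You give no mechanism for either, and nothing makes the reduction automatic. The $\IT(f)$ density depends on \emph{all} the elementary symmetric functions $e_b(y_{\ell+1},\ldots,y_k)$ of $y_i=1-F(u_i)$ (Corollary~\ref{C-IT-decompose}), not on a few statistics. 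Moreover, different schemes in a mixture may have their worst suffixes in different places, so bounding each scheme and summing could be badly lossy.

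The paper gets the reduction by restricting the discovery itself to $\alpha$-compliant mixtures (Definition~\ref{D:compliance}): $f$ is constant and positive on $[0,\alpha]$ for every $\KT(f)$, and non-decreasing on $[0,\alpha]$ for every $\IT(f)$. Then, for suffix coordinates in $[0,\alpha]$, the following hold:
\begin{itemize}
\item the $\KT(f)$ density depends on the suffix only through $\sum_i F(u_i)=\sum_{i\le \ell}F(u_i)+f(0)\bigl(1-\sum_{i\le\ell}u_i\bigr)$ (Lemma~\ref{L-KT-prefix});
\item for $\IT(f)$, Maclaurin's inequality plus concavity of $1-F$ on $[0,\alpha]$ shows that all the $e_b$ are maximized simultaneously at equal suffix coordinates (Lemma~\ref{L-IT-prefix}); the $k\to\infty$ limit of that configuration has the closed form of Lemma~\ref{L-IT-asymptotic} and dominates every finite $k$;
\item $\ST$ and $\DT$ are bounded by discarding suffix-dependent terms.
\end{itemize}
These per-scheme bounds are essentially all attained at the same ``smeared'' suffix, so summing them loses little. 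The verifier then only needs to branch on coordinates larger than $\alpha$, of which there are at most $1/\alpha$. This yields finitely many compact boxes with bounded explicit formulas, and no separate boundary or tail analysis is needed.

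Separately, what you describe as $\KT$ (terminals arrive in random order and each grabs the unassigned points whose coordinate exceeds a threshold) is a single pass over a random permutation, i.e., it is $\IT$. The generalized Kleinberg--Tardos scheme instead draws, in every round, a uniformly random terminal \emph{with replacement} and a fresh threshold from $f$, until all points are assigned; termination needs $F(\frac1k)>0$. This is what gives its density the form $\frac{f(u_1)}{S}\bigl(1-\frac{F(u_1)}{S}\bigr)+\frac{f(u_2)}{S}\bigl(1-\frac{F(u_2)}{S}\bigr)$ with $S=\sum_i F(u_i)$ (Corollary~\ref{C-KT}). Because this depends on the remaining coordinates only through $S$, it is tractable for arbitrary $k$. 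It carries about $0.58$ of the mixture's probability, so as written your plan drops the new ingredient the improvement rests on.
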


We also obtain improved approximation ratios for every value of $k\ge 4$. For the exact approximation ratios, see Table~\ref{T-finite}. These are the first improvements of these approximation ratios for over 25 years. In particular, we have:

\begin{theorem}\label{thm:finite}
(a) \MC\ with $4$ terminals has a polynomial-time $\ourAPXd$-approximation algorithm. (b) \MC\ with $5$ terminals has a polynomial-time $\ourAPXe$-approximation algorithm. (c) \MC\ with $6$ terminals has a polynomial-time $\ourAPXf$-approximation algorithm. 
\end{theorem}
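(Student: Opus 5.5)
The plan follows the standard framework for rounding the LP relaxation of \CKRal. First solve the CKR relaxation: each vertex $u$ is embedded as a point $\bx^u$ in the simplex $\Delta_k=\{\bx\in\RR^k_{\ge0}:\sum_i x_i=1\}$, terminal $i$ at $\be_i$, minimizing $\sum_{uv} w_{uv}\,\tfrac12\|\bx^u-\bx^v\|_1$. Any randomized rounding scheme $\cR$ that maps $\Delta_k$ to $\{1,\dots,k\}$ and sends $\be_i$ to $i$ gives a valid multiway cut. By linearity of expectation, it suffices to prove the following for each fixed $k\in\{4,5,6\}$ and all $\bx,\by\in\Delta_k$:
\[
\Pr[\cR(\bx)\ne\cR(\by)]\LE \alpha_k\cdot\tfrac12\|\bx-\by\|_1 ,
\]
with $\alpha_4=\ourAPXd$, $\alpha_5=\ourAPXe$, $\alpha_6=\ourAPXf$. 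The algorithm is then polynomial time. It can also be derandomized, or repeated to get the bound with high probability up to an $\eps$ that is absorbed by the rounding of the constants.

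The standard reduction (used by \KKSTYal\ and \SV) lets us restrict to ``aligned'' edges, where $\bx$ and $\by$ differ in only two coordinates $i,j$. This works because the schemes are symmetric under coordinate permutations, and any edge can be subdivided along a monotone path into aligned pieces without changing the $\ell_1$ length. We then take $\cR$ to be a finite mixture of basic schemes:
\begin{itemize}
\item the generalized Kleinberg--Tardos threshold schemes $\KT$;
\item independent thresholds $\IT$;
\item descending thresholds $\DT$;
\item single-threshold schemes $\ST$;
\end{itemize}
each family parametrized by a threshold distribution. For each basic scheme, we write the cut density of an aligned edge as an explicit one-dimensional (or low-dimensional) integral in the coordinates of $\bx$. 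So the cut density of the mixture, divided by edge length, becomes a function $f(\bx)$ on a compact subset of $\Delta_k$ that is piecewise smooth and explicit, with the edge shrunk to infinitesimal length. The theorem then reduces to showing $\max f\le\alpha_k$.

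We choose the mixture weights and parameter distributions for each $k$ by solving a finite LP. The LP takes many candidate basic schemes as columns and a fine grid of points $\bx$ as rows, and minimizes the worst ratio; this yields a concrete mixture. Certification is the main obstacle, because the grid only gives a heuristic bound and we need a rigorous upper bound on $f$ over the whole continuous domain. To do this, we partition the domain (after sorting coordinates using symmetry) into boxes. On each box we bound $f$ with interval arithmetic, using monotonicity of each basic scheme's cut density in the thresholds and coordinates, and subdivide adaptively until every box certifies $f\le\alpha_k$.

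Near the degenerate boundary regions the interval bounds become loose. These are points close to a vertex or edge of the simplex, where $f$ approaches its supremum or where derivatives blow up. For those regions we will first try proving analytic bounds directly, for example Taylor-expanding the density near $\be_i$ and showing the leading term is at most $\alpha_k$. Only the remaining compact interior is then left to the interval-arithmetic computation.
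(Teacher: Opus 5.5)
Your top-level framework matches the paper's. Both use the CKR relaxation, a symmetric mixture of $\KT$, $\IT$, $\DT$ and $\ST$ schemes, reduction to the cut density of infinitesimal $(1,2)$-aligned edges, a numerically discovered mixture, and an interval-arithmetic certificate. The gap is in certification. Adaptive box subdivision of all of $\Delta_k$ is exactly the approach the paper sets aside as infeasible even for small $k$, and the idea that makes certification tractable is missing from your plan.

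Exhaustive subdivision is sound in principle, but it does not scale here:
\begin{itemize}
\item the domain is $(k-1)$-dimensional, which is five-dimensional for $k=6$;
\item every evaluation sums hundreds of basic schemes;
\item the optimized mixture's density is nearly flat and close to the target over large regions, so boxes must be small essentially everywhere, not only near the boundary.
\end{itemize}
Your diagnosis of where the difficulty lies is also off:
\begin{itemize}
\item With bounded piecewise-linear threshold densities nothing degenerates near vertices or edges of the simplex. For $\KT(f)$ the denominator satisfies $\sum_i F(u_i)\ge F(\frac1k)>0$. So Taylor expansions near $\be_i$ address a non-issue.
\item The cut densities are not monotone in the coordinates, so monotonicity cannot be what drives the interval bounds.
\item The one genuine irregularity is the discontinuity of the $\ST$ density where some $u_i$ ties with $u_1$ or $u_2$. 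Handling it requires evaluating the comparisons conservatively, so that you obtain an upper bound rather than an enclosure.
\end{itemize}

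What is missing is the prefix bound enabled by $\alpha$-compliance. The mixture must be constrained during discovery so that every $\IT(f)$ has $f$ non-decreasing on $[0,\alpha]$, and every $\KT(f)$ has $f$ constant and nonzero on $[0,\alpha]$. Then, for a prefix $(u_1,\ldots,u_\ell)$, the maximum of each basic density over all completions with $u_{\ell+1},\ldots,u_k\le\alpha$ is bounded by an explicit function $\faked^R_k$ of the prefix alone:
\begin{itemize}
\item For $\KT(f)$ the bound is exact, since $\sum_{i>\ell}F(u_i)=f(0)(1-\sum_{i\le\ell}u_i)$.
\item For $\IT(f)$ the density is a nonnegative combination of the $e_b(y_{\ell+1},\ldots,y_k)$ with $y_i=1-F(u_i)$. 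Maclaurin's inequality plus concavity of $1-F$ on $[0,\alpha]$ shows these are all maximized simultaneously at equal suffix coordinates.
\item For $\DT$ and $\ST$, crude bounds that ignore or coarsely bound the suffix suffice.
\end{itemize}
The verifier then subdivides only prefix boxes. It appends a coordinate in $[\alpha,1-\sum_{i\le\ell}u_i]$ when the bound passes, or in $[0,1-\sum_{i\le\ell}u_i]$ when the bound is too loose on a small box. All points whose remaining coordinates are at most $\alpha$ are thus covered without subdividing in those coordinates. Even so, the paper's runs for $k=4,5,6$ took roughly $1500$--$2800$ core-hours.

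A similar remark applies to discovery. A fine grid of rows over a five-dimensional simplex is impractically large, and a fixed, preselected set of columns is too crude because $\KT(f)$, $\IT(f)$ and $\DT(f)$ depend nonlinearly on $f$. The paper instead alternates between solving the game and generating new schemes (nonconvex optimization over piecewise-linear densities, subject to $\alpha$-compliance) and new adversarial points. Nothing in your plan indicates that a fixed candidate set would reach $\ourAPXd$, $\ourAPXe$ and $\ourAPXf$.
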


\subsection{Previous Work}\label{sub:previous}

Dahlhaus et al.~\cite{DJPSY92,DJPSY94} were the first to consider the \MC\ problem from the computational complexity perspective. They proved that the problem is APX-hard for $k\ge 3$ and gave a simple combinatorial algorithm with an approximation ratio of $2-\frac{2}{k}$. They also showed that in planar graphs, the problem can be solved in polynomial time for every fixed $k$, but is NP-hard when~$k$ is part of the input. 

Cunningham~\cite{C91}, Chopra and Rao~\cite{CR91}, and Bertsimas, Teo, and Vorha~\cite{BTV99} studied the facets of the multiway cut polyhedron and suggested some simple linear programming relaxations of the problem. These did not lead to improved approximation ratios.

The next major step was made by C{\u{a}}linescu, Karloff, and Rabani~\cite{CKR00}. They introduced a stronger LP relaxation of the problem, equivalent to embedding the vertices of the graph into the $k$-simplex $\Delta_k=\{(u_1,u_2,\ldots,u_k) \mid u_1,u_2,\ldots,u_k\ge 0\,,\,\sum_{i=1}^k u_i=1\}$, \footnote{According to our definition, the $k$-simplex has $k$ vertices and is $(k-1)$-dimensional. Some authors refer to this as the $(k-1)$-simplex. Our notation seems more convenient for our purposes.} and used a simple \emph{Single Threshold (ST)} rounding scheme to obtain an approximation algorithm with an approximation ratio of $1.5-\frac{1}{k}$. All subsequent approximation algorithms for the \MC\ problem use the relaxation of C{\u{a}}linescu et al.~\cite{CKR00}. For completeness, the CKR LP relaxation is described in Appendix~\ref{A:LP-relax}.

Karger et al.~\cite{KKSTY99,KKSTY04} obtained an improved approximation ratio of $1.3438$. Their algorithm uses the Single Threshold (ST) rounding scheme of C{\u{a}}linescu et al.~\cite{CKR00} together with another rounding scheme which is now known as \emph{Independent Thresholds (IT)}. 

\begin{table}[t]
\begin{center}
\begin{tabular}{c@{\hspace{3em}}ccc}
Ratio & Paper & Method & Analysis Type\\[3pt]
\hline
\noalign{\vskip 3pt}\noalign{\vskip 3pt}
2 & \cite{DJPSY94} & Combinatorial & Analytic\\
1.5 & \cite{CKR00} & ST & Analytic\\
1.3438 & \cite{KKSTY04} & ST + IT & Analytic\\
1.3239 & \cite{BNS18} & ST + EC/KT & Analytic\\
1.3022 & \cite{SV14} & ST + EC/KT + DT & Analytic\\
1.2970 & \cite{BSW21} & Simplex Trans. + ST + EC/KT & Analytic\\
1.2965 & \cite{SV14} & ST + EC/KT + IT + DT & Computational\\[3pt]
\hline
\noalign{\vskip 3pt}\noalign{\vskip 3pt}
\ourAPX & Here &  ST + GKT${}^*$ + IT${}^*$ + DT${}^*$ & Computational\\
\end{tabular}
\end{center}
\caption{A summary of \MC\ approximation algorithms for an arbitrary number of terminals~$k$. The ``Method'' column describes the rounding techniques used by the algorithms according to the following legend:  Single Threshold (ST), Independent Thresholds (IT), Exponential Clocks or Kleinberg-Tardos (EC/KT), Descending Thresholds (DT), and Generalized Kleinberg-Tardos (GKT). A star, as in IT${}^*$, indicates that several independent copies of the rounding schemes of this families are used, each parametrized by a different random variable.
}\label{T-approx}
\end{table}

As mentioned, Cheung, Cunningham, and Tang \cite{CT99,CCT06} and independently Karger et al.~\cite{KKSTY99,KKSTY04}, obtained a $\frac{12}{11}$-approximation ratio for the case $k=3$. They also showed that this matches the integrality ratio of the relaxation of C{\u{a}}linescu et al.~\cite{CKR00}.

Karger et al.~\cite{KKSTY99,KKSTY04} also obtained improved approximation ratios for small values of~$k$. For $k=4,5$, their best algorithms are obtained by discretizing the simplex and using very large linear programs to find nearly-optimal distributions of \emph{side-parallel cuts (sparcs)}. For $k\ge 6$, they use their algorithm for general~$k$ with slightly tuned parameters. The exact ratios obtained can be found in Table~\ref{T-finite}.

The next improvement, for general~$k$, was obtained by Buchbinder, Naor, and Schwartz \cite{BNS18}. They obtained a simple approximation algorithm with an approximation ratio of $\frac43 - \frac{4}{9k-6}<1.3334$ and a slightly more complicated algorithm with an approximation ratio of $1.3239 - \frac{1}{24k}$. Their algorithms use a mixture of the Single Threshold (ST) rounding scheme with a non-uniform random variable, and a technique called \emph{Exponential Clocks (EC)} that was previously introduced by Ge, He, Ye, and Zhang~\cite{GHYZ11}. They also showed that the Exponential-Clocks (EC) rounding scheme could be replaced by a rounding scheme introduced by Kleinberg and Tardos~\cite{KlTa02}.

\begin{table}[t]
\begin{center}
\vspace*{10pt}
\begin{tabular}{cccccc}
$k$ & \makecell[c]{SPARCs\\ \cite{KKSTY04}} & \makecell[c]{ST+IT\\ \cite{KKSTY04}} & \makecell[c]{EC+ST\\ \cite{BNS18}} & \makecell[c]{New\\ Algorithms} & \makecell[c]{Lower Bound\\ \cite{AMM17}}\\[10pt]\hline
\noalign{\vskip 3pt}
4 & 1.1539 & 1.189 & 1.2000 & \ourAPXd & 1.125\\
5 & 1.2161 & 1.223 & 1.2308 & \ourAPXe & 1.142\\
6 & 1.2714 & 1.244 & 1.2500 & \ourAPXf & 1.1538\\
7 & 1.3200 & 1.258 & 1.2632 & \ourAPXg & 1.1612\\
8 & 1.3322 & 1.269 & 1.2728 & \ourAPXh & 1.1666\\
9 & -- & 1.277 & 1.2800 & \ourAPXi & 1.1707\\
10 & -- & 1.284 & 1.2858 & \ourAPXj & 1.1739\\
\hline
\noalign{\vskip 3pt}
any &  &  &  & \ourAPX & 1.20016\\
 &  &  &  &  & \cite{BCKM20}\\
\end{tabular}
\vspace{-4pt}
\end{center}
\caption{Upper and lower bounds on the approximation ratios of the \MC\ problem with $k=4,5,\dots,10$ terminals. (For $k=3$ a tight bound of $\frac{12}{11} = 1.0909...$ is known.) The lower bounds are on the integrality ratio of the LP relaxation of \MC, which translate under the Unique Games Conjecture to lower bounds on the approximability ratio of the problem. The last line restates the approximation ratio that we get for an arbitrary number of terminals and also gives the best known lower bound for this case. The approximation ratio that we obtain for general~$k$ improves over all previously known results for $k\ge 10$.}\label{T-finite}
\end{table}

Sharma and Vondr\'ak~\cite{SV14} built on the results of Buchbinder et al.~\cite{BNS18} and obtained three improved approximation algorithms for the \MC\ problem. Their first algorithm, with a ratio of $\frac{3+\sqrt{5}}{4}\simeq 1.309017$, was obtained using a combination of the Single Threshold (ST) and Exponential-Clocks (EC) rounding techniques, as used by Buchbinder et al.~\cite{BNS18}. They also showed that under some mild assumptions, this is the best ratio that can be obtained by combining these two techniques. Their second approximation algorithm, with a ratio of $\frac{10+4\sqrt{3}}{13}\simeq 1.30217$, was obtained by combining the two original rounding schemes with a new rounding technique which they call \emph{Descending Thresholds (DT)}. Their third and final algorithm, with an approximation ratio of $1.2965$, is obtained by adding the Independent Thresholds (IT) rounding technique of Karger et al.~\cite{KKSTY04} to the mix. Their analysis of the third algorithm is computer assisted.

Buchbinder, Schwartz, and Weizman \cite{BSW21} obtained a relatively simple algorithm with an approximation ratio of $\frac{297}{229} \simeq 1.29694$, almost matching the best approximation ratio of Sharma and Vondr\'ak~\cite{SV14}. This is currently the best approximation ratio that can be verified analytically without the use of a computer. Their algorithm uses an interesting technique called \emph{simplex transformations}. As a result, unlike all algorithms discussed above, their algorithm uses cuts that are not parallel to one of the faces of the simplex, i.e., cuts that depend on more than one coordinate. Buchbinder, Schwartz, and Weizman \cite{BSW19} used a global linear transformation to obtain an approximation ratio of $\frac{11}{8}=1.375$ using just the Single Threshold (ST) rounding technique.

Manokaran, Naor, Raghavendra, and Schwartz \cite{MNRS08} showed that, under the Unique Games Conjecture (UGC) of Khot \cite{khot02}, the integrality ratio of the LP relaxation of C{\u{a}}linescu et al.~\cite{CKR00} is the best approximation ratio that can be obtained for the \MC\ problem in polynomial time. Thus, efforts for obtaining improved approximation algorithms for the problem should concentrate on finding better ways of rounding solutions of this LP relaxation. In particular, no improved results can be obtained by trying to use semidefinite programming (SDP) relaxations.

Freund and Karloff~\cite{FK00} showed that the integrality ratio of the CKR relaxation is at least $8/(7+\frac{1}{k-1})$, for any $k\ge 3$. Angelidakis, Makarychev, and Manurangsi~\cite{AMM17} improved the lower bounds to $6/(5+\frac{1}{k-1})$ by introducing an interesting new framework of \emph{non-opposite cuts}. B{\'e}rczi, Chandrasekaran, Kir{\'a}ly, and Madan~\cite{BCKM20} used this framework to obtain a lower bound of $1.20016$ for general~$k$.

\subsection{Our Contributions and Techniques}

Our improved approximation algorithms for \MC\ are obtained by designing improved rounding schemes for the LP relaxation of C{\u{a}}linescu et al.~\cite{CKR00}. Previous algorithms used a mixture of up to four basic rounding schemes, each taken from a different family of rounding schemes. As mentioned the families of rounding schemes used were: Single Threshold (ST), Independent Thresholds (IT), Exponential clocks (EC) or Kleinberg-Tardos (KT), and Descending Thresholds (DT). We improve on all previous results using the following steps:
\begin{enumerate}
    \item We introduce a generalized version of the Kleinberg-Tardos (KT) rounding scheme.
    \item We use many basic rounding schemes from each family of rounding schemes, each parametrized by a random variable with a different distribution.
    \item We devise an algorithm for finding good mixtures of a very large number of basic rounding schemes by approximating the solution of an infinite $0$-sum game played by a cut player and an edge player.
    \item We use rigorous computational techniques, in particular \emph{Interval Arithmetic}, to obtain rigorous computer-assisted proofs of the approximation ratios obtained.
\end{enumerate}

We elaborate on each of these steps in the following sections.

\subsubsection{Generalized Kleinberg-Tardos}

A (multiway) \emph{cut} of the $k$-simplex $\Delta_k=\{(u_1,u_2,\ldots,u_k) \mid u_1,u_2,\ldots,u_k\ge 0\,,\,\sum_{i=1}^k u_i=1\}$ is a function $c: \Delta_k \to [k]$ that maps each point of the simplex to one of the~$k$ terminals, such that $c(\be_i)=i$, for $i\in[k]$, i.e., the $i$-th vertex of the simplex is mapped to terminal~$i$. A \emph{rounding scheme} is a probability distribution over cuts.

Kleinberg and Tardos~\cite{KlTa02} introduced the following rounding scheme, which we call KT, and used it to obtain a $2$-approximation algorithm for the Uniform Metric Labeling problem. Initially all the simplex points are unassigned. The rounding scheme proceeds in rounds. In each round, a uniformly random terminal $i\in[k]$ and a uniformly random \emph{threshold} $t\in[0,1]$ are chosen. Each unassigned simplex point $\bu=(u_1,u_2,\ldots,u_k)$ with $u_i\ge t$ is assigned to terminal~$i$, i.e., $c(\bu)=i$. This proceeds until all simplex points are assigned. This rounding scheme, unlike all other schemes we consider, samples the terminals \emph{with repetitions}. It is not difficult to check that the expected number of rounds needed to `color' the whole simplex is polynomial in~$k$.

Buchbinder, Naor, and Schwartz \cite{BNS18} showed that the rounding technique of Kleinberg and Tardos~\cite{KlTa02} can be used to replace the Exponential Clocks (EC) rounding technique which they and \SV\ use in their approximation algorithms, as both rounding techniques, although very different from each other, have identical \emph{cut density} functions, as we define below. (For a more formal definition, see Section~\ref{sec:rounding}.) 

A Generalized Kleinberg-Tardos (GKT) rounding scheme is a rounding scheme in which the random thresholds are chosen \emph{nonuniformly}. More precisely, we let KT$(f)$, where $f:[0,1]\to\RR^+$ is a density function, be the Generalized Kleinberg-Tardos rounding scheme in which the random thresholds are chosen independently according to a random variable with density function~$f$. To ensure termination we require that $F(\frac1k)>0$, where $F(x)=\int_0^x f(t)\,dt$ is the cumulative probability function corresponding to~$f$. It is interesting to note that while nonuniform thresholds were used in conjunction with many of the other random techniques, this was not tried with the Kleinberg-Tardos rounding scheme. Our results show that choosing the thresholds nonuniformly greatly enhances the performance of such rounding schemes.

If $R$ is a rounding scheme and $\bu,\bv\in\Delta_k$ are two simplex points, we let $P^{R}_k(\bu,\bv)=\Pr[c(\bu)\ne c(\bv)]$ be the probability that the edge $(\bu,\bv)$ is cut by the rounding scheme $R$. If~$R$ is \emph{symmetric}, i.e., it treats all terminals equally\footnote{More specifically, $P^{R}_k(\bu,\bv)$ is invariant under permutation of terminals. All rounding schemes considered in this paper are symmetric. In fact, one can show that any rounding scheme can be replaced with a symmetric one without increasing the worst-case cut density.}, we can restrict our attention to \emph{$(1,2)$-aligned edges}, i.e., edges of the form $\bu=(u_1,u_2,\ldots,u_k)$ and $\bv=\bu-\eps(\be_1-\be_2)=(u_1-\eps,u_2+\eps,u_3,\ldots,u_k)$. We define the \emph{density} of the rounding scheme~$R$ at~$\bu$ as follows: $d^R_k(\bu)=\lim_{\eps\to 0} \frac{P^{R}_k(\bu,\bu-\eps(\be_1-\be_2))}{\eps}$. It is known that $\sup_{\bu\in\Delta_k}d^R_k(\bu)$ is the approximation ratio achieved by using the rounding scheme~$R$.

One of the features that make the Kleinberg-Tardos (KT) and the Exponential Clocks (EC) rounding schemes attractive is that they have very simple density functions $d^{\KT}_k(\bu)=d^{\EC}_k(\bu)=2-u_1-u_2$. Used on their own, each one of these rounding schemes only gives an approximation ratio of~$2$ for the \MC\ problem. However, they are useful in combination with other rounding techniques.

As a first step towards using Generalized Kleinberg-Tardos schemes KT$(f)$ to obtain improved approximation algorithms, we obtain the following relatively simple formula for their densities:

\begin{proposition}[See Corollary~\ref{C-KT}]\label{prop:KT-intro}
Let $f:[0,1]\to\RR^+$ be a probability density function and let $F : [0,1] \to [0,1]$ be its corresponding cumulative density function. Then,
\[d^{\KT(f)}_{k}(u_1,u_2,\ldots,u_k) \EQ
    \frac{ f(u_1) }{ \sum_{i=1}^k F(u_i) }\left(1 - \frac{F(u_1)}{\sum_{i=1}^k F(u_i)} \right) \,\,+\,\, \frac{ f(u_2) }{ \sum_{i=1}^k F(u_i) }\left(1 - \frac{F(u_2)}{\sum_{i=1}^k F(u_i)} \right)\;.\]
\end{proposition}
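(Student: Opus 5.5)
Let $S=\sum_{i=1}^k F(u_i)$, and write $\bv=\bu-\eps(\be_1-\be_2)$. The plan is to compute $\Pr[c(\bu)\ne c(\bv)]$ exactly up to $o(\eps)$ by analysing the single round in which the edge is first touched, and then divide by $\eps$.

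In any round, the pair $(i,t)$ has density $f(t)/k$ on $[k]\times[0,1]$. While both endpoints are unassigned, the round assigns at least one of them exactly when $t\le \max(u_i,v_i)$. Only coordinates $1$ and $2$ differ between $\bu$ and $\bv$, so this happens with probability
\[
q=\tfrac1k\Bigl(\sum_{i\ge 3}F(u_i)+F(u_1)+F(u_2+\eps)\Bigr)=\tfrac1k\bigl(S+O(\eps)\bigr).
\]
The edge is cut precisely when, in this first touching round, exactly one endpoint is assigned. Once one endpoint is assigned and the other is not, the other endpoint is later assigned to some terminal; that terminal could in principle equal the first one, so this needs care (see below). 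The event that exactly one endpoint is assigned is $\{i=1,\ u_1-\eps<t\le u_1\}$ or $\{i=2,\ u_2<t\le u_2+\eps\}$. Its per-round probability is $\tfrac1k\bigl(f(u_1)+f(u_2)\bigr)\eps+o(\eps)$. Conditioning on the first touching round (a geometric number of rounds), the probability that this round separates the endpoints is $\eps\,(f(u_1)+f(u_2))/S+o(\eps)$.

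This first-order estimate is too large, and the correction terms $-f(u_j)F(u_j)/S^2$ come from the re-merging issue. Suppose $u$ is assigned to terminal $1$ while $v$ stays unassigned, because $v_1=u_1-\eps<t\le u_1$. Then $v$ may later be assigned to terminal $1$ as well, namely in a round with $i=1$ and $t\le u_1-\eps$. From that point on, $v$ behaves like a fresh point with the same coordinates, and the rounds are memoryless. So the probability that $v$ eventually lands on terminal $1$ is the probability that a round touching $v$ picks terminal $1$, which is $F(v_1)/\sum_i F(v_i)=F(u_1)/S+O(\eps)$. The edge is therefore cut with conditional probability $1-F(u_1)/S$, and symmetrically $1-F(u_2)/S$ for the terminal-$2$ case. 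Combining the two cases and taking $\eps\to0$ gives exactly the formula in Proposition~\ref{prop:KT-intro}.

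For rigor I would argue at each point of continuity of $f$ and assume $S>0$, which holds since some $u_i\ge 1/k$ and $F(1/k)>0$. The $o(\eps)$ terms should be controlled by noting that the number of rounds has geometric tails with parameter bounded below. The main subtlety is the re-merge step: the later assignment of $v$ must be shown to be distributed exactly as for a fresh point, independently of the past. This holds because rounds are i.i.d. and the only information we have conditioned on is that earlier rounds did not touch $v$.
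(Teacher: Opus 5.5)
Your proposal is correct and follows essentially the same route as the paper. In both, the first round that touches the edge assigns only $\bu$ to terminal~$1$ with probability $F(u_1-\eps,u_1)/\bigl(F(u_1)+F(u_2+\eps)+\sum_{i\ge3}F(u_i)\bigr)$. This is multiplied by the probability $1-F(u_1-\eps)/\bigl(F(u_1-\eps)+F(u_2+\eps)+\sum_{i\ge3}F(u_i)\bigr)$ that the lagging endpoint is not later re-merged into terminal~$1$, and the symmetric terminal-$2$ term is added. The only cosmetic difference is that the paper writes the exact finite-$\eps$ expression for $P^{\KT(f),1}_{k,\eps}$ before taking the limit, whereas you expand to first order directly.
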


Note that if $f(x)=1$ and $F(x)=x$, the density simplifies to $2-u_1-u_2$. 

By varying the distribution $f$ over $[0,1]$, we get access to a large family of cut density functions that were not available to prior approaches to \MC. However, to take advantage of this newfound freedom, we also need a novel approach to discovering \emph{mixtures} of rounding schemes with a low approximation ratio, as discussed below.

A possible alternative approach is to consider generalized versions of the Exponential Clocks (EC) rounding technique in which a distribution other than the exponential distribution is used for the `clocks'. We discuss this option briefly in Appendix~\ref{app:GEC}. The main difficulty with this approach is that the corresponding density functions seem hard to work with.

\subsubsection{Using Large Mixtures of Rounding Schemes}

If $R_1,R_2,\ldots,R_n$ are rounding schemes, and $(p_1,p_2,\ldots,p_n)\in\Delta_n$, the \emph{mixture} $\cR=\sum_{i=1}^n p_iR_i$ is the rounding scheme that applies $R_i$ with probability $p_i$, for $i\in[n]$. We refer to the $R_i$'s as the \emph{basic} rounding schemes that constitute~$\cR$. By the linearity of expectations, we have $d^{\cR}_k(\bu) = \sum_{i=1}^n p_i\, d^{R_i}_k(\bu)$. Mixtures allow an algorithm designer to balance out several rounding schemes whose densities are maximized at different points of the $k$-simplex.

As mentioned above, most approximation algorithms for \MC\ use a mixture of several basic rounding schemes, each taken from a different family of rounding schemes. \SV, for example, use a mixture of four basic rounding schemes, ST$(f)$, EC/KT, IT$(g)$, and DT$(g)$, where~$f$ is a rather complicated, computationally discovered, density function and~$g$ is the density function of a uniformly random variable on $[0,\frac{6}{11}]$. (See the legend given in Table~\ref{T-approx} and Section~\ref{sec:rounding}.)

We note that with the exception of Single Threshold (ST), the other families of rounding schemes, including the newly introduced family KT$(f)$ of generalized Kleinberg-Tardos rounding schemes, are \emph{nonlinear}, in the sense that $\alpha_1 \KT(f_1) + \alpha_2 \KT(f_2) \ne \KT(\alpha_1 f_1 + \alpha_2 f_2)$. In fact, $\alpha_1 \KT(f_1) + \alpha_2 \KT(f_2)$ does not seem to be equivalent to KT$(g)$ for any density function~$g$. The same observation holds also for Independent Thresholds (IT) and Descending Thresholds (DT).

As a consequence, it may be beneficial to use a mixture that contains many rounding schemes from each family of rounding schemes, each with its distinct density function. Thus, for example, we may use a mixture that contains KT$(f_1)$, KT$(f_2)$,\ldots, KT$(f_n)$, for an arbitrarily large value of~$n$, and similarly IT$(g_1)$, IT$(g_2)$,\ldots, IT$(g_m)$, etc. Our results show that using such large mixtures is indeed beneficial and may lead to substantially improved approximation ratios.

This observation substantially broadens the range of possibilities. Exploring this huge space manually seems practically impossible. We thus need an automated technique for discovering good mixtures of rounding schemes, mixtures that could possibly contain hundreds of basic rounding schemes.

\subsubsection{Computational Discovery of Rounding Schemes}

Computational techniques were used in several of the previous papers on the \MC\ problem. Both Cheung et al.~\cite{CT99,CCT06} and Karger et al.~\cite{KKSTY04} solved large linear programs, obtained by discretizing the $3$-simplex, to discover similar, but not identical, optimal algorithms for $k=3$ and to obtain matching lower bounds on the integrality ratio of the relaxation. Once the optimal algorithms were discovered, they were able to analyze them without the use of a computer.

Karger et al.~\cite{KKSTY04} also used two different computational approaches to obtain their algorithms for $k\ge 4$. For $k=4,5$ they again solved large linear programs, somewhat different from the ones used for $k=3$, to obtain their best algorithms for these values. For $k\ge 6$ they obtained their best algorithms using a mixture of ST$(f)$ and IT$(g)$, where~$f$ is the density function of a uniform random variable on $[b,1]$, and~$g$ is the density of a uniform random variable on $[0,b]$, where $b\in [0,1]$ is a parameter that they tuned computationally. (For each value of~$k$ they use a different value of~$b$.) 

\SV\ used a mixture of ST$(f)$, EC/KT, IT$(g)$ and DT$(g)$, where $f$ is a complicated density function discovered computationally and~$g$ is the density function of a uniformly random variable on $[0,\frac{6}{11}]$. Because of the linearity of the density function of ST schemes, they were able to solve a huge linear program that optimizes a very fine discrete version of~$f$, and the probabilities with which each one of their four basic rounding schemes should be used.

We use much more extensive computational techniques to discover rounding schemes that are mixtures of hundreds of basic rounding schemes, each with its own density function. This involves solving many nonlinear, and non-convex, optimization problems. We do that using standard numerical optimization techniques. (More details are given in Section~\ref{sec:minimax}.) These techniques are, of course, not guaranteed to find globally optimal solutions. However, once we get a suggested rounding scheme, we can rigorously obtain an upper bound on its approximation ratio.

We next give a high-level description of our computational techniques. Let $R_1,R_2,\ldots,\allowbreak R_n$ be a finite collection of rounding schemes and let $\bu_1,\bu_2,\ldots,\bu_m$ be a finite collection of simplex points. We can use a linear program to find a mixture $\cR=\sum_{i=1}^n p_iR_i$ of the rounding schemes that minimizes $\max_{j\in [m]} d^\cR_k(\bu_j)$. The dual linear program finds a distribution $(q_1,q_2,\ldots,q_m)$ over the available simplex points that maximizes $\min_{i\in [n]} \sum_{j=1}^m q_j\, d^{R_i}_k(\bu_j)$. This can thus be viewed as a \emph{$0$-sum game}, with payoff matrix $(d^{R_i}_k(\bu_j))$, played between the algorithm designer, who chooses rounding schemes, and the adversary, who chooses simplex points. Viewed from this perspective, it is not surprising that in most cases, the optimal strategies of the two players are \emph{mixed}, i.e., a probability distribution over either the rounding schemes or the simplex points. Such linear programs were used by many of the previous papers, including \cite{CT99,CCT06,KKSTY04,SV14}.

We go one step further. Suppose that $(p_1,p_2,\ldots,p_n)$ and $(q_1,q_2,\ldots,q_m)$ are the optimal mixed strategies of the players in the above game. We try to find a new rounding scheme $R'$ for which $\sum_{j=1}^m q_j\, d^{R'}_k(\bu_j)$ is smaller than the value of the game and add $R'$ to the collection of basic rounding schemes available to the algorithm designer. Similarly, we also try to find a new simplex point $\bu'$ for which $d^{\cR}_k(\bu')=\sum_{i=1}^n p_i\,d^{R_i}_k(\bu')$ is larger than the value of the game and add it to the collection of simplex points. After each such step we resolve the $0$-sum game and obtain the newly optimal distributions. We let this process run for as long as we can, or until no significantly improved results are found. For more details, see Section~\ref{sec:minimax}.

We note that a similar computational technique was used by Brakensiek, Huang, Potechin, and Zwick \cite{BHPZ23} to discover improved approximation algorithms for MAX DI-CUT and other MAX CSP problems.

\subsubsection{Verification Using Interval Arithmetic}\label{sub-verify}

An interval arithmetic system provides functions for computing the basic arithmetic operations, and other basic functions such as $\exp(x)$, on \emph{intervals} rather than on numbers. The functions return an interval that is guaranteed to contain the correct interval. For example, if $f_{int}$ is an interval arithmetic implementation of a function~$f$, and $I=[a,b]$ is an interval, then $f_{int}(I)\supseteq f(I)=\{ f(x) \mid x\in I\}$. Note that $f_{int}$ is not required to return $f(I)$ exactly, as it may be impossible to represent the true endpoints of the interval using the machine precision, or it may be too time consuming to compute them exactly. To be useful, however, $f_{int}(I)$ should not be much larger than~$f(I)$. By combining the interval implementations of the basic functions, interval implementation of much more complicated functions can be obtained. It is important to note that the interval implementation of the basic functions should take into account all numerical errors, including floating point rounding errors, that may occur during the computation of the final interval. We make use of a modern interval arithmetic library called Arb \cite{johansson2017arb,johansson2019computing}.

To illustrate the usefulness of interval arithmetic, consider the following example. Suppose that $f(x_1,x_2,\ldots,x_k)$ is a continuous, and possibly differentiable, function defined on $[0,1]^k$. Suppose that $m = \min_{x_1,\ldots,x_k\in[0,1]} f(x_1,x_2,\ldots,x_k)$ and that $m>0$. Our goal is to prove that $m>0$. A traditional approach is to compute the partial derivatives of~$f$, assuming that $f$ is differentiable, look for critical points and show that $f$ is positive at all critical points. This is actually not enough as we also need to look for critical points on the boundary of the region. In many cases this approach is not feasible as the partial derivatives may be very complicated and there may be an enormous number of critical points. If we can show that $f(x_1,x_2,\ldots,x_k)$ is $L$-Lipschitz, i.e., $|f(\bx)-f(\by)|\le L\,\|\bx-\by\|_1$, for every $\bx,\by\in[0,1]^k$, then in order to obtain the worst-case cut density within any given additive error $\epsilon > 0$, it is enough to evaluate $f$ on all points of the form $(\frac{i_1}{N},\frac{i_2}{N},\ldots,\frac{i_k}{N})$, where $i_1,i_2,\ldots,i_k\in \{0,1,\ldots,N\}$, for a sufficiently large integer~$N = N(\epsilon, L)$. However, this requires a rigorous bound on the Lipschitz constant~$L$, and rigorous bounds on the numerical errors that could occur during all these function evaluations, again not a pleasant task. Using interval arithmetic, on the other hand, we can adaptively partition $[0,1]^k$ into a collection of boxes $I_1\times I_2\times\cdots\times I_k$ and compute $f_{int}(I_1, I_2\,\dots,I_k)$ for each one of these boxes. If all returned intervals are positive, we get a rigorous proof that $m>0$. In particular, this approach neither requires that $f$ is differentiable nor that we have an explicit bound on the Lipschitz constant~$L$. Furthermore, any numerical errors are automatically accounted for by the interval arithmetic system.

Interval arithmetic was used before to obtain rigorous proofs of approximation ratios. (See, e.g.,~\cite{zwick02, Sjogren09, AuBeGe16, BhangaleKKST18, BhKh20, BHPZ23, BHZ24}.) We note that Sharma and Vondr{\'a}k \cite{SV14} used computational techniques to obtain a bound of $1.2965$ on the approximation ratio of their algorithm, but did not explicitly bound the numerical errors that may have occurred during the computation of this constant. (It is extremely unlikely, however, that these unavoidable numerical errors affect the fourth decimal digit of their bound.) The same holds for some of the bounds obtained by Karger et al.~\cite{KKSTY04} for small values of~$k$.

Since our rounding schemes use mixtures of hundreds of basic rounding schemes, analyzing their performance by hand is infeasible. We therefore resort to interval arithmetic to rigorously bound their performance ratios. For small values of~$k$ we could in principle try to use the general approach outlined above, with the $k$-dimensional box $[0,1]^k$ replaced by the $k$-simplex $\Delta_k$. However, even for relatively small values of~$k$ this becomes infeasible and we need to use additional ideas. The basic approach fails completely when $k$ is arbitrary, as then we are essentially trying to minimize a function defined on $\cup_{k\ge 3}\Delta_k$.

To overcome these obstacles, our verification works with \emph{prefixes} of simplex points instead of simplex points themselves. A prefix of a simplex point is a tuple $\bu = (u_1, \ldots, u_\ell)$ where $0 \leq u_1, \ldots, u_\ell \leq 1$ and $\sum_{i = 1}^\ell u_i \leq 1$. Such a prefix can be extended to a point in the $k$-simplex, for any $k>\ell$, by appending additional coordinates $u_{\ell+1},\ldots,u_k$, making sure that $\sum_{i=1}^k u_i=1$. (If $\sum_{i = 1}^\ell u_i = 1$, then $\bu$ is already a simplex point.) The density of a prefix~$\bu$ is defined to be the maximum density of any simplex point that can be obtained by extending~$\bu$. The main purpose of working with prefixes is that we may use one evaluation, namely, evaluation of the density of the prefix, to cover all simplex points having that prefix. Bounding densities of prefixes is a nontrivial task. Karger et al.~\cite{KKSTY04} and Sharma and Vondr{\'a}k \cite{SV14} used a similar approach in their analytical and computational analyses, but only with prefixes of length $\ell=2$. To get our improved approximation ratios we need to consider longer prefixes.

The main technical difficulty now is to obtain a good upper bound on the density of prefixes of any length. It turns out that by imposing very mild conditions on our rounding schemes, we can derive a \emph{tight} upper bound on the density of not all, but a sizable chunk of simplex points having the prefix. More specifically, if we assume that in our mixture $\mathcal{R}$, there exists some $\alpha > 0$ such that for any $\IT(f) \in \mathcal{R}$, $f$ is increasing on $[0, \alpha]$, and for any $\KT(g) \in \mathcal{R}$, $g$ is constant on $[0, \alpha]$, then we may define a function $\faked^R$ for every $R \in \mathcal{R}$ such that for any prefix $\bu = (u_1, \ldots, u_\ell)$ and $k \geq \ell$, 

\begin{equation}\label{eq:smeared_upper_bound}
\max_{\substack{0\leq u_{\ell + 1}, \ldots, u_{k} \leq \alpha \\ \sum_{i = 1}^ku_i = 1}}\,\,d^{R}_k(u_1,u_2,\ldots,u_\ell, u_{\ell + 1}, \ldots, u_k) \LE \faked^R(\bu).
\end{equation}
Furthermore, equality is achieved most of the time (See Lemmas~\ref{lem:upper_bound_general_k} and~\ref{lem:upper_bound_small_k} for more details). The evaluation for the expression on the right hand side is straightforward for ST, GKT, and DT schemes. For IT schemes it is more complicated, for we need to derive an expression involving Kummer's confluent hypergeometric function (see Lemma~\ref{L-IT-asymptotic}). 

This leads naturally to our verification algorithm, for which we now give a high-level description. Let us say we are attempting to verify that the density achieved by $\mathcal{R}$ (which satisfies the above conditions regarding IT and GKT schemes) is always at most $d_0$ for some $d_0 > 0$.
Using interval arithmetic, we work with sets of prefixes $\bU = U_1 \times \cdots \times U_\ell$, where $U_1, \ldots, U_\ell$ are intervals within $[0, 1]$. We implement in interval arithmetic the evaluation of $\E_{R \in \mathcal{R}}[\faked^R(\bU)]$, which gives us an interval containing evaluations of $\E_{R \in \mathcal{R}}[\faked^R(\bu)]$ for prefixes $\bu \in \bU$. If we find some $\bu \in \bU$ for this evaluation is greater than $d_0$, then we get a prefix which could potentially be extended to a point on which the density of $\mathcal{R}$ is greater than $d_0$ (in fact, it can in most cases). In this case we halt the verification without success. Otherwise, by \eqref{eq:smeared_upper_bound} we know that for every simplex point with prefix in~$\bU$ whose remaining coordinates are all at most $\alpha$, its cut density is at most $d_0$. So we consider this level ``passed'', and add a new interval $U_{\ell + 1} = [\alpha, 1]$ and recursively apply this algorithm. Since the coordinates of any simplex point sum to 1, we can add at most $1/\alpha$ new coordinates, which guarantees that the verification will terminate in a finite amount of time. This strategy may also be adapted for verification of a finite fixed $k$. See Section~\ref{sec:verification} for more details.

\subsection{Related Work}

The \MC\ problem is a special case of the \emph{0-Extension} problem, introduced by Karzanov \cite{Karzanov98}, and the \emph{Metric Labeling} problem introduced by Kleibert and Tardos~\cite{KlTa02}. In the 0-Extension problem, in addition to the weighted undirected graph $G=(V,E,w)$, where $w:E\to\RR^+$, the input also includes a \emph{metric}~$d$ defined on the set of terminals~$T$. To goal is to find an assignment $\ell:V\to T$ that minimizes the separation cost $\sum_{\{u,v\}\in E} w(u,v)\, d(\ell(u),\ell(v))$. If $d$ is the uniform metric, i.e., $d(t,t')=1$, if $t\ne t'$, and $d(t,t)=0$, this is exactly the \MC\ problem. 

In the Metric Labeling problem, the input also includes an \emph{assignment cost} function  $c:V\times T\to \RR^+$ and the goal is to find an assignment $\ell:V\to T$ that minimizes $\sum_{u\in V} c(u,\ell(u))+\sum_{\{u,v\}\in E} w(u,v)\, d(\ell(u),\ell(v))$, the sum of the assignment and separation costs. A special case of the Metric Labeling problem that still captures the \MC\ problem is the \emph{Uniform Metric Labeling} problem in which the metric~$d$ is the uniform metric.

The best approximation ratio known for the $0$-Extension problem is $O(\log k/\log\log k)$ \cite{CKR05,FHRT03}, where $k=|T|$ is the number of terminals. The best approximation ratio known for the Metric Labeling problem is $O(\log k)$ \cite{KlTa02}. The best approximation ratio known for the Uniform Metric Labeling problem is~$2$ \cite{KlTa02}.

Manokaran et al.~\cite{MNRS08} showed that, under the Unique Games Conjecture (UGC), the best approximation ratios that can be obtained for the $0$-Extension and the Metric Labeling problems in polynomial time are equal to the intergality ratios of the \emph{earth mover} LP relaxations of these problems introduced by Chekuri, Khanna, Naor, and Zosin~\cite{CKNZ04}.

A complementary problem to \MC\ is \emph{Multiway Uncut}, where one seeks to maximize the number (or total weight) of edges which are \emph{not} cut by an assignment to the terminals. The exact optimization problem is equivalent to \MC, but the quantitative guarantees of approximation algorithms are rather different. Langberg, Rabani, and Swamy~\cite{LRS06} studied Multiway Uncut and obtained an approximation ratio of $0.8535$, with a nearly matching LP integrality gap of $\frac{6}{7}\approx 0.8571$.

Ene, Vondr{\'a}k, and Wu~\cite{EVW13} studied a number of generalizations of \MC. The authors divided these generalizations into two categories (1) submodular optimization problems generalization \MC\ (Submodular Multiway Parition) and (2) Min-CSP problems generalizing \MC\ (e.g., \emph{Hypergraph} \MC). In both cases, they showed that in general the optimal approximation ratio is $2-\frac{2}{k}$ assuming either $RP \neq NP$ or the Unique Games Conjecture. They also obtained better approximation ratios in some special cases.

Very recently, B{\'e}rczi, Kir{\'a}ly, and Szabo~\cite{BKS24} posed a varient of \MC\ where the $k$ terminals are not fixed, but rather the $i$th terminal can be chosen from a subset $S_i$ of the vertices provided as part of the input. They study the existence (or non-existence) of approximation algorithms for many versions of this question. Of note, a number of their approximation algorithms use an optimal approximation algorithm for \MC\ as a black-box. In particular, Theorem~\ref{thm:main} automatically improves the quantitative guarantees of many of their algorithms.

Effort has also been put into solving \MC\ exactly for special families of graphs. As previously mentioned, Dahlhaus et al.~\cite{DJPSY92,DJPSY94} showed that \MC\ can be efficiently solved for the class of planar graphs. More recently, Hirai~\cite{H18} developed a theory of submodular optimization on a rather general set of lattices, showing as a corollary that problems like \MC\ and $0$-Extension can be solved exactly on ``orientable modular'' graphs.

A notable open question related to \MC\ is that of \emph{kernelization}, where one seeks to efficiently replace the input (unweighted) graph with a small graph such that (1) all $k$ terminals are preserved in the smaller graph and (2) the optimal \MC\ in both graphs is identical. A natural target for the size of the kernel is the \emph{sums of degrees} $D$ of the terminals as one can always find a multiway cut using at most $D$ edges.\footnote{Many works in this field use $s$ instead of (our) $k$ and $k$ instead of (our) $D$.} Kratsch and Wahlstr{\"o}m~\cite{KW20} can efficiently construct via a randomized algorithm a kernel of size $O(D^{k+1})$, which is polynomial for fixed $k$. For general $k$, recent work by Wahlstr{\"o}m~\cite{W22} proved that one can efficiently construct such a kernel of \emph{quasipolynomial} size $2^{O(\log^4 D)}$ also via a randomized algorithm. See also the follow-up work of Wahlstr{\"o}m~\cite{W24} which constructs kernels for variants of \MC.

\subsection{Organization}

The rest of the paper is organized as follows. In the next section we define the basic families of rounding schemes used in the paper, including the new family of Generalized Kleinberg-Tardos (GKT) rounding schemes, and analyze each one of them. The analysis of GKT is novel to the best of our knowledge. We also extend the analysis previously given for some of the other rounding schemes. In Section~\ref{sec:further} we obtain bounds on the density functions of the GKT and IT rounding schemes when only a prefix of a simplex point is given. (See the discussion in Section~\ref{sub-verify}.) In Section~\ref{sec:minimax} we discuss the process we used for the discovery of large mixtures of rounding schemes. In Section~\ref{sec:new-schemes} we describe the main features of the newly discovered rounding schemes. In Section~\ref{sec:verification} we consider the rigorous verification of the approximation ratios claimed. In Section~\ref{sec:discussion}, we discuss how our new results relate to the quest of finding truly optimal rounding schemes for \MC. We end in Section~\ref{sec:concl} with some concluding remarks and open problems. Further material appears in appendices.

\section{Rounding Schemes and Their Analysis}\label{sec:rounding}

In this section we describe the different rounding schemes used by our algorithms. We start with some general definitions and observations that apply to all schemes and then consider each scheme separately.

All the rounding schemes $R$ considered in the paper, except for Exponential Clocks (EC) which is discussed briefly, are of the following form. The scheme chooses a random sequence $\sigma(1),\sigma(2),\ldots$ of terminals and a random sequence $t_1,t_2,\ldots$ of \emph{thresholds}. In the $i$-th round, the scheme assigns all yet unassigned simplex points $(u_1,u_2,\ldots,u_k)$ with $u_{\sigma(i)}\ge t_i$ to terminal~$\sigma(i)$. The schemes differ in the way the random sequences of terminals and thresholds are chosen. In most schemes $\sigma(1),\sigma(2),\ldots,\sigma(k)$ is actually a permutation and $t_k=0$. That means that all simplex points that are yet unassigned at the beginning of the $k$-th round are assigned to the `default' terminal $\sigma(k)$.

\begin{definition}[Cut probabilities]
    For a rounding scheme $R$, 
    \begin{itemize}
      \renewcommand\labelitemi{}  
      \item Let $P^R_{k,\eps}(u_1,u_2,\ldots,u_k)$ be the probability that the $(1,2)$-aligned edge of length~$\eps$ connecting $(u_1,u_2,\ldots,u_k)$ and $(u_1-\eps,u_2+\eps,\ldots,u_k)$ is cut by $R$, i.e., the two endpoints of the edge are assigned by~$R$ to different terminals.
      \item Let $P^{R,1}_{k,\eps}(u_1,u_2,\ldots,u_k)$ be the probability that $(u_1,u_2,\ldots,u_k)$ is assigned by~$R$ to terminal~$1$ and that $(u_1-\eps,u_2+\eps,\ldots,u_k)$ is assigned, at a later stage, to a different terminal.
      \item Let $P^{R,2}_{k,\eps}(u_1,u_2,\ldots,u_k)$ be the probability that $(u_1-\eps,u_2+\eps,\ldots,u_k)$ is assigned by~$R$ to terminal~$2$ and that $(u_1,u_2,\ldots,u_k)$ is assigned, at a later stage, to a different terminal.
    \end{itemize}    
\end{definition}

For all rounding schemes~$R$ considered in this paper we have:
\[ P^R_{k,\eps}(u_1,u_2,\ldots,u_k) \EQ P^{R,1}_{k,\eps}(u_1,u_2,\ldots,u_k) + P^{R,2}_{k,\eps}(u_1,u_2,\ldots,u_k) \;. \]
This is because if $(u_1,u_2,\ldots,u_k)$ and $(u_1-\eps,u_2+\eps,\ldots,u_k)$ are still unassigned, they can only be assigned together to terminals other than~$1$ and~$2$. In other words, only terminals $1$ or~$2$ can cut a $(1,2)$-aligned edge. The `later stage' requirement is added to ensure that the event in which $(u_1,u_2,\ldots,u_k)$ is assigned to terminal~$1$ and $(u_1-\eps,u_2+\eps,\ldots,u_k)$ is assigned to terminal~$2$ is not counted twice. This subtlety is important only for the Generalized Kleiberg-Tardos schemes (GKT).

Note also that by the symmetry of our rounding schemes we have that
\[ P^{R,2}_{k,\eps}(u_1,u_2,\ldots,u_k) \EQ P^{R,1}_{k,\eps}(u_2+\eps,u_1-\eps,\ldots,u_k)\;, \]
and that $P^R_{k,\eps}(u_1,u_2,\ldots,u_k)$ is symmetric in $u_3,\ldots,u_k$, i.e., it does not depend on the order of these arguments. 

\begin{definition}[Cut densities] 
\[ d^R_k(u_1,\ldots,u_k) \EQ \lim_{\eps\to0}\frac{P^R_{k,\eps}(u_1,u_2,\ldots,u_k)}{\eps} \quad,\quad
d^{R,i}_k(u_1,\ldots,u_k) \EQ \lim_{\eps\to0}\frac{P^{R,i}_{k,\eps}(u_1,u_2,\ldots,u_k)}{\eps} \;.\]
\end{definition}

As before, we have:
\[ d^R_{k}(u_1,u_2,\ldots,u_k) \EQ d^{R,1}_{k}(u_1,u_2,\ldots,u_k) + d^{R,2}_{k}(u_1,u_2,\ldots,u_k) \;. \]

If $d^{R,1}_{k}(u_2,u_1,\ldots,u_k)$ is continuous, which holds for most of the schemes we consider, then:
\[ d^{R,2}_{k}(u_1,u_2,\ldots,u_k) \EQ d^{R,1}_{k}(u_2,u_1,\ldots,u_k) \;.\]
Thus, to compute the cut density $d^R_{k}(u_1,u_2,\ldots,u_k)$ of a scheme~$R$ we can concentrate on computing $d^{R,1}_{k}(u_1,u_2,\ldots,u_k)$ and a corresponding formula for $d^{R,2}_{k}(u_1,u_2,\ldots,u_k)$ would follow.

All the rounding schemes we use are parametrized by a continuous random variable~$T$ assuming values in $[0,1]$. \footnote{Strictly speaking, the random variable~$T$ is not allowed to assume the value~$0$, as this may lead to a vertex $\be_i$ of the simplex being assigned to a terminal other than terminal~$i$. As we consider continuous random variables, this happens with probability~$0$.} We let $f:[0,1]\to[0,\infty)$ be the density function (pdf) of this random variable, and $F(x)=\int_{0}^x f(t)\,dt$ be the corresponding cumulative distribution function (cdf). Thus $\Pr[T\le t]=F(t)$. We also let $F(u_1,u_2)=F(u_2)-F(u_1)$, for $u_1\le u_2$, so that $\Pr[t_1<T\le t_2] = F(t_1,t_2)$.

We next consider the individual rounding schemes used, starting with the newly introduced generalized version of the Kleinberg-Tardos rounding scheme. We next consider the Independent Thresholds (IT) rounding scheme which was considered before but for which we need to introduce a more delicate analysis. Finally, we consider the Single Threshold (ST) and Descending Thresholds (DT) rounding schemes.

\subsection{Generalized Kleinberg-Tardos (GKT)}\label{sub:GKT}

Kleinberg and Tardos \cite{KlTa02} introduced their rounding scheme to obtain a 2-approximation algorithm for the uniform metric labeling problem. It uses the uniform random variable on $[0,1]$. Buchbinder, Naor, and Schwartz \cite{BNS18} showed that it is equivalent to their Exponential Clocks (EC) rounding scheme. Interestingly, none of the previous papers on the Multiway Cut problem considered using the rounding scheme of Kleinberg and Tardos with a non-uniform random variable, even though non-uniform random variables were used in all other rounding schemes. We show that using the rounding scheme of Kleinberg and Tardos with a non-uniform random variable can lead to significantly improved approximation ratios for the Multiway Cut problem.

A Generalized Kleinberg-Tardos scheme KT$(f)$, where $f$ is a density function of a continuous random variable with values in $[0,1]$, works as follows. It repeatedly chooses a terminal $i\in[k]$ uniformly at random, and a random threshold $t$ chosen according to the density function~$f$, independent of all previous choices. If a simplex point $\bu=(u_1,\ldots,u_k)$ is yet unassigned and $u_{i}\ge t$ then~$\bu$ is assigned to terminal~$i$. This goes on until all simplex points, or until all simplex points that appear in the embedding of the input graph into the simplex, are assigned to terminals. To ensure the termination of KT$(f)$ we require that $F(\frac{1}{k})>0$ (since for any simplex point $\bu=(u_1,\ldots,u_k)$, there exists some $i \in [k]$ such that $u_i \geq 1/k$).

KT$(f)$ differs from all other rounding schemes considered in this paper in that it does not choose a random permutation of the terminals. Instead, it samples terminals with repetitions. 

We next provide an analysis of KT$(f)$ generalizing the analysis of \cite{KlTa02} and \cite{BNS18}.

\begin{lemma}
    \[P^{\KT(f),1}_{k,\eps}(u_1,u_2,\ldots,u_k) \EQ\]
    \[\frac{ F(u_1-\eps,u_1) }{ F(u_1) + F(u_2+\eps) + \sum_{i=3}^k F(u_i) }\left(1 - \frac{F(u_1-\eps)}{F(u_1-\eps)+F(u_2+\eps)+\sum_{i=3}^k F(u_i)} \right)\;.\]
\end{lemma}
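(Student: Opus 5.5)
The plan is to exploit that $\KT(f)$ consists of i.i.d.\ rounds, each choosing a uniform terminal $i\in[k]$ and an independent threshold $t\sim f$. I will track only the two endpoints $\bu=(u_1,u_2,\ldots,u_k)$ and $\bv=(u_1-\eps,u_2+\eps,u_3,\ldots,u_k)$, and decompose the event defining $P^{\KT(f),1}_{k,\eps}$ at the first round that assigns at least one of them. The termination condition $F(\frac1k)>0$ ensures that every point is assigned almost surely: some coordinate is at least $\frac1k$, so each round has probability at least $F(\frac1k)/k>0$ of assigning it. Hence all the geometric-series/first-success arguments below are legitimate.

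\textbf{Step 1 (the first round that touches $\bu$ or $\bv$).} A round with terminal $i$ and threshold $t$ assigns at least one of $\bu,\bv$ iff $t\le\max(u_i,v_i)$. These maxima are $u_1$ for $i=1$, $u_2+\eps$ for $i=2$, and $u_i$ for $i\ge3$. So each round touches the pair with probability $S/k$, where
\[S = F(u_1)+F(u_2+\eps)+\sum_{i=3}^kF(u_i).\]
Rounds that touch neither point leave the state unchanged. Therefore, conditioned on the first touching round, the probability that it is of a given type equals that type's per-round probability divided by $S/k$.

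For the event in question, this first touching round must assign $\bu$ to terminal~$1$ but leave $\bv$ unassigned. The reason is that if both were assigned simultaneously they would receive the same terminal and the edge would not be cut; any other outcome assigns $\bu$ elsewhere or assigns $\bv$ first. The required round has $i=1$ and $u_1-\eps<t\le u_1$, which has per-round probability $F(u_1-\eps,u_1)/k$. This gives the first factor
\[\frac{F(u_1-\eps,u_1)}{S}.\]

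\textbf{Step 2 (the fate of $\bv$ afterwards).} After this round only $\bv$ is unassigned, and by independence of rounds the future is a fresh run of the process applied to $\bv$ alone. Repeating the first-touching-round argument for $\bv$ alone, the normalizer becomes
\[S' = F(u_1-\eps)+F(u_2+\eps)+\sum_{i=3}^kF(u_i),\]
and $\bv$ is eventually assigned to terminal~$1$ with probability $F(u_1-\eps)/S'$. Since any later assignment to a terminal other than~$1$ counts as ``a different terminal'', the required probability is $1-F(u_1-\eps)/S'$. Multiplying the two factors gives the claimed formula.

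\textbf{Main obstacle.} The calculation itself is routine, so the step needing the most care is the bookkeeping in Step~1. One must justify precisely why the $i=2$ term in $S$ is $F(u_2+\eps)$ while the $i=1$ term is $F(u_1)$: these are the coordinatewise maxima over the two endpoints. One must also justify why a round that assigns both points simultaneously is excluded, which is exactly the role of the ``later stage'' convention in the definition. The remaining point to make explicit is the strong-Markov/independence argument that lets me restart the process for $\bv$ after Step~1.
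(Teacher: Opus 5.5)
Your proposal is correct and follows the paper's proof. The paper also tracks the state of the pair $(\bu,\bv)$ as a Markov chain and conditions on the first state-changing round to get the factor $F(u_1-\eps,u_1)/S$. It then conditions on the first round that assigns $\bv$ from the state $c(\bu)=1$, which gives $1-F(u_1-\eps)/S'$.
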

\begin{proof}
    Let $\bu=(u_1,u_2,\ldots,u_k)$ and $\bv=(u_1-\eps,u_2+\eps,\ldots,u_k)$. A schematic description of the computation of $P^{R,1}_{k,\eps}(u_1,u_2,\ldots,u_k)$ is given in Figure~\ref{fig:KT}. We let $c(\bu)$ and $c(\bv)$ be the indices of the terminals assigned to $\bu$ and $\bv$, or $\bot$ if no terminal is assigned yet. Initially $c(\bu)=c(\bv)=\bot$. In each round, KT$(f)$ chooses a random terminal~$i$, uniformly at random, and a random threshold~$t$ according to the density~$f$. One of the following four things can happen:
    \begin{enumerate}
        \item $\bu$ is assigned to terminal~$1$ while $\bv$ remains unassigned. This happens if $i=1$ and $t\in(u_1-\eps,u_1]$, hence with probability $\frac1k F(u_1-\eps,u_1)$.
        \item $\bv$ is assigned to terminal~$2$ while $\bu$ remains unassigned. This happens if $i=2$ and $t\in(u_2,u_2+\eps]$, hence with probability $\frac1k F(u_2,u_2+\eps)$.
        \item $\bu$ and $\bv$ are both assigned to the same terminal. This happens either if $i=1$ and $t\in(0,u_1-\eps]$, or if $i>1$ and $t\in (0,u_i]$, hence with probability $\frac1k F(u_1-\eps) + \frac1k \sum_{i=2}^k F(u_i)$.
        \item $\bu$ and $\bv$ both remain unassigned. This happens with the complementary probability.
    \end{enumerate}
    Thus, the probability that the first transition out of the state $c(\bu)=c(\bv)=\bot$ is to the state $c(\bu)=1$ and $c(\bv)=\bot$, shown in the diagram in orange, is
    \[ \frac{ \frac1k F(u_1-\eps,u_1) }
    { \frac1k F(u_1{-}\eps,u_1) + \frac1k F(u_2,u_2{+}\eps) + \frac1k F(u_1{-}\eps) + \frac1k \sum_{i=2}^k F(u_i)} \!\EQ\!
    \frac{ F(u_1-\eps,u_1) }{ F(u_1) + F(u_2{+}\eps) + \sum_{i=3}^k F(u_i) }\;.\]
    If the first transition out of $c(\bu)=c(\bv)=\bot$ is to a different state, then the event whose probability the expression $P^{\KT(f),1}_{k,\eps}(u_1,u_2,\ldots,u_k)$ is computing does not happen, as we require $\bu$ to be assigned to terminal~$1$ \emph{before} $\bv$ is assigned to a different terminal.

    Next, if we are in the orange state $c(\bu)=1$ and $c(\bv)=\bot$, three possible things can happen in each following round:
    \begin{enumerate}
        \item $\bv$ is also assigned to terminal~$1$. This happens if $i=1$ and $t\in(0,u_1-\eps]$, hence with probability $\frac1k F(u_1-\eps)$.
        \item $\bv$ is assigned to a terminal other~$1$. This happens if $i=2$ and $t\in(0,u_2+\eps]$ or if $i>2$ and $t\in(0,u_i]$, hence with probability $\frac1k F(u_2+\eps) + \frac1k \sum_{i=3}^k F(u_i)$.
        \item $\bv$ remains unassigned. This happens with the complementary probability.
    \end{enumerate}
    Thus, the probability that the first transition out of the orange state $c(\bu)=1, c(\bv)=\bot$ is to the red state $c(\bu)=1,c(\bv)> 1$ is
    \[ 1 - \frac{ \frac1k F(u_1-\eps) }{ \frac1k F(u_1-\eps) + \frac1k F(u_2+\eps) + \frac1k \sum_{i=3}^k F(u_i)} \EQ 1 - \frac{ F(u_1-\eps) }{ F(u_1-\eps) + F(u_2+\eps) + \sum_{i=3}^k F(u_i)}\;.\]
    Combining these two probabilities, we get the expression appearing in the statement of the lemma.
\end{proof}

\begin{figure}[t]
    \centering
    \begin{tikzpicture}
\node[ellipse,draw,thick,align=center] at (0,0) (top) {$c(\bu)=\bot$\\$c(\bv)=\bot$};
\node[ellipse,draw,thick,align=center,orange,text=black] at (-5,-3) (twoA) {$c(\bu)=1$\\$c(\bv)=\bot$};
\node[ellipse,draw,thick,align=center] at (-2,-3) (twoB) {$c(\bu)=\bot$\\$c(\bv)=2$};
\node[ellipse,draw,thick,align=center,minimum height=43] at (3.5,-3) (twoC) {$c(\bu)=c(\bv)$};
\node[ellipse,draw,thick,align=center] at (-6.5,-6) (threeA) {$c(\bu)=1$\\$c(\bv)=1$};
\node[ellipse,draw,thick,align=center,red,text=black] at (-3,-6) (threeB) {$c(\bu)=1$\\$c(\bv)>1$};

\draw[->] (top.west) to[out=180, in=90, looseness=3](top.north);
\draw[->] (twoA.west) to[out=180, in=90, looseness=3](twoA.north);

\draw[->] (top.225)--(twoA) node[near start, above left, blue, scale=0.8] {$\dfrac{1}{k}F(u_1-\eps, u_1)$};
\draw[->] (top.270)--(twoB) node[midway, below right, blue, scale=0.8] {$\dfrac{1}{k}F(u_2, u_2+\eps)$};
\draw[->] (top.315)--(twoC) node[midway, blue, scale=0.8, above right] {$\dfrac{1}{k}F(u_1-\eps) + \dfrac{1}{k}\sum_{i=2}^k F(u_i)$};

\draw[->] (twoA.225)--(threeA) node[midway, left, blue, scale=0.8] {$\dfrac{1}{k}F(u_1-\eps)$};
\draw[->] (twoA.315)--(threeB) node[near end, blue, scale=0.8, above right] {$\dfrac{1}{k}F(u_2+\eps) + \dfrac{1}{k}\sum_{i=3}^k F(u_i)$};
\end{tikzpicture}
    \caption{Analysis of KT$(f)$ - The computation of $P^{\KT(f),1}_{k,\eps}(u_1,u_2,\ldots,u_k)$.}
    \label{fig:KT}
\end{figure}

As an immediate corollary, we get:

\begin{corollary}\label{C-KT}
    $\displaystyle d^{\KT(f),1}_{k}(u_1,u_2,\ldots,u_k) \EQ
    \frac{ f(u_1) }{ \sum_{i=1}^k F(u_i) }\left(1 - \frac{F(u_1)}{\sum_{i=1}^k F(u_i)} \right)\;.$
\end{corollary}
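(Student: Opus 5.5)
The plan is to divide the exact expression for $P^{\KT(f),1}_{k,\eps}(u_1,\ldots,u_k)$ from the preceding lemma by $\eps$ and let $\eps\to0$, using the definition $d^{R,1}_k=\lim_{\eps\to0}P^{R,1}_{k,\eps}/\eps$. The lemma writes this probability as a product of two factors. The first factor carries the $\eps$-dependence that produces the density. The second factor is a conditional probability that should simply converge to its value at $\eps=0$. So I will compute the limit of each factor separately and multiply.

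\textbf{First factor.} We have $F(u_1-\eps,u_1)=F(u_1)-F(u_1-\eps)=\int_{u_1-\eps}^{u_1}f(t)\,dt$. At any point $u_1$ where $f$ is continuous, $F(u_1-\eps,u_1)/\eps\to f(u_1)$. More generally this limit equals the left derivative of $F$, which is how the density $f(u_1)$ should be interpreted, for example almost everywhere. The denominator is $F(u_1)+F(u_2+\eps)+\sum_{i\ge3}F(u_i)$. Since $F$ is a continuous cdf, this denominator tends to $S:=\sum_{i=1}^kF(u_i)$.

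\textbf{Nonzero denominator.} I must check that $S>0$. Some coordinate satisfies $u_i\ge 1/k$, and the termination assumption gives $F(\frac1k)>0$. Since $F$ is nondecreasing, $S\ge F(1/k)>0$. This step matters for both factors, because both divide by a quantity converging to $S$.

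\textbf{Second factor.} The quantity $1-F(u_1-\eps)/\bigl(F(u_1-\eps)+F(u_2+\eps)+\sum_{i\ge3}F(u_i)\bigr)$ tends to $1-F(u_1)/S$. This follows from the continuity of $F$ together with the denominator tending to $S>0$.

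\textbf{Conclusion.} Multiplying the two limits gives $\frac{f(u_1)}{S}\bigl(1-\frac{F(u_1)}{S}\bigr)$, which is the claimed formula. There is no real obstacle here. The only points needing care are the positivity of the limiting denominator and interpreting $f(u_1)$ as the derivative of $F$ (continuity of $f$ at $u_1$). Applying the same argument with the roles of coordinates 1 and 2 swapped gives $d^{\KT(f),2}_k$. Adding the two yields Proposition~\ref{prop:KT-intro}.
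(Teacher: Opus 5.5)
Your proposal is correct and follows the paper's route: the paper obtains Corollary~\ref{C-KT} as an immediate consequence of the preceding lemma by dividing $P^{\KT(f),1}_{k,\eps}$ by $\eps$ and letting $\eps\to0$. Your extra checks are details the paper leaves implicit. These are that $\sum_{i=1}^k F(u_i)\ge F(\frac1k)>0$, and that $f(u_1)$ is read as the derivative of $F$ at a continuity point of $f$.
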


For a uniform random variable on $[0,1]$ we have $f(u)=1$ and $F(u)=u$. Hence $\sum_{i=1}^k F(u_i)=\sum_{i=1}^k u_i=1$. Thus $d^{\KT(f),1}_{k}(u_1,u_2,\ldots,u_k)=1-u_1$ and $d^{\KT(f)}_{k}(u_1,u_2,\ldots,u_k)=2-u_1-u_2$, matching the analysis of \cite{BNS18} for the uniform case, which also matches the performance of Exponential Clocks (EC).

\subsection{Independent Thresholds (IT)}\label{sub:IT}

The Independent Thresholds (IT) rounding scheme was introduced by Karger et al.\ \cite{KKSTY04}. For general~$k$, they used it with a uniform random variable on the interval $(0,\frac{6}{11}]$. In this section we extend and generalize the analysis of IT schemes given in \cite{KKSTY04}.

An IT$(f)$ scheme, where $f$ is a density function of a continuous random variable that attains values in $[0,1]$, chooses a random permutation $\sigma$ on the $k$ terminals. It also chooses $k-1$ \emph{independent} thresholds $t_1,t_2,\ldots,t_{k-1}$ according to the distribution~$f$. For $i=1,2,\ldots,k-1$, if a simplex point $\bu=(u_1,\ldots,u_k)$ is yet unassigned and $u_{\sigma^{-1}(i)}\ge t_i$ then $\bu$ is assigned to terminal~$\sigma^{-1}(i)$. At the end, all unassigned simplex points are assigned to terminal $\sigma^{-1}(k)$. (It is convenient to use $\sigma^{-1}$ in the definition, as then $\sigma(i)$ is the number of the round in which terminal~$i$ is considered.)

The following lemma is a slight modification of a formula appearing in \cite{KKSTY04}. Here $S_k$ denotes the set of all permutations on $[k]$.

\begin{lemma}
    $\displaystyle P^{\IT(f),1}_{k,\eps}(u_1,u_2,\ldots,u_k) \EQ (F(u_1)-F(u_1-\eps)) \cdot \frac{1}{k!}\sum_{\substack{\sigma\in S_k\\\sigma(1)\ne k}}  \prod_{i:\sigma(i)<\sigma(1)}(1-F(u_i+\eps_i)) \;,$
    where $\eps_2=\eps$ and $\eps_i=0$ for $i\ge 3$.
\end{lemma}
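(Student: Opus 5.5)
We prove the formula by conditioning on the random permutation $\sigma$. Let $\bu=(u_1,\ldots,u_k)$ and $\bv=(u_1-\eps,u_2+\eps,u_3,\ldots,u_k)$. We have $P^{\IT(f),1}_{k,\eps}(u_1,\ldots,u_k)=\Pr[\mathcal{E}]$, where $\mathcal{E}$ is the event that $\bu$ is assigned to terminal~$1$ while $\bv$ is not assigned to terminal~$1$ in that round. In IT every terminal is considered exactly once, so if $\bv$ is unassigned after terminal~$1$'s round, it must later go to another terminal. We write $\Pr[\mathcal{E}]=\frac{1}{k!}\sum_{\sigma\in S_k}\Pr[\mathcal{E}\mid\sigma]$ and compute each conditional probability.

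Fix $\sigma$. If $\sigma(1)=k$, terminal~$1$ is the default terminal, used in the last round with threshold~$0$. Any point still unassigned at that stage, in particular both $\bu$ and $\bv$, goes to terminal~$1$. So $\mathcal{E}$ cannot occur, and these permutations contribute nothing; this explains the restriction $\sigma(1)\ne k$.

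Now suppose $\sigma(1)<k$, so terminal~$1$ has its own threshold $t_{\sigma(1)}$. For $\mathcal{E}$ we need two things.

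First, $\bu$ must be unassigned when terminal~$1$'s round arrives. Since $\bu$ has coordinate $u_1\ge u_1-\eps$ and the rounds before are indexed by terminals $i$ with $\sigma(i)<\sigma(1)$, every such round must fail to capture $\bu$.

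Second, in terminal~$1$'s round, $u_1\ge t_{\sigma(1)}>u_1-\eps$, so that $\bu$ is captured and $\bv$ is not.

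There is one subtlety. We also need $\bv$ to be unassigned before terminal~$1$'s round, otherwise $\bv$ was assigned earlier and is not assigned "at a later stage". Capture of $\bu$ or $\bv$ by a terminal $i\ge3$ depends on the same coordinate $u_i$, so these events coincide for the two points. Terminal~$2$ captures $\bv$ iff $t\le u_2+\eps$, which contains the event of capturing $\bu$ ($t\le u_2$). So "neither captured by terminal $i$" has probability $1-F(u_i+\eps_i)$, with $\eps_2=\eps$ and $\eps_i=0$ for $i\ge 3$.

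Because the thresholds are independent, the conditional probability factorizes:
\[
\Pr[\mathcal{E}\mid\sigma]=\bigl(F(u_1)-F(u_1-\eps)\bigr)\prod_{i:\sigma(i)<\sigma(1)}\bigl(1-F(u_i+\eps_i)\bigr).
\]
Averaging over $\sigma$ gives the formula. We use $\Pr[t\ge x]=1-F(x)$, which holds since the distribution is continuous, so boundary ties have probability zero.

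The only delicate point is the second requirement above: interpreting "at a later stage" so that it forces \emph{both} points to survive earlier rounds. This is what produces the asymmetric $\eps_2=\eps$ shift. The rest is bookkeeping of independent events.
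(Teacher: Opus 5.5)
Your proposal is correct and follows the same approach as the paper's proof. Both condition on the permutation, discard $\sigma(1)=k$, require terminal~$1$'s threshold to lie in $(u_1-\eps,u_1]$ and each earlier terminal's threshold to exceed $u_i$ (or $u_2+\eps$ for terminal~$2$, so that $\bv$ is not captured first), and then multiply using independence. Your explicit link between the $\eps_2$ shift and the ``later stage'' requirement, which prevents double counting with $P^{R,2}$, spells out what the paper states in one line.
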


\begin{proof}
    The event happens if and only if the random threshold $t_{\sigma(1)}$ chosen for terminal~$1$ falls in the interval $(u_1-\eps,u_1]$, and the threshold $t_{\sigma(i)}$ of each terminal $i$ that appears before terminal~$1$ in the random permutation~$\sigma$ chosen by the scheme falls in the interval $(u_i,1]$. For terminal~$2$, the threshold should actually be in $(u_2+\eps,1]$ as otherwise $(u_1-\eps,u_2+\eps,\ldots,u_k)$ will be assigned to terminal~$2$ before $(u_1,u_2,\ldots,u_k)$ is assigned to terminal~$1$. In addition to that, terminal~$1$ should not be the last terminal in the permutation, hence the condition $\sigma(1)\ne k$. 
\end{proof}

We note that the effect of the condition $\sigma(1)\ne k$ becomes negligible for large or unbounded values of~$k$. However, it is important when considering small values of~$k$.

As an immediate corollary of lemma above we get:
\begin{corollary}
$\displaystyle d_k^{\IT(f),1}(u_1,u_2,\ldots,u_k) \EQ f(u_1) \cdot \frac{1}{k!}\sum_{\substack{\sigma\in S_k\\\sigma(1)\ne k}}  \prod_{i:\sigma(i)<\sigma(1)}(1-F(u_i)) \;.$
\end{corollary}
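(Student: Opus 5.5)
The plan is to read the corollary off the preceding lemma by dividing by $\eps$ and letting $\eps\to0$, in the same way Corollary~\ref{C-KT} followed from the lemma for KT$(f)$. By the lemma,
\[\frac{P^{\IT(f),1}_{k,\eps}(u_1,\ldots,u_k)}{\eps} \EQ \frac{F(u_1)-F(u_1-\eps)}{\eps}\cdot \frac{1}{k!}\sum_{\substack{\sigma\in S_k\\\sigma(1)\ne k}} \prod_{i:\sigma(i)<\sigma(1)}(1-F(u_i+\eps_i))\;.\]
The two factors can be handled separately.

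For the first factor, the fundamental theorem of calculus gives $(F(u_1)-F(u_1-\eps))/\eps\to f(u_1)$ whenever $f$ is continuous at $u_1$. This is a left derivative, but that is harmless because $u_1>0$ whenever the edge has positive length inside the simplex.

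For the second factor, note that the sum over $\sigma$ is finite, with at most $k!$ terms. Each term is a finite product of factors $1-F(u_i+\eps_i)$, and only the factor with $i=2$ depends on $\eps$. Since $F$ is a cdf of a continuous random variable, it is continuous, so $F(u_2+\eps)\to F(u_2)$. Each product therefore converges to $\prod_{i:\sigma(i)<\sigma(1)}(1-F(u_i))$, and so does the finite sum. The limit of the product of the two factors is the product of the limits, which yields exactly the stated formula for $d_k^{\IT(f),1}$.

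The only point needing care is the regularity of $f$ at $u_1$. At points where $f$ is discontinuous, such as the endpoint of a uniform distribution on $[0,b]$, one should interpret $f(u_1)$ as the left limit $f(u_1^-)$. Alternatively, one can state the identity at all points of continuity of $f$, which is all points except a finite set for the piecewise-continuous densities used in the paper. This does not affect the worst-case density bound, and I would add a remark to that effect rather than treat it as a real obstacle. Everything else is routine.
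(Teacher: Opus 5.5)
Your proposal is correct and matches the paper, which treats this as an immediate corollary of the preceding lemma: divide by $\eps$, use $(F(u_1)-F(u_1-\eps))/\eps\to f(u_1)$, and use continuity of $F$ to drop the $\eps_2$ perturbation in the finite sum of products. Your caveat about interpreting $f(u_1)$ at discontinuity points of $f$ is a reasonable refinement that the paper leaves implicit.
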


We now go beyond the analysis of $\IT(f)$ given by \cite{KKSTY04} and obtain efficient ways of evaluating or bounding $d_k^{\IT(f),1}(u_1,u_2,\ldots,u_k)$. 
In particular, relations between $d_k^{\IT(f),1}(u_1,u_2,\ldots,u_k)$ and \emph{elementary symmetric polynomials} are revealed. In Section~\ref{sub:further-IT} we go further and consider the asymptotic behavior of $d_k^{\IT(f),1}(u_1,u_2,\ldots,u_k)$ as $k\to\infty$.

Let $A=\{i\mid \sigma(i)<\sigma(1)\}\subseteq[2,k]$, \footnote{If $j$ and $k$ are integers, we let $[j,k]=\{j,j+1,\ldots,k\}$ and $[k]=\{1,2,\ldots,k\}$.} where~$\sigma$ is a random permutation chosen uniformly at random. Note that~$A$ is a random set, that~$|A|$ is uniformly distributed in $[k-1]$, and that by symmetry all subsets $A\subseteq[2,k]$ of a given size are equally likely. We thus have:
\[ d_k^{\IT(f),1}(u_1,\ldots,u_k) \EQ f(u_1) \cdot \frac{1}{k} \sum_{a=0}^{k-2} \frac{1}{\binom{k-1}{a}} \sum_{\substack{A\subseteq[2,k]\\ |A|=a}} \prod_{i\in A}(1-F(u_i)) \;,\]
where the upper limit of the sum is $k-2$, and not $k-1$, as we want to exclude permutations in which terminal~$1$ is the last terminal.

\begin{definition}[Elementary symmetric functions]
    Let 
    \[ e_k(x_1,\ldots,x_n) \EQ \sum_{\substack{A\in[n]\\ |A|=k}}\prod_{i\in A}x_i 
    \quad,\quad E_k(x_1,\ldots,x_n) \EQ \frac{e_k(x_1,\ldots,x_n)}{\binom{n}{k}}  \]
    be the $k$-th elementary symmetric function, and the normalized $k$-th elementary symmetric function, in the variables $x_1,x_2,\ldots,x_n$.
\end{definition}

Given $x_1,\ldots,x_n$, the elementary symmetric functions $e_k(x_1,\ldots,x_n)$, for $k=0,1,\ldots,n$, can be evaluated by computing the coefficients of the univariate polynomial 
\[\prod_{i=1}^n(1-tx_i) \EQ \sum_{k=0}^n e_k(x_1,\ldots,x_n)t^k\;.\]
Using FFT, the product of two univariate polynomials of degree~$n$ can be computed in time $O(n\log n)$. Hence the product of~$n$ degree-$1$ polynomials, and the values of all the elementary symmetric polynomials on~$n$ variables, can be computed in $O(n\log^2n)$ time.

\begin{corollary}\label{C-elementary}
   Let $y_i=1-F(u_i)$, for $i\in[k]$, then
   \[ d_k^{\IT(f),1}(u_1,\ldots,u_k) \EQ f(u_1)\cdot \frac{1}{k}\sum_{a=0}^{k-2} \frac{e_a(y_2,\ldots,y_k)}{\binom{k-1}{a}}
   \EQ f(u_1)\cdot \frac{1}{k}\sum_{a=0}^{k-2} E_a(y_2,\ldots,y_k)\;.\]
\end{corollary}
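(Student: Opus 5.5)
This is a direct substitution into the identity displayed just before the definition of the elementary symmetric functions. I would open by recalling that, for a uniformly random permutation~$\sigma$, the set $A=\{i\mid\sigma(i)<\sigma(1)\}\subseteq[2,k]$ of terminals preceding terminal~$1$ has size $|A|=\sigma(1)-1$. This size is uniform on $\{0,1,\ldots,k-1\}$, so each size occurs with probability $\frac1k$. Moreover, conditioned on $|A|=a$, the set $A$ is uniform among the $\binom{k-1}{a}$ subsets of $[2,k]$ of size~$a$, by symmetry of $\sigma$ under permutations fixing terminal~$1$.

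The excluded event $\sigma(1)= k$ is exactly $|A|=k-1$. Grouping the terms of the preceding corollary (the formula for $d_k^{\IT(f),1}$ as a sum over $\sigma\in S_k$ with $\sigma(1)\neq k$) by the value $a=|A|$ therefore gives
\[ d_k^{\IT(f),1}(u_1,\ldots,u_k) \EQ f(u_1)\cdot\frac1k\sum_{a=0}^{k-2}\frac{1}{\binom{k-1}{a}}\sum_{\substack{A\subseteq[2,k]\\|A|=a}}\prod_{i\in A}(1-F(u_i))\;. \]

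Next I set $y_i=1-F(u_i)$. The inner sum $\sum_{|A|=a}\prod_{i\in A}y_i$ over $A\subseteq[2,k]$ is by definition $e_a(y_2,\ldots,y_k)$, an elementary symmetric function in $k-1$ variables; note that $e_0=1$, coming from the empty product. Dividing by $\binom{k-1}{a}$ gives $E_a(y_2,\ldots,y_k)$ by the definition of the normalized function with $n=k-1$. Substituting both identities yields the two claimed expressions.

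There is no real obstacle here. The only points needing care are checking that $|A|$ is uniform on $\{0,\ldots,k-1\}$, so that the weight is $\frac1k$ rather than $\frac1{k-1}$, and that the cutoff $a\le k-2$ correctly encodes the condition $\sigma(1)\ne k$.
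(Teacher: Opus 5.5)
Your proposal is correct and follows the paper's own argument: group the permutations by $A=\{i\mid\sigma(i)<\sigma(1)\}$, use that each size $|A|=a$ has probability $\frac1k$ and that $A$ is uniform given its size, drop $a=k-1$ to encode $\sigma(1)\ne k$, and recognize the inner sums as $e_a(y_2,\ldots,y_k)$ and their normalizations as $E_a(y_2,\ldots,y_k)$. You state the range of $|A|$ correctly as $\{0,1,\ldots,k-1\}$; the paper's text says ``uniformly distributed in $[k-1]$'', which is an off-by-one slip given its own $\frac1k$ weight and $a=0$ term.
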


Corollary~\ref{C-elementary} provides an efficient way of computing $d_k^{\IT(f),1}(u_1,\ldots,u_k)$, and $d_k^{\IT(f)}(u_1,\ldots,u_k)$, as the elementary symmetric polynomials can be evaluated efficiently as explained above.

Using a simple decomposition property of the elementary symmetric polynomials we also get the following corollary that we use in Section~\ref{sec:further}.

\begin{corollary}\label{C-IT-decompose}
   Let $y_i=1-F(u_i)$, for $i\in[k]$, then for any $\ell \in [k]$ we have
        \[ d^{\IT(f),1}_k(u_1,u_2,\ldots,u_k) \EQ f(u_1) \cdot \frac{1}{k} \sum_{\substack{0\le a<\ell\\0\le b\le k-\ell\\a+b\le k-2}} \frac{1}{\binom{k-1}{a+b}} e_a(y_2,\ldots,y_\ell)\cdot e_b(y_{\ell+1},\ldots,y_k) \;.\]
\end{corollary}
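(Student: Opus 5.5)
We start from the first expression in Corollary~\ref{C-elementary},
\[ d_k^{\IT(f),1}(u_1,\ldots,u_k) \EQ f(u_1)\cdot \frac{1}{k}\sum_{c=0}^{k-2} \frac{e_c(y_2,\ldots,y_k)}{\binom{k-1}{c}}\;, \]
and only need to rewrite the elementary symmetric polynomial $e_c(y_2,\ldots,y_k)$ by splitting the variables into the two blocks $y_2,\ldots,y_\ell$ and $y_{\ell+1},\ldots,y_k$. The decomposition property we use is the Vandermonde-type convolution
\[ e_c(y_2,\ldots,y_k) \EQ \sum_{a+b=c} e_a(y_2,\ldots,y_\ell)\, e_b(y_{\ell+1},\ldots,y_k)\;. \]
It follows either combinatorially or from generating functions. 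Combinatorially, every $A\subseteq[2,k]$ with $|A|=c$ splits uniquely as $A'\cup A''$ with $A'\subseteq[2,\ell]$ and $A''\subseteq[\ell+1,k]$, and $\prod_{i\in A}y_i=\prod_{i\in A'}y_i\prod_{i\in A''}y_i$. With generating functions, one compares coefficients of $t^c$ in $\prod_{i=2}^k(1-ty_i)=\prod_{i=2}^\ell(1-ty_i)\cdot\prod_{i=\ell+1}^k(1-ty_i)$, using the identity stated just before Corollary~\ref{C-elementary}. Sign conventions are harmless here, since the same convention appears on both sides.

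Next we pin down the ranges of $a$ and $b$. The first block has $\ell-1$ variables, so $e_a(y_2,\ldots,y_\ell)=0$ for $a>\ell-1$; hence we may restrict to $0\le a<\ell$. The second block has $k-\ell$ variables, so we may restrict to $0\le b\le k-\ell$. The outer constraint $c\le k-2$ becomes $a+b\le k-2$.

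Substituting the convolution into the displayed formula and reindexing the double sum over $(c,(a,b))$ as a single sum over pairs $(a,b)$ with $c=a+b$ gives exactly the claimed expression, with weight $1/\binom{k-1}{a+b}$.

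The only point needing care is the edge cases of these index ranges, since everything else is bookkeeping. When $\ell=1$, the first block is empty and only $a=0$ survives, with $e_0=1$. When $\ell=k$, the second block is empty and only $b=0$ survives. In both cases the convention $e_0()=1$ for the empty product makes the identity hold verbatim.
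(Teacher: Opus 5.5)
Your proposal is correct and follows the same route as the paper: it derives this corollary from Corollary~\ref{C-elementary} via the block decomposition $e_c(y_2,\ldots,y_k)=\sum_{a+b=c}e_a(y_2,\ldots,y_\ell)\,e_b(y_{\ell+1},\ldots,y_k)$, with the out-of-range terms vanishing exactly as you describe. Your caution about sign conventions is justified, since the paper's generating-function identity is missing a factor $(-1)^k$ (equivalently, it should use $1+tx_i$), but this factor cancels on both sides as you note.
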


The following lemma, which seems to be new, gives an integral formula for $d_k^{\IT(f),1}(u_1,\ldots,u_k)$ with a simple intuitive derivation. We do not rely on this formula in our analysis so its proof is deferred to Appendix~\ref{A-IT-integral}.

\begin{lemma}\label{L-IT-int-formula}
    For every $k\ge 3$ we have:
    \[ d_k^{\IT(f),1}(u_1,\ldots,u_k)
    \LE f(u_1)\cdot \left( \int_0^1 \prod_{i=2}^k(1-tF(u_i))\,dt - \frac{1}{k}\prod_{i=2}^k (1-F(u_i)) \right)\;.\]
\end{lemma}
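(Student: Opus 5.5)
The plan is to prove the bound by a continuous-time coupling of the random permutation $\sigma$. In fact I expect it to hold with equality, so the inequality follows at once. Start from the formula for $d_k^{\IT(f),1}$ in the corollary following the lemma on $P^{\IT(f),1}_{k,\eps}$: it is $f(u_1)$ times the expectation, over a uniformly random $\sigma\in S_k$, of $\mathbf{1}[\sigma(1)\ne k]\cdot\prod_{i:\sigma(i)<\sigma(1)}(1-F(u_i))$. Write this expectation as
\[ \E_\sigma\Bigl[\prod_{i:\sigma(i)<\sigma(1)}(1-F(u_i))\Bigr] \;-\; \E_\sigma\Bigl[\mathbf{1}[\sigma(1)=k]\prod_{i:\sigma(i)<\sigma(1)}(1-F(u_i))\Bigr], \]
and handle the two terms separately.

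For the first term, realize $\sigma$ by drawing independent uniform arrival times $s_1,\ldots,s_k\in[0,1]$ and ordering terminals by arrival time. Condition on $s_1=t$. Then each $i\ge2$ arrives before terminal~$1$ independently with probability $t$. So the conditional expectation of the product factorizes as
\[ \prod_{i=2}^k\bigl(t(1-F(u_i))+(1-t)\bigr)=\prod_{i=2}^k(1-tF(u_i)). \]
Integrating over $t\in[0,1]$ gives $\int_0^1\prod_{i=2}^k(1-tF(u_i))\,dt$. This is the integral term in the statement of Lemma~\ref{L-IT-int-formula}. The step I most need to check is that this independence argument is legitimate: ties have probability zero, and the order induced by i.i.d.\ uniforms is a uniform permutation. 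Both are standard.

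For the second term, the event $\sigma(1)=k$ has probability $\frac1k$. On this event every $i\in[2,k]$ precedes terminal~$1$, so the product is deterministically $\prod_{i=2}^k(1-F(u_i))$, and the term equals $\frac1k\prod_{i=2}^k(1-F(u_i))$. Multiplying the difference of the two terms by $f(u_1)\ge0$ yields the right-hand side exactly, which gives the claimed inequality (indeed as an equality).

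As a sanity check I would compare with Corollary~\ref{C-elementary}. Expanding $\prod_{i\ge2}(1-tF(u_i))$ in powers of $t(1-F(u_i))$ and integrating with the Beta integrals $\int_0^1 t^a(1-t)^{k-1-a}\,dt=\frac{1}{k\binom{k-1}{a}}$ gives $\frac1k\sum_{a=0}^{k-1}E_a(y_2,\ldots,y_k)$. The $a=k-1$ term is precisely the subtracted one, which confirms the computation.
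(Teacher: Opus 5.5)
Your proposal is correct and is essentially the paper's proof. You realize $\sigma$ by i.i.d.\ uniform times and condition on terminal~$1$'s time $t$, so that each other terminal independently pre-empts with probability $tF(u_i)$. You then integrate $\prod_{i=2}^k(1-tF(u_i))$ over $t$ and subtract $\frac{1}{k}\prod_{i=2}^k(1-F(u_i))$ for the event that terminal~$1$ is last. As you note, this actually yields equality, and your Beta-integral cross-check against Corollary~\ref{C-elementary} is a nice confirmation that the paper does not include.
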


\subsection{Single Threshold (ST)}\label{sub:ST}

The Single Threshold (ST) rounding scheme was introduced by C{\u{a}}linescu, Karloff, and Rabani \cite{CKR00} who used it with a uniform random variable assuming values in $[0,1]$. 

An ST$(f)$ scheme, where $f$ is a density function of a continuous random variable that attains values in $[0,1]$, chooses a random permutation $\sigma$ on the $k$ terminals and a \emph{single} threshold $t$ chosen according to the distribution~$f$. For $i=1,2,\ldots,k-1$, if a simplex point $\bu=(u_1,\ldots,u_k)$ is unassigned and $u_{\sigma(i)}\ge t$ then $\bu$ is assigned to terminal~$\sigma(i)$. At the end, all unassigned simplex points are assigned to terminal $\sigma(k)$.

\begin{lemma}\label{ST-density-1}
        $\displaystyle d^{\ST(f),1}_k(u_1,u_2,\ldots,u_k) \EQ 
        \begin{ccases}
(1-\frac{1}{k})f(u_1) & \text{if } u_1>u_2,\ldots,u_k\\[8pt]
\displaystyle\frac{f(u_1)}{|\{i\in[k]\mid u_i\ge u_1\}|}          & \text{otherwise}.
\end{ccases} $
\end{lemma}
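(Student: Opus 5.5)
We follow the template used for KT$(f)$ and IT$(f)$: first compute $P^{\ST(f),1}_{k,\eps}(u_1,\ldots,u_k)$ exactly for small $\eps$, then divide by $\eps$ and let $\eps\to0$. Write $\bu=(u_1,u_2,\ldots,u_k)$ and $\bv=(u_1-\eps,u_2+\eps,u_3,\ldots,u_k)$.

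The event counted by $P^{\ST(f),1}_{k,\eps}$ is that $\bu$ is assigned to terminal~$1$ while $\bv$ is later assigned elsewhere. Since only terminals $1$ and $2$ can cut a $(1,2)$-aligned edge, and the threshold is shared, this requires $t\in(u_1-\eps,u_1]$; otherwise both endpoints agree on coordinate~$1$ relative to~$t$. This has probability $F(u_1-\eps,u_1)\approx \eps f(u_1)$. Conditioned on such a $t$, the event happens iff:
\begin{itemize}
\item[(i)] terminal~$1$ assigns $\bu$, i.e.\ every terminal $i$ preceding~$1$ in $\sigma$ has $u_i<t$ (and $u_2+\eps<t$ for $i=2$), or terminal~$1$ is last;
\item[(ii)] $\bu$ is not simply the default point. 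The default case still counts, since then $\bu$ goes to $1$ by default, $\bv$ had $v_1<t$ but is also unassigned until the end, so both go to $1$ and the edge is not cut. More precisely, when $1$ is last, $\bu$ and $\bv$ both go to terminal~$1$, so that permutation contributes nothing.
\end{itemize}
So we need $1$ not last, and all predecessors $i$ of~$1$ must satisfy $u_i<t\approx u_1$.

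We then split into cases and ignore measure-zero ties, which are irrelevant for the density. Let $m=|\{i\in[k]\mid u_i\ge u_1\}|$, which counts terminal~$1$ itself. For small $\eps$ and $t$ close to $u_1$, the terminals with $u_i\ge u_1$ (other than~$1$) are exactly those with $u_i\ge t$; this needs a little care for $i=2$ because of the $+\eps$ shift and for coordinates equal to $u_1$, but in the limit these only affect ties.

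If $u_1>u_2,\ldots,u_k$, then $m=1$, every predecessor automatically has $u_i<t$, and the only constraint is that $1$ is not last. This has probability $1-\frac1k$, giving $(1-\frac1k)f(u_1)$.

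Otherwise $m\ge2$, and terminal~$1$ must come first in $\sigma$ among the $m$ terminals with $u_i\ge u_1$, which has probability $1/m$. This condition already forces $1$ not to be last, since some terminal with $u_i\ge u_1$ comes after it. This gives $f(u_1)/m$.

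Finally, dividing by $\eps$ and using continuity of $f$ at $u_1$ yields the formula in Lemma~\ref{ST-density-1}. The main obstacle is the boundary bookkeeping: the $\eps$-shift on $u_2$, and the interplay between the default terminal and the ``not last'' condition. These should be handled by the observation that when terminal~$1$ is last, $\bu$ and $\bv$ end up together.
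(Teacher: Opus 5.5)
Your proof is correct and follows the paper's argument: condition on $t\in(u_1-\eps,u_1]$, require terminal~$1$ to come before every terminal that would capture the edge and not to be last, and split on whether $u_1$ is the strict maximum. Two small tightenings: no appeal to ties being negligible is needed, since for fixed $\bu$ and all sufficiently small $\eps$ the set of terminals that must not precede terminal~$1$ is exactly $\{i\ne 1 \mid u_i\ge u_1\}$ (coordinates equal to $u_1$ genuinely count, which is the source of the discontinuity the paper remarks on); and your item (ii) should simply say that permutations with $\sigma(k)=1$ contribute nothing to $P^{\ST(f),1}_{k,\eps}$.
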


\begin{proof}
    Assume at first that $u_1\le \max\{u_2,\ldots,\allowbreak u_k\}$. We then have:
    \[ P^{\ST(f),1}_k(u_1,u_2,\ldots,u_k) \EQ \int_{u_1-\eps}^{u_1} \frac{f(t)}{|\{i\in[k]\mid u_i\ge t\}|}\,dt \;.\]
    This follows as we need $t\in(u_1-\eps,u_1]$ and terminal~$1$ needs to appear in the random permutation before every terminal $i>1$ for which $u_i\ge t$, otherwise this terminal would capture both endpoints of the edge. (Note that this condition also implies that teminal~$1$ is not the last terminal in the permutation.) Taking the limit, we get the claimed expression for $d^{\ST(f),1}_k(u_1,u_2,\ldots,u_k)$.

    If $u_1>u_2,\ldots,u_k$, we need $t\in(u_1-\eps,u_1]$ and that terminal~$1$ is not the last terminal in the permutation, which happens with probability $1-\frac1k$. This again gives the claimed expression.
\end{proof}

It is interesting to note that $d^{\ST(f),1}_k(u_1,u_2,\ldots,u_k)$ is not a continuous function. For example, if $u_1<u_2$, then $d^{\ST(f),1}(u_1,u_2,u_1)=\frac13 f(u_1)$, while $d^{\ST(f),1}(u_1,u_2,u_1-\delta)=\frac12 f(u_1)$, for every $\delta>0$. The expressions for $d^{\ST(f),2}_k(u_1,u_2,\ldots,u_k)$ are thus similar, but not completely identical: \footnote{The expressions will become identical if we consider an edge with endpoints $(u_1-\frac{\eps}{2},u_2+\frac{\eps}{2},u_3,\ldots,u_k)$ and $(u_1+\frac{\eps}{2},u_2-\frac{\eps}{2},u_3,\ldots,u_k)$, but they become slightly more complicated, with both strict and non-strict inequalities.}

\begin{lemma}\label{ST-density-2}
        $\displaystyle d^{\ST(f),2}_k(u_1,u_2,\ldots,u_k) \EQ 
        \begin{ccases}
(1-\frac{1}{k})f(u_2) & \text{if } u_2\ge u_1,u_3,\ldots,u_k\\[8pt]
\displaystyle\frac{f(u_2)}{1+|\{i\in[k]\mid u_i> u_2\}|}          & \text{otherwise}.
\end{ccases} $
\end{lemma}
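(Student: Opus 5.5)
We mirror the proof of Lemma~\ref{ST-density-1}, now tracking the event that the endpoint $\bv=(u_1-\eps,u_2+\eps,u_3,\ldots,u_k)$ is captured by terminal~$2$ while $\bu=(u_1,u_2,\ldots,u_k)$ is still unassigned. For $\bu$ to be unassigned but $\bv$ assigned to terminal~$2$ at the step where terminal~$2$ is considered, we need $t\in(u_2,u_2+\eps]$. So the threshold must fall in an interval of probability $f(u_2)\eps+o(\eps)$. The density then equals $f(u_2)$ times the limiting conditional probability, over the random permutation $\sigma$, that no earlier terminal captures $\bv$ and that terminal~$2$ is not last.

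First take $t\in(u_2,u_2+\eps]$ with $\eps$ small. Terminal~$i\ge3$ captures both endpoints iff $u_i\ge t$. Terminal~$1$ captures $\bv$ iff $u_1-\eps\ge t$, and captures $\bu$ (but not necessarily $\bv$) iff $u_1\ge t$. Any terminal considered before~$2$ that captures $\bu$ only would also be fine for $P^{\ST(f),2}$? No: the definition requires $\bu$ to be assigned \emph{later}, so terminal~$1$ must not capture $\bu$ before terminal~$2$ acts. Hence terminal~$2$ must precede every $i\ne2$ with $u_i\ge t$ (for $i=1$ the relevant condition is $u_1\ge t$). As $\eps\to0$ and $t\downarrow u_2$, the set $\{i\ne2: u_i\ge t\}$ becomes $\{i\ne 2: u_i>u_2\}$, for all but a vanishing portion of $t$. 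This uses that $u_i$ is fixed for $i\ne1,2$, and that for $i=1$ the condition $u_1\ge t$ with $t>u_2$ also tends to $u_1>u_2$. This is exactly where the strict inequality arises, in contrast with the non-strict $u_i\ge u_1$ in Lemma~\ref{ST-density-1}: there $t\le u_1$ approaches from below, here $t>u_2$ approaches from above.

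Now split into cases. If some $u_i>u_2$, then the set $S=\{i: u_i>u_2\}$ is nonempty, and terminal~$2$ must be first among $S\cup\{2\}$. This has probability $1/(1+|S|)$, and it automatically forces terminal~$2$ not to be last. Integrating over $t$ and dividing by $\eps$ gives $f(u_2)/(1+|\{i: u_i>u_2\}|)$. If instead $u_2\ge u_1,u_3,\ldots,u_k$, then $S=\emptyset$ and no terminal threatens the endpoints for $t>u_2$. The only remaining requirement is that terminal~$2$ is not last: if it were last, both points would default to it and the edge would be uncut. This has probability $1-\frac1k$, giving $(1-\frac1k)f(u_2)$.

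The main obstacle is the limiting argument at boundary configurations. When $u_1=u_2$, one must check that terminal~$1$ with $u_1=u_2<t$ does not capture $\bu$. It does not, so ties with $u_2$ do not count, which matches ``$u_2\ge u_1$'' in the first case. One must also check that the $o(\eps)$ portions of $t$ near $u_i$ values contribute nothing to the limit, which holds since the $u_i$ are fixed and $f$ is a density (continuity of $f$ at $u_2$ assumed, as implicitly in Lemma~\ref{ST-density-1}).
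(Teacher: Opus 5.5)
Your proof is correct and follows the route the paper intends. The paper gives no separate proof for this lemma and only notes that the argument mirrors Lemma~\ref{ST-density-1}. Like that argument, you integrate over $t\in(u_2,u_2+\eps]$, require terminal~$2$ to precede every terminal that would capture $\bu$ first, and note that approaching $u_2$ from above turns the comparison into the strict $u_i>u_2$.

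One small simplification: once $\eps<\min\{u_i-u_2 : u_i>u_2\}$, the set $\{i\ne 2: u_i\ge t\}$ equals $\{i: u_i>u_2\}$ for every $t$ in that interval. So the ``vanishing portion'' argument is not needed.
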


In the experimental discovery of our rounding schemes we only relied on the following potentially weaker corollary. The stronger bounds of Lemmas~\ref{ST-density-1} and~\ref{ST-density-2} are used, however, to speed up the verification process. 

\begin{corollary}\label{ST-density}
    If $u_1\le u_2$, then $d^{\ST(f)}_k(u_1,u_2,\ldots,u_k) \LE \frac{1}{2}f(u_1)+(1-\frac1k)f(u_2)$.
\end{corollary}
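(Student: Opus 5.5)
The plan is to write the density as the sum of its two one-sided parts,
\[ d^{\ST(f)}_k(u_1,\ldots,u_k) \EQ d^{\ST(f),1}_k(u_1,\ldots,u_k) + d^{\ST(f),2}_k(u_1,\ldots,u_k)\;, \]
which holds because only terminals $1$ and $2$ can cut a $(1,2)$-aligned edge. Under the hypothesis $u_1\le u_2$, we then bound each summand separately using Lemmas~\ref{ST-density-1} and~\ref{ST-density-2}.

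\textbf{The first term.} Since $u_1\le u_2$, the condition $u_1>u_2,\ldots,u_k$ of Lemma~\ref{ST-density-1} fails, so the ``otherwise'' case applies. Hence
\[ d^{\ST(f),1}_k \EQ \frac{f(u_1)}{|\{i\in[k]\mid u_i\ge u_1\}|}\;. \]
The set in the denominator contains both $1$ and $2$, because $u_1\ge u_1$ and $u_2\ge u_1$. So its cardinality is at least $2$, and therefore $d^{\ST(f),1}_k\le \frac12 f(u_1)$.

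\textbf{The second term.} Lemma~\ref{ST-density-2} gives two cases. If $u_2\ge u_1,u_3,\ldots,u_k$, the value is exactly $(1-\frac1k)f(u_2)$. Otherwise the value is $f(u_2)/(1+|\{i\mid u_i>u_2\}|)$. In this case some $u_i$ with $i\ge 3$ strictly exceeds $u_2$, because $u_2\ge u_1$ already holds. So the denominator is at least $2$, and the value is at most $\frac12 f(u_2)$. Since $k\ge 2$ we have $\frac12\le 1-\frac1k$, and in either case $d^{\ST(f),2}_k\le(1-\frac1k)f(u_2)$.

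Adding the two bounds gives the corollary.

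There is no real obstacle here. The only care needed is with the strict versus non-strict inequalities in the two lemmas. The tie $u_1=u_2$ falls in the ``otherwise'' branch of Lemma~\ref{ST-density-1}, with denominator at least $2$. It can fall in either branch of Lemma~\ref{ST-density-2}, and both branches were handled above.
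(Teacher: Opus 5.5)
Your proof is correct and follows the same route as the paper, which states this corollary without proof as an immediate consequence of Lemmas~\ref{ST-density-1} and~\ref{ST-density-2}: split $d^{\ST(f)}_k$ into its two one-sided densities and bound each separately. You also handle the one step that needs care: in the ``otherwise'' branch of Lemma~\ref{ST-density-2}, the hypothesis $u_1\le u_2$ forces some $u_i>u_2$ with $i\ge 3$. The denominator is therefore at least $2$, which gives $\frac12 f(u_2)\le(1-\frac1k)f(u_2)$.
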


It follows immediately from this corollary that the approximation ratio achieved by the algorithm of \CKRal\ is at most $\frac{3}{2}-\frac{1}{k}$, as they just use ST$(f)$ with $f(x)=1$, for $x\in[0,1]$.

\subsection{Descending Thresholds (DT)}\label{sub:DT}

The Descending Threshold (DT) rounding scheme was introduced by Sharma and Vondr{\'a}k \cite{SV14} who used it with a random variable that is uniform in the interval $[0,\frac{6}{11}]$. We again use it with more general random variables. An DT$(f)$ scheme, where $f$ is a density function of a continuous random variable that attains values in $[0,1]$, chooses for each terminal~$i$ a random threshold~$t_i$ according to~$f$. All choices are independent. It then lets $\sigma$ be the permutation that sorts the random thresholds in \emph{decreasing} order, i.e., $t_{\sigma(1)}> t_{\sigma(2)}>\cdots> t_{\sigma(k)}$. (We assume that~$f$ is finite in $[0,1]$, so the probability of a tie is~$0$.) For $i=1,2,\ldots,k-1$, if a simplex point $\bu=(u_1,u_2,\ldots,u_k)$ was not assigned yet to a terminal and $u_{\sigma(i)}\ge t_{\sigma(i)}$, then $\bu$ is assigned to terminal~$\sigma(i)$. At the end, all points that were not assigned are assigned to terminal~$\sigma(k)$.

Both IT$(f)$ and DT$(f)$ choose independent random thresholds $t_i$ for each terminal according to~$f$. The main difference is that IT$(f)$ also chooses a random permutation independent of all thresholds while DT$(f)$ uses the random thresholds to define the permutation. 

Recall that $F(u)=\int_{0}^u f(x)\,dx$ is the cumulative probability function of a random variable with density function~$f(x)$ and that $F(u_1,u_2)=F(u_2)-F(u_1)$, for $u_1\le u_2$, is the probability that the random variable attains a value in $(u_1,u_2]$.

Generalizing expressions appearing in Sharma and Vondr\'ak \cite{SV14} we obtain:

\begin{lemma}\label{DT-prob}
   \[ P^{\DT(f),1}_{k,\eps}(u_1,u_2,\ldots,u_k) \!\EQ\! \int_{u_1-\eps}^{u_1} f(t) \left(\prod_{\substack{i\ne 1\\i:u_i>t}}(1-F(t,u_i))-\prod_{i\ne 1}F(\max\{t,u_i\},1) \!\right)dt\;.\]
\end{lemma}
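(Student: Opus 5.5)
The plan is to condition on the threshold $t=t_1$ chosen for terminal~$1$ and to describe the event counted by $P^{\DT(f),1}_{k,\eps}$ exactly in terms of the other thresholds $t_2,\ldots,t_k$. These are independent of $t_1$ and of each other, so the conditional probability should factor into a product over $i\ne1$. Integrating against $f(t)\,dt$ then gives the formula of Lemma~\ref{DT-prob}.

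Write $\bu=(u_1,\ldots,u_k)$ and $\bv=(u_1-\eps,u_2+\eps,\ldots,u_k)$. The first step is to argue that the event forces $t\in(u_1-\eps,u_1]$. Terminal~$1$ is considered only once, and when it is, it captures $\bu$ iff $u_1\ge t$ and captures $\bv$ iff $u_1-\eps\ge t$. If it captured both (and both were unassigned), they would receive the same terminal. So $\bu$ must be assigned to~$1$ while $\bv$ is not, and this happens only if $t\in(u_1-\eps,u_1]$, giving the factor $f(t)$ on that interval. The point $\bv$ can then never be assigned to terminal~$1$ afterwards, with one exception: terminal~$1$ is the default terminal $\sigma(k)$. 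In DT this is the case exactly when $t$ is the smallest threshold, i.e.\ $t_i>t$ for all $i\ne1$. Conversely, whenever $\bu$ is captured by~$1$ in an actual round, $\bv$ is automatically assigned later to some other terminal, so the ``later, different terminal'' requirement adds no further constraint.

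Next comes the condition that $\bu$ is still unassigned when terminal~$1$ is processed. The terminals processed earlier are exactly those $i\ne1$ with $t_i>t$, and such an $i$ must not capture $\bu$, i.e.\ we need $t_i>u_i$. Thus, for each $i\ne1$, independently, we need ``$t_i\le t$ or $t_i>u_i$''. If $u_i\le t$ this always holds. If $u_i>t$ it fails exactly when $t_i\in(t,u_i]$, so it has probability $1-F(t,u_i)$. Multiplying over $i$ gives the first product. From this we subtract the probability of the sub-event in which terminal~$1$ is last, namely that every $i\ne1$ has $t_i>t$ and $t_i>u_i$. That probability is $\prod_{i\ne1}F(\max\{t,u_i\},1)$, and in this sub-event $\bu$ and $\bv$ both fall to terminal~$1$ by default, so the edge is not cut by~$1$ first. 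Integrating over $t\in(u_1-\eps,u_1]$ yields the stated expression.

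The main obstacle is the $\eps$-perturbation in coordinate~$2$. For terminal~$2$ processed before~$1$, the correct requirement is that it captures neither $\bu$ nor $\bv$, i.e.\ $t_2>u_2+\eps$ rather than $t_2>u_2$, as in the IT lemma where $\eps_2=\eps$ appears. I would check whether this changes the probability at all. The discrepancy is the event $t_2\in(u_2,u_2+\eps]$ together with $t_1\in(u_1-\eps,u_1]$, which has probability $O(\eps^2)$ since $f$ is bounded. So either the formula should be read with $u_2$ replaced by $u_2+\eps$, or it holds up to $o(\eps)$. This is all that matters for the density $d^{\DT(f),1}_k$ obtained by dividing by $\eps$ and letting $\eps\to0$. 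I would state the proof in the exact form with $u_2+\eps$ and remark that replacing it by $u_2$ costs $O(\eps^2)$.
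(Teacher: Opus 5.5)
Your proposal is correct and follows the paper's proof essentially step for step. Like the paper, you condition on $t_1=t\in(u_1-\eps,u_1]$, require $t_i\notin(t,u_i]$ for each $i\ne1$ (which gives the product by independence), and subtract the sub-event $\prod_{i\ne1}F(\max\{t,u_i\},1)$ in which terminal~$1$ is last.

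Your remark about coordinate~$2$ is also right, and it is a point the paper's argument glosses over. If $t_2\in(u_2,u_2+\eps]$ and $t_2>t$, terminal~$2$ captures only $(u_1-\eps,u_2+\eps,\ldots,u_k)$. So the displayed identity is exact only with $u_2$ replaced by $u_2+\eps$, as in the IT lemma. As written it holds up to $O(\eps^2)$, which is harmless for Corollary~\ref{DT-density}.
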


\begin{proof}
    For terminal~$1$ to cut the edge, three things need to happen: (1) $t_1\in (u_1-\eps,u_1]$; (2) For every $i\ne 1$ we need that $t_i\notin (t_1,u_i]$, as otherwise terminal~$i$ would be considered before terminal~$1$ and would capture both endpoints of the edge; (3) Terminal~$1$ is not the last in the permutation, i.e., there exists $i\ne 1$ for which $t_1\ge t_i$.

    Conditioning on $t_1=t$, where $t\in(u_1-\eps,u_1]$, the probability that condition (2) holds, i.e., that $t_i\notin (t,u_i]$ for every $i\ne 1$, is $\prod_{\substack{i\ne 1\\i:u_i>t}}(1-F(t,u_i))$, as all the thresholds are chosen independently. (Note that if $u_i\le t$, then the condition $t_i\notin (t,u_i]$ holds automatically.) The probability that condition (2) holds while condition (3) does not is $\prod_{i\ne 1}F(\max\{t,u_i\},1)$. 
\end{proof}

As an immediate corollary, we get:

\begin{corollary}\label{DT-density}
    If $f(u)$ is continuous, then
    \[ d^{\DT(f),1}_k(u_1,u_2,\ldots,u_k) \EQ f(u_1) \cdot \left(\prod_{i:u_i\ge u_1}(1-F(u_1,u_i))-\prod_{i\ne 1}F(\max\{u_1,u_i\},1) \right)\;.\]
\end{corollary}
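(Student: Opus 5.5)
The plan is to obtain Corollary~\ref{DT-density} directly from Lemma~\ref{DT-prob}: divide the probability by $\eps$ and let $\eps\to 0$. Write $g(t)$ for the integrand of Lemma~\ref{DT-prob}:
\[ g(t) \EQ f(t)\left(\prod_{i\ne1,\,u_i>t}(1-F(t,u_i)) - \prod_{i\ne 1}F(\max\{t,u_i\},1)\right). \]
Then $P^{\DT(f),1}_{k,\eps}(u_1,\ldots,u_k)/\eps$ is the average of $g$ over $(u_1-\eps,u_1]$. If $g$ is left-continuous at $t=u_1$, this average tends to $g(u_1)$, and $g(u_1)$ is the claimed formula. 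The proof therefore has two steps.

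\emph{First step: evaluating $g(u_1)$.} In the first product, the condition $u_i>u_1$ may be relaxed to $u_i\ge u_1$. Indeed, if $u_i=u_1$ then $F(u_1,u_i)=0$, so the extra factor equals $1$. Likewise, the index $i=1$ contributes the factor $1-F(u_1,u_1)=1$, so including it is harmless. This gives exactly $\prod_{i:u_i\ge u_1}(1-F(u_1,u_i))$. The second product at $t=u_1$ is literally $\prod_{i\ne 1}F(\max\{u_1,u_i\},1)$.

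\emph{Second step: left-continuity of $g$ at $u_1$.} This is where care is needed. The factor $f$ is continuous by hypothesis, and $F$ is continuous because it is the integral of $f$. Hence each term $F(t,u_i)=F(u_i)-F(t)$ and each term $F(\max\{t,u_i\},1)$ is continuous in $t$. The only possible discontinuity is that the index set $\{i: u_i>t\}$ may change as $t$ varies. Take $t\in(u_1-\eps,u_1]$ with $\eps$ smaller than every positive gap $u_1-u_i$ among indices with $u_i<u_1$. For such $t$, the index set is exactly $\{i\ne1: u_i\ge u_1\}$, except that indices with $u_i=u_1$ may enter or leave. Those indices contribute the factor $1-F(t,u_1)$, which tends to $1$ as $t\to u_1$, so they do not affect the limit. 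On this small interval, then, $g$ coincides with a continuous function whose value at $u_1$ is the expression from the first step.

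Finally, combine the two steps through the mean value theorem for integrals, or directly through uniform continuity: $\frac1\eps\int_{u_1-\eps}^{u_1}g(t)\,dt \to g(u_1)$. The only step that is more than bookkeeping is the handling of the varying index set in the second step, in particular ties $u_i=u_1$. The other ingredient, interchanging the limit and the integral, is immediate once $f$ is continuous (hence bounded on $[0,1]$).
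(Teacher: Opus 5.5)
Your proposal is correct and follows the paper's own route. The paper states the corollary as an immediate consequence of Lemma~\ref{DT-prob}, obtained by dividing by $\eps$ and letting $\eps\to 0$, and it makes the same remark that indices with $u_i=u_1$, as well as $i=1$ itself, contribute the factor $1-F(u_1,u_1)=1$. Your left-continuity argument for the changing index set $\{i : u_i>t\}$ just spells out what the paper leaves implicit.
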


Note that $F(u_1,u_i)=F(u_i)-F(u_1)$, if $u_1<u_i$, and that $F(\max\{u_1,u_i\},1) = 1-F(\max\{u_1,u_i\})$.
Also note that if $u_i=u_1$, then $1-F(u_1,u_1)=1$, hence the condition $u_i\ge u_1$ can be replaced by $u_i>u_1$, and it also not necessary to require that $i\ne 1$. For the same reason, $d^{\DT(f),1}_k(u_1,u_2,\ldots,u_k)$ is a continuous function of $u_1,u_2,\ldots,u_k$.

The term $\prod_{i\ne 1}F(\max\{u_1,u_i\},1)$ is significant only when $k$ is small. When $k$ is large, or unbounded, we use the upper bound on $d^{\DT(f),1}_k(u_1,u_2,\ldots,u_k)$ obtained by ignoring this term.

\section{Further Analysis of GKT and IT}\label{sec:further}

In this section we present further analysis of the GKT and IT rounding schemes. In particular, we obtain bounds on $d^R_k(u_1,u_2,\ldots,u_k)$, where $R=\KT(f)$ or $R=\IT(f)$, when a prefix $(u_1,u_2,\ldots,u_\ell)$ of the simplex point, where $2\le \ell<k$, is given, and where the suffix $(u_{\ell+1},\ldots,u_k)$ may be arbitrary. These bounds play a crucial role in the discovery of our improved rounding schemes and in their rigorous verification. To obtain these bounds, we need to make some further assumptions on the distributions used. For GKT schemes, we assume that the density function $f(x)$ is constant, and positive, in the interval $[0, \alpha]$, for some $\alpha>0$. For IT schemes, we assume that the cumulative density function $F(x)$ is convex in an interval $[0, \alpha]$.

It is convenient to make the following definition:

\begin{definition}[Cut density of prefixes]
    Let $R$ be a rounding scheme. If $2\le \ell\le k$ and $u_1,u_2,\ldots,u_\ell$ are such that $u_1,u_2,\ldots,u_\ell\ge 0$ and $\sum_{i=1}^\ell u_i\le 1$, let
\[ d^{R,1}_k(u_1,u_2,\ldots,u_\ell) \EQ \max_{\substack{u_{\ell+1},\ldots,u_k\ge 0\\ \sum_{i=1}^k u_i=1}} d^{R,1}_k(u_1,u_2,\ldots,u_\ell,u_{\ell+1},\ldots,u_k) \;.\]
\end{definition}
$d^{R,2}_k(u_1,u_2,\ldots,u_\ell)$ and $d^{R}_k(u_1,u_2,\ldots,u_\ell)$ are defined analogously. It follows immediately from the above the definition that $d^{R,1}_k(u_1,u_2,\ldots,u_\ell)\le d^{R,1}_{k+1}(u_1,u_2,\ldots,u_\ell)$ as in the maximum defining $d^{R,1}_{k+1}(u_1,u_2,\ldots,u_\ell)$ we are free to choose $u_{k+1}=0$ and $d^{R,1}_{k+1}(u_1,u_2,\ldots,u_k,0) = d^{R,1}_{k}(u_1,u_2,\ldots,u_k)$. We thus define:

\begin{definition}[Asymptotic Cut density]\label{D:inf}
    Let $R$ be a rounding scheme. If $2\le \ell$, $0\le u_1,u_2,\ldots,u_\ell$ and $\sum_{i=1}^\ell u_i\le 1$, let
\[ d^{R,1}_\infty(u_1,u_2,\ldots,u_\ell) \EQ \lim_{k\to\infty} d^{R,1}_k(u_1,u_2,\ldots,u_\ell) \EQ \sup_{k\ge 2}\; d^{R,1}_k(u_1,u_2,\ldots,u_\ell)\;.\]
\end{definition}

\subsection{Further Analysis of GKT}\label{sub:further-GKT}

The bound for GKT is relatively straightforward. 

\begin{lemma}\label{L-KT-prefix}
    Suppose that $f$ is constant and nonzero on $[0, \alpha]$, for some $\alpha > 0$. If $u_{\ell+1},\ldots u_k\in [0, \alpha]$, where $2\le\ell\le k$, then 
    \[ d^{\KT(f),1}_k(u_1,u_2,\ldots,u_k) \EQ \]
    \[\frac{ f(u_1) }{ \sum_{i=1}^\ell F(u_i) + f(0)\left(1-\sum_{i=1}^\ell u_i\right) }\left(1 - \frac{F(u_1)}{ \sum_{i=1}^\ell F(u_i) + f(0)\left(1-\sum_{i=1}^\ell u_i\right) } \right)\;.\]
\end{lemma}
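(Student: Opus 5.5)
This is a direct substitution into Corollary~\ref{C-KT}, which gives
\[ d^{\KT(f),1}_k(u_1,\ldots,u_k) \EQ \frac{f(u_1)}{S}\left(1-\frac{F(u_1)}{S}\right), \qquad S=\sum_{i=1}^k F(u_i). \]
The density depends on the suffix $(u_{\ell+1},\ldots,u_k)$ only through the partial sum $\sum_{i=\ell+1}^k F(u_i)$. The plan is therefore to evaluate this partial sum exactly under the hypothesis that $f$ is constant on $[0,\alpha]$.

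First, since $f$ is constant on $[0,\alpha]$, for every $x\in[0,\alpha]$ we have $F(x)=\int_0^x f(t)\,dt = f(0)\,x$. Here $f(0)$ denotes that constant value; if one worries about the value of a density at a single point, one takes $f(0)$ to mean the constant value on $[0,\alpha]$, which does not affect $F$. Because $u_{\ell+1},\ldots,u_k\in[0,\alpha]$, this gives
\[ \sum_{i=\ell+1}^k F(u_i) \EQ f(0)\sum_{i=\ell+1}^k u_i \EQ f(0)\Bigl(1-\sum_{i=1}^\ell u_i\Bigr), \]
where the last equality uses $\sum_{i=1}^k u_i=1$. Hence $S=\sum_{i=1}^\ell F(u_i)+f(0)\bigl(1-\sum_{i=1}^\ell u_i\bigr)$, and plugging this into Corollary~\ref{C-KT} yields exactly the claimed formula.

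The only points needing care are that the formula is well defined and that Corollary~\ref{C-KT} applies. For well-definedness, $S>0$: some coordinate satisfies $u_i\ge 1/k$, and $F(1/k)>0$ is the termination assumption. (Alternatively, if the suffix has positive total mass then $f(0)>0$ makes the second term positive, and otherwise the prefix already sums to $1$.) For applicability, Corollary~\ref{C-KT} was derived as the $\eps\to 0$ limit of the lemma's exact probability, which requires $f$ to be continuous at $u_1$ and $u_2$ for $F(u_1-\eps,u_1)/\eps\to f(u_1)$. I would inherit exactly the regularity used in Corollary~\ref{C-KT} rather than re-derive it.

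I see no real obstacle. The statement is essentially Corollary~\ref{C-KT} combined with the linearity of $F$ on $[0,\alpha]$, and it notably shows that the density is independent of how the leftover mass is distributed among the small coordinates.
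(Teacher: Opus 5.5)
Your proposal is correct and follows the paper's proof: substitute into Corollary~\ref{C-KT} and use $F(x)=f(0)\,x$ on $[0,\alpha]$ together with $\sum_{i=1}^k u_i=1$ to rewrite the suffix sum $\sum_{i=\ell+1}^k F(u_i)$. Your remarks that the denominator is positive and that the regularity is inherited from Corollary~\ref{C-KT} are valid additions the paper leaves implicit.
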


\begin{proof}
    This follows immediately from Corollary~\ref{C-KT}. If $f$ is constant in $[0, \alpha]$, then $F(x)=f(0)\,x$ for every $x \in [0, \alpha]$. Thus,
    \[\sum_{i=\ell+1}^k F(u_i) \EQ f(0)\sum_{i=\ell+1}^k u_i \EQ f(0)\left(1-\sum_{i=1}^\ell u_i\right)\;.\qedhere\]
\end{proof}

Note that the expression above depends only on the prefix $(u_1,u_2,\ldots,u_\ell)$ and not on the entire simplex point $(u_1,u_2,\ldots,u_k)$. 

\subsection{Further Analysis of IT}\label{sub:further-IT}

Bounding the density of IT when only a prefix $(u_1,u_2,\ldots,u_\ell)$ of a simplex point is given is harder. Karger et al.~\cite{KKSTY04} obtained such a bound when $\ell=2$ and when $f(x)$ is constant. This bound is not enough for our purposes and we need to extend it to larger values of~$\ell$. 

We show that if $F(x)$ is convex in the interval $[0, \alpha]$, where $\alpha > 0$, which is satisfied for example if~$f(x)$ is non-decreasing in $[0, \alpha]$, then the maximum in the definition of $d^{\IT(f),1}_k(u_1,u_2,\ldots,u_\ell)$, restricted to the region $0 \leq u_{\ell+1}, \ldots, u_k \leq \alpha$ (which always holds if $\sum_{i=1}^\ell u_i\ge 1- \alpha$),  is attained when $u_{\ell+1},\ldots,u_k$ are all equal. 

Before proceeding, we recall a special case of Maclaurin's inequality \cite{hardy1952inequalities} which is a refinement of the  inequality of arithmetic and geometric means.

\begin{lemma}[Maclaurin's inequality]\label{L-Maclaurin}
    If $x_1,x_2,\ldots,x_n\ge 0$ and $1\le k\le n$, then:
    \[ E_k(x_1,\ldots,x_n) \LE E_1(x_1,\ldots,x_n)^k \EQ
    \biggl(\frac{1}{n}\sum_{i=1}^n x_i\biggr)^k \;.\]
\end{lemma}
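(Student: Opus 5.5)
The statement is Maclaurin's inequality in the special form $E_k\le E_1^k$. I would prove it by the standard chain of Newton's inequalities and make the chain explicit.

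\emph{Newton's inequalities.} For nonnegative reals $x_1,\ldots,x_n$ and $1\le j\le n-1$ one has
\[ E_{j-1}E_{j+1}\le E_j^2 .\]
The standard route is through the polynomial
\[ P(t)=\prod_{i=1}^n(t+x_i)=\sum_{j=0}^n \binom{n}{j}E_j\,t^{n-j}, \]
which has only real roots. By Rolle's theorem, every derivative of $P$ also has only real roots, and so does the reversal $t^m Q(1/t)$ of any real-rooted polynomial $Q$ of degree $m$. Differentiate $P$ exactly $j-1$ times, reverse the result, and differentiate again $n-j-1$ times. This isolates a real-rooted quadratic that, up to a positive constant factor, is
\[ E_{j-1}t^2+2E_j t+E_{j+1}. \]
A real-rooted quadratic has nonnegative discriminant, which gives exactly $E_j^2\ge E_{j-1}E_{j+1}$. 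I expect this step, tracking the binomial normalizations through the derivatives, to be the only real work; the rest is bookkeeping.

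\emph{From Newton to the chain.} Next I show $E_{j}^{1/j}$ is nonincreasing in $j$ for $j\ge 1$, which gives $E_k^{1/k}\le E_1$, i.e.\ $E_k\le E_1^k$. If some $x_i$ is zero, I may either argue by continuity (perturb to positive values, both sides being continuous) or note that zeros only make the left side smaller relative to the right. So I assume all $x_i>0$, hence all $E_j>0$.

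Multiplying Newton's inequalities in the telescoping form $\prod_{i=1}^{j}(E_{i-1}E_{i+1})^{i}\le\prod_{i=1}^{j} E_i^{2i}$ and using $E_0=1$ yields $E_{j+1}^{j}\le E_j^{j+1}$, that is, $E_{j+1}^{1/(j+1)}\le E_j^{1/j}$. An equivalent route is induction on $j$: from $E_{j-1}^{1/(j-1)}\ge E_j^{1/j}$ one gets $E_{j-1}\ge E_j^{(j-1)/j}$. Plugging this into $E_{j-1}E_{j+1}\le E_j^2$ gives $E_{j+1}\le E_j^{2-(j-1)/j}=E_j^{(j+1)/j}$.

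Chaining these inequalities from $j=1$ to $k-1$ gives $E_k\le E_1^k$, which is the claim.

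Alternatively, one could skip Newton entirely and show $E_k\le E_1^k$ directly by a smoothing argument. Replacing two unequal variables by their average keeps $E_1$ fixed. Writing $E_k$ as $ab\,S_{k-2}+(a+b)S_{k-1}+S_k$ with $S$ elementary in the other variables, averaging does not decrease $ab$ and leaves $a+b$ fixed, so $E_k$ weakly increases. Repeating with a compactness argument, the maximum of $E_k$ at fixed $E_1$ is attained at all-equal variables, where $E_k=E_1^k$. I would keep this as a backup if the derivative bookkeeping in the Newton route becomes messy.
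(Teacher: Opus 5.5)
Your proof is correct. The paper does not prove this lemma: it quotes Maclaurin's inequality from Hardy, Littlewood and P\'olya. Your route is the classical proof found there: real-rootedness gives Newton's inequalities $E_{j-1}E_{j+1}\le E_j^2$, and these give monotonicity of $E_j^{1/j}$. The telescoping exponent count giving $E_{j+1}^{j}\le E_j^{j+1}$ checks out, as do the inductive alternative and the reduction to $x_i>0$ by continuity.

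There is one small bookkeeping slip. With your convention $P(t)=\sum_j\binom{n}{j}E_j\,t^{n-j}$, differentiating $j-1$ times first kills $E_i$ for $i>n-j+1$. After reversing and differentiating $n-j-1$ more times, you therefore isolate the quadratic in $E_{n-j-1},E_{n-j},E_{n-j+1}$, not the one in $E_{j-1},E_j,E_{j+1}$. Swap the two counts: differentiate $n-j-1$ times, reverse, then differentiate $j-1$ times. This gives exactly $\frac{n!}{2}\bigl(E_{j+1}t^2+2E_jt+E_{j-1}\bigr)$. The slip is harmless anyway, since $j\mapsto n-j$ permutes $\{1,\ldots,n-1\}$ and so the same family of inequalities comes out.

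Your smoothing backup is also valid. For the special case the paper actually uses, namely $E_k\le E_1^k$ alone feeding Lemma~\ref{L-sym-max}, it is arguably more economical, since it avoids Newton's inequalities entirely. Averaging only weakly increases $E_k$, though, so the compactness step needs a tie-breaker. Among the maximizers of $E_k$ on $\{x\ge 0,\ \sum_i x_i=s\}$, take one minimizing $\sum_i x_i^2$. Averaging two unequal coordinates would keep it a maximizer while strictly decreasing $\sum_i x_i^2$, so that maximizer must have all coordinates equal.
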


The following lemma now follows easily.

\begin{lemma}\label{L-sym-max}
    Let $F(x)$ be convex in $[0, \alpha]$. Then, the maximum of $E_k(x_1,\ldots,x_n)$, where $x_i=1-F(u_i)$, for $1\le i\le n$, under the constraints $0\le u_1,\ldots,u_n\le \alpha$ and $\sum_{i=1}^n u_i = s\le \alpha n$ is attained when $u_i=\frac{s}{n}$, for $i\in[n]$.
\end{lemma}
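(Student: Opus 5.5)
Combine Maclaurin's inequality (Lemma~\ref{L-Maclaurin}) with Jensen's inequality for the convex function $F$. Since $F$ is convex on $[0,\alpha]$, the map $u\mapsto 1-F(u)$ is concave there. Consider any feasible $u_1,\ldots,u_n\in[0,\alpha]$ with $\sum_i u_i=s$, and set $x_i=1-F(u_i)\ge 0$; these are nonnegative because $F$ is a cdf, so $F\le 1$. Then $E_k(x_1,\ldots,x_n)\le E_1(x_1,\ldots,x_n)^k$ by Lemma~\ref{L-Maclaurin}, and it suffices to bound $E_1$.

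Next apply Jensen's inequality to the concave function $1-F$ on $[0,\alpha]$:
\[ E_1(x_1,\ldots,x_n) \EQ \frac1n\sum_{i=1}^n \bigl(1-F(u_i)\bigr) \LE 1-F\Bigl(\frac1n\sum_{i=1}^n u_i\Bigr) \EQ 1-F\Bigl(\frac{s}{n}\Bigr)\;. \]
This is legitimate because $s/n\le\alpha$, so $s/n$ lies in the interval where convexity holds. Since $t\mapsto t^k$ is nondecreasing on $[0,\infty)$ and both sides are nonnegative, we get $E_k(x_1,\ldots,x_n)\le (1-F(s/n))^k$.

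Finally, check that this bound is attained at $u_i=s/n$ for all $i$. That point is feasible, since $0\le s/n\le\alpha$ and the coordinates sum to $s$. There all $x_i$ equal $y=1-F(s/n)$, so $E_k=\binom{n}{k}y^k/\binom{n}{k}=y^k$. The maximum therefore equals $(1-F(s/n))^k$ and is attained at the equal point. The case $k=0$ is trivial, since $E_0\equiv 1$.

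There is no real obstacle. The only point needing care is chaining the two inequalities in the right direction. Maclaurin bounds $E_k$ by a power of the mean of the $x_i$, not of the $u_i$, and this mean must then be bounded via concavity of $1-F$ together with monotonicity of $t^k$ on nonnegative reals. The hypothesis $s\le\alpha n$ is exactly what keeps the symmetric point inside the convexity region.
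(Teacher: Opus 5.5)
Your proposal is correct and follows essentially the same route as the paper: Maclaurin's inequality bounds $E_k$ by the $k$-th power of the mean of the $x_i$, concavity of $1-F$ on $[0,\alpha]$ bounds that mean by $1-F(s/n)$, and the bound is attained at the symmetric point. Your added checks are details the paper leaves implicit: that $x_i\ge 0$ so Maclaurin applies, monotonicity of $t\mapsto t^k$, and the trivial case $k=0$.
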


\begin{proof}
    Let $G(x)=1-F(x)$. As $F(x)$ is convex in $[0, \alpha]$, we get that $G(x)$ is concave in $[0, \alpha]$. Using Maclaurin's inequality (Lemma~\ref{L-Maclaurin}) and the concavity of~$G(x)$ we get:
    \[ E_k(x_1,\ldots,x_n) \LE \biggl(\frac{1}{n}\sum_{i=1}^n x_i\biggr)^k \EQ \biggl(\frac{1}{n}\sum_{i=1}^n G(u_i)\biggr)^k \LE G\biggl(\frac{\sum_{i=1}^n u_i}{n}\biggr)^k \EQ G\biggl(\frac{s}{n}\biggr)^k\;.\]
    The lemma follows as $G(\frac{s}{n})^k = E_k(1-F(\frac{s}{n}),\ldots,1-F(\frac{s}{n}))$.
\end{proof}

We are now ready to prove:

\begin{lemma}\label{L-IT-prefix}
    Let $F(x)$ be a convex function in $[0, \alpha]$, let $2\le \ell\le k$ and let $u_1,u_2,\ldots,u_\ell \ge 0$ be such that $1 - (k-\ell)\alpha \leq \sum_{i=1}^\ell u_i \leq 1$. Then, 
    
        \[
        \max_{\substack{0\leq u_{\ell + 1}, \ldots, u_{k} \leq \alpha \\ \sum_{i = 1}^ku_i = 1}}\,\,d^{\IT(f),1}_k(u_1,u_2,\ldots,u_\ell, u_{\ell + 1}, \ldots, u_k) \EQ d^{\IT(f),1}_k(u_1,u_2,\ldots,u_\ell,\underbrace{c,c,\ldots,c}_{k-\ell})\;,
        \]
    where $c=\frac{1-\sum_{i=1}^\ell u_i}{k-\ell}\le \alpha$.
    In particular, if $\sum_{i=1}^\ell u_i\ge 1- \alpha$, then
        \[ d^{\IT(f),1}_k(u_1,u_2,\ldots,u_\ell) \EQ d^{\IT(f),1}_k(u_1,u_2,\ldots,u_\ell,\underbrace{c,c,\ldots,c}_{k-\ell})\;.\]
\end{lemma}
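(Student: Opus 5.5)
The plan is to use the decomposition of Corollary~\ref{C-IT-decompose} with the split point $\ell$ given in the statement. Write $y_i=1-F(u_i)$. Then
\[ d^{\IT(f),1}_k(u_1,\ldots,u_k) \EQ f(u_1)\cdot\frac1k \sum_{\substack{0\le a<\ell\\0\le b\le k-\ell\\a+b\le k-2}} \frac{1}{\binom{k-1}{a+b}}\, e_a(y_2,\ldots,y_\ell)\, e_b(y_{\ell+1},\ldots,y_k)\;. \]
With the prefix $(u_1,\ldots,u_\ell)$ fixed, $f(u_1)$ and the factors $e_a(y_2,\ldots,y_\ell)$ are fixed nonnegative constants, since $0\le y_i\le 1$. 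The coefficients $1/\binom{k-1}{a+b}$ are positive. So the density is a nonnegative linear combination of the quantities $e_b(y_{\ell+1},\ldots,y_k)=\binom{k-\ell}{b}E_b(y_{\ell+1},\ldots,y_k)$ for $0\le b\le k-\ell$, and only these depend on the suffix.

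It therefore suffices to show that every $E_b(y_{\ell+1},\ldots,y_k)$ is maximized, over the feasible region, at the single common point $u_{\ell+1}=\cdots=u_k=c$. The feasible region is $0\le u_{\ell+1},\ldots,u_k\le\alpha$ with $\sum_{i=\ell+1}^k u_i = s := 1-\sum_{i=1}^\ell u_i$. The hypothesis $1-(k-\ell)\alpha\le\sum_{i=1}^\ell u_i\le 1$ gives $0\le s\le (k-\ell)\alpha$. So the region is nonempty and $c=s/(k-\ell)\in[0,\alpha]$ is feasible. Lemma~\ref{L-sym-max}, applied with $n=k-\ell$ and each $b\ge1$, says exactly that $E_b$ of the $x_i=1-F(u_i)$ is maximized at $u_i=s/n=c$. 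The case $b=0$ is trivial since $E_0=1$. Because the maximizer is the same point for every $b$, the whole nonnegative combination is maximized there, which gives the first equality. (If $\ell=k$ there is no suffix and the statement is trivial.)

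For the ``in particular'' part, suppose $\sum_{i=1}^\ell u_i\ge 1-\alpha$. Then any extension with $u_{\ell+1},\ldots,u_k\ge0$ summing to $s\le\alpha$ automatically has every $u_i\le s\le\alpha$. So the constrained maximum coincides with the unconstrained one in the definition of $d^{\IT(f),1}_k(u_1,\ldots,u_\ell)$. Also $s\le\alpha\le (k-\ell)\alpha$ whenever $k>\ell$, so the hypothesis of the first part holds.

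The only delicate point is checking that the nonnegativity needed for the termwise argument really holds. The $y_i$ lie in $[0,1]$ because $F$ is a cdf, so all the $e_a$ are nonnegative and the combination is monotone in each $E_b$. The other point is that Lemma~\ref{L-sym-max} yields the same maximizer for every degree $b$. That is what lets us maximize term by term instead of needing a joint argument. Beyond that, the proof is bookkeeping.
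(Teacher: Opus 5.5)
Your proposal is correct and follows the paper's proof essentially verbatim. Like the paper, you use Corollary~\ref{C-IT-decompose} to write the density as a nonnegative combination of the $e_b(y_{\ell+1},\ldots,y_k)$, then invoke Lemma~\ref{L-sym-max} to get the common maximizer $u_{\ell+1}=\cdots=u_k=c$; your explicit check of the ``in particular'' clause (when $s\le\alpha$, the constraint $u_i\le\alpha$ holds automatically) fills in a step the paper leaves implicit.
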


\begin{proof}
    By Corollary~\ref{C-IT-decompose}, we get that $d^{\IT(f),1}_k(u_1,u_2,\ldots,u_k)$ is a weighted sum, with nonnegative coefficients, of terms of the form $e_b(y_{\ell+1},\ldots,y_k)$, where $0\le b\le k-1$ and $y_i=1-F(u_i)$, for $\ell+1\le i\le k$. By Lemma~\ref{L-sym-max}, under the condition that $0\leq u_{\ell + 1}, \ldots, u_{k} \leq \alpha$, the maxima of $E_b(y_{\ell+1},\ldots,y_k)$, and hence also of $e_b(y_{\ell+1},\ldots,y_k)$, for all values of~$b$, are attained simultaneously when $u_{\ell+1}=\cdots=u_n=c=(1-\sum_{i=1}^\ell u_i)/(k-\ell)$. Hence, these are also the values of $u_{\ell+1},\ldots,u_k$ that maximize $d^{\IT(f),1}_k(u_1,u_2,\ldots,u_k)$.
\end{proof}

We next want to determine $d^{\IT(f),1}_\infty(u_1,u_1,\ldots,u_\ell)$, as in Definition~\ref{D:inf}. The expression we obtain involves Kummer's confluent hypergeometric function 
\[{}_1F_1(a;b;z) \EQ \sum_{n=0}^\infty \frac{(a)_n}{(b)_n}\frac{z^n}{n!}\;,\] also called the confluent hypergeometric function of the first kind, where $(a)_n=a(a+1)\ldots(a+n-1)$ is the rising Pochhammer symbol. (Note that $(a)_0=1$.) 

\begin{lemma}\label{L-IT-asymptotic}
    Suppose that $F(x)$ is convex in $[0,\alpha]$, let $u_1,u_2,\ldots,u_\ell \ge 0$ be such that $\sum_{i=1}^\ell u_i \leq 1$. Let $\beta=f(0)(1-\sum_{i=1}^\ell u_i)$. As before, let $y_i=1-F(u_i)$, for $i\in[2,\ell]$. Then
    \[
        \lim_{k \to \infty}d^{\IT(f),1}_k(u_1,u_2,\ldots,u_\ell,\underbrace{c,c,\ldots,c}_{k-\ell}) \EQ f(u_1) \cdot \sum_{a=0}^{\ell-1} c_{\ell,a}(\beta)\, e_a(y_2,\ldots,y_\ell)\;,\\
    \]
    where  $c=(1-\sum_{i=1}^\ell u_i)/(k-\ell)$ inside the limit expression and 
    \begin{equation}\label{eq:c_ell_a}
    c_{\ell,a}(\beta) \EQ \int_0^1 t^a (1-t)^{\ell-1-a}{\rm e}^{-\beta t}\,dt \EQ \frac{1}{(\ell-a)\binom{\ell}{a}}\cdot{}_1F_1(a+1;\ell+1;-\beta)\;.
    \end{equation}
\end{lemma}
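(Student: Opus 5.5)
Starting point: Corollary~\ref{C-IT-decompose}, applied to the point $(u_1,\dots,u_\ell,c,\dots,c)$ with $m=k-\ell$ trailing coordinates equal to $c=(1-\sum_{i\le\ell}u_i)/m$. Write $y=1-F(c)$. Then $e_b(y,\dots,y)=\binom{m}{b}y^b$, and the density equals
\[ f(u_1)\sum_{a=0}^{\ell-1} e_a(y_2,\ldots,y_\ell)\cdot \frac1k\sum_{\substack{0\le b\le m\\ a+b\le k-2}} \frac{\binom{m}{b}}{\binom{k-1}{a+b}}\,y^b . \]
So it suffices to show that for each fixed $a\in[0,\ell-1]$ the inner coefficient
\[ S_k(a)=\frac1k\sum_{b}\binom{m}{b}y^b\Big/\binom{k-1}{a+b} \]
tends to $c_{\ell,a}(\beta)$. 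The outer sum has finitely many terms, independent of $k$, so the limit passes through it.

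To evaluate $S_k(a)$ I will use the Beta-integral identity
\[ \frac{1}{k\binom{k-1}{j}}=\int_0^1 t^j(1-t)^{k-1-j}\,dt . \]
With $j=a+b$, $k-1-j=(\ell-1-a)+(m-b)$, and ignoring the single excluded term $a+b=k-1$ (i.e.\ $a=\ell-1$, $b=m$), the binomial theorem gives
\[ S_k(a)=\int_0^1 t^a(1-t)^{\ell-1-a}\sum_b\binom{m}{b}(ty)^b(1-t)^{m-b}\,dt=\int_0^1 t^a(1-t)^{\ell-1-a}\bigl(1-t(1-y)\bigr)^m dt . \]
The excluded term contributes at most $\frac1k\binom{m}{m}y^m/\binom{k-1}{k-1}\le 1/k$, so it vanishes in the limit. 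This gives an exact finite-$k$ formula, essentially the prefix version of Lemma~\ref{L-IT-int-formula}.

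The limit step is next, and it is the main point requiring care. As $k\to\infty$, $c\to0$, and for large $k$ we have $c\le\alpha$. Since $F$ is convex on $[0,\alpha]$ with $F(0)=0$, the right derivative $f(0)$ exists (taking it as the value $f(0)$ used in the statement), so $1-y=F(c)=f(0)c+o(c)$. Hence $m(1-y)=F(c)\,m\to f(0)(1-\sum_{i\le\ell}u_i)=\beta$. Therefore $(1-t(1-y))^m\to e^{-\beta t}$ uniformly for $t\in[0,1]$: it is bounded by $1$, and $m\log(1-tF(c))=-t\,mF(c)+O(mF(c)^2)$. Dominated convergence then gives $S_k(a)\to\int_0^1 t^a(1-t)^{\ell-1-a}e^{-\beta t}\,dt$. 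The degenerate case $\sum_{i\le\ell}u_i=1$ gives $c=0$, $y=1$, $\beta=0$, and is consistent.

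Finally, the hypergeometric form follows from Euler's integral representation
\[ {}_1F_1(p;q;z)=\frac{\Gamma(q)}{\Gamma(p)\Gamma(q-p)}\int_0^1 e^{zt}t^{p-1}(1-t)^{q-p-1}\,dt \]
with $p=a+1$, $q=\ell+1$, $z=-\beta$. The prefactor is
\[ \frac{a!\,(\ell-1-a)!}{\ell!}=\frac{1}{(\ell-a)\binom{\ell}{a}}, \]
which matches \eqref{eq:c_ell_a}. Alternatively, one can expand $e^{-\beta t}$ termwise and sum the resulting Beta integrals.
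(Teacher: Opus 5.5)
Your proposal is correct and follows essentially the same route as the paper: apply Corollary~\ref{C-IT-decompose} with $k-\ell$ equal trailing coordinates, use the Beta-integral identity for $1/\binom{k-1}{a+b}$ and the binomial theorem to collapse the sum over $b$ into $\int_0^1 t^a(1-t)^{\ell-1-a}\bigl(1-tF(c)\bigr)^{k-\ell}\,dt$, pass to $\ee^{-\beta t}$, and identify the result via Euler's integral for ${}_1F_1$. The differences are only added rigor: you keep $F(c)$ exact rather than substituting $\beta/k$ early, bound the dropped $a+b=k-1$ term by $1/k$, and justify the limit through $(k-\ell)F(c)\to\beta$ and dominated convergence, all of which the paper handles informally.
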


\begin{proof}
    By Corollary~\ref{C-IT-decompose} we have:
    \[ d^{\IT(f),1}_k(u_1,u_2,\ldots,u_k) \EQ f(u_1) \cdot \frac{1}{k} \sum_{\substack{0\le a<\ell\\0\le b\le k-\ell\\a+b\le k-2}} \frac{1}{\binom{k-1}{a+b}} e_a(y_2,\ldots,y_\ell)\cdot e_b(y_{\ell+1},\ldots,y_k) \;.\]
    As it makes no asymptotic difference, we ignore the condition $a+b\le k-2$. (Note that the conditions on~$a$ and $b$ imply that $a+b\le k-1$.) 
    It follows that 
    \begin{align*}
        &\,d^{\IT(f),1}_k(u_1,u_2,\ldots,u_\ell,\underbrace{c,c,\ldots,c}_{k-\ell}\,) \\
        \EQ&\,f(u_1) \cdot \frac{1}{k}\sum_{\substack{0\le a<\ell\\0\le b\le k-\ell}} \frac{1}{\binom{k-1}{a+b}} e_a(y_2,\ldots,y_\ell)\, \cdot \binom{k-\ell}{b}(1-F(c))^b\\
        \EQ&\,f(u_1) \cdot \sum_{a=0}^{\ell-1}\underbrace{\left[\frac{1}{k}\sum_{b=0}^{k-\ell} \frac{\binom{k-\ell}{b}}{\binom{k-1}{a+b}}(1-F(c))^b\right]}_{c_{\ell,a}(k,c)} e_a(y_2,\ldots,y_\ell)\;.
    \end{align*}
    As $c=\frac{1-\sum_{i=1}^\ell u_i}{k-\ell}$, we have $F(c)\approx f(0)\frac{1-\sum_{i=1}^\ell u_i}{k-\ell}\approx \frac{\beta}{k}$ as $k\to\infty$. We need to determine the limits of the coefficients
    \[ c_{\ell,a}(\beta) \EQ \lim_{k\to\infty} c_{\ell,a}(k,c)
    \EQ \lim_{k\to\infty} \frac{1}{k}\sum_{b=0}^{k-\ell} \frac{\binom{k-\ell}{b}}{\binom{k-1}{a+b}}\left(1-\frac{\beta}{k}\right)^b\;.\]
    It is known that
    \[ \frac{1}{\binom{n}{k}} \EQ (n+1)\int_0^1 t^k(1-t)^{n-k}\,dt\;.\]
    Thus,
    \[ \frac{1}{k}\sum_{b=0}^{k-\ell} \frac{\binom{k-\ell}{b}}{\binom{k-1}{a+b}}\left(1-\frac{\beta}{k}\right)^b
    \EQ \sum_{b=0}^{k-\ell} \binom{k-\ell}{b} \left(1-\frac{\beta}{k}\right)^b \int_0^1 t^{a+b}(1-t)^{k-1-a-b}\,dt\]
    \[ \EQ \int_0^1 t^a (1-t)^{k-1-a} \sum_{b=0}^{k-\ell} \binom{k-\ell}{b} \left(\frac{t(1-\frac{\beta}{k})}{1-t}\right)^b\,dt 
    \EQ \int_0^1 t^a (1-t)^{k-1-a} \left(1+\frac{t(1-\frac{\beta}{k})}{1-t}\right)^{k-\ell}dt\]
    \[ \EQ \int_0^1 t^a (1-t)^{k-1-a} \left(\frac{1-\frac{\beta t}{k}}{1-t}\right)^{k-\ell}dt
    \EQ \int_0^1 t^a (1-t)^{\ell-1-a} \left(1-\frac{\beta t}{k}\right)^{k-\ell}dt \;.\]
    As $k\to\infty$, the above expression is asymptotically
    \[ \int_0^1 t^a (1-t)^{\ell-1-a} \ee^{-\beta t}\,dt \EQ B(a+1,\ell-a)\cdot{}_1F_1(a+1;\ell+1;-\beta) \EQ \frac{1}{(\ell-a)\binom{\ell}{a}}\cdot{}_1F_1(a+1;\ell+1;-\beta)\;, \]
    where $B(a+1,\ell-a)=\frac{\Gamma(a+1)\Gamma(\ell-a)}{\Gamma(\ell+1)}=\frac{a! (\ell-a-1)!}{\ell!}$ is the beta function. The next to last equation is a known identity involving the function ${}_1F_1(a+1;\ell+1;-\beta)$, see \cite[Eq.~(13.4.1)]{DLMF-13}.
\end{proof}

\begin{corollary}
Let $F(x)$ be a convex function in $[0, \alpha]$, let $2\le \ell\le k$ and let $u_1,u_2,\ldots,u_\ell \ge 0$ be such that $1 - (k-\ell)\alpha \leq \sum_{i=1}^\ell u_i \leq 1$. Then, 
    
        \[
        \max_{\substack{0\leq u_{\ell + 1}, \ldots, u_{k} \leq \alpha \\ \sum_{i = 1}^ku_i = 1}}\,\,d^{\IT(f),1}_k(u_1,u_2,\ldots, u_k) \LE  f(u_1) \cdot \sum_{a=0}^{\ell-1} c_{\ell,a}(\beta)\, e_a(y_2,\ldots,y_\ell)\;,\\
        \]
    where $c_{\ell, a}(\beta)$ is defined as in \eqref{eq:c_ell_a}. In particular, if $\sum_{i=1}^\ell u_i \geq 1 - \alpha$, then
    \[ 
    \displaystyle
    d^{\IT(f),1}_\infty(u_1,u_2,\ldots,u_\ell) \EQ f(u_1) \cdot \sum_{a=0}^{\ell-1} c_{\ell,a}(\beta)\, e_a(y_2,\ldots,y_\ell)\;.
    \]   
\end{corollary}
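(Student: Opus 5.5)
The plan is to combine Lemma~\ref{L-IT-prefix} with an exact, non-asymptotic version of the computation in the proof of Lemma~\ref{L-IT-asymptotic}. The hypothesis $1-(k-\ell)\alpha\le\sum_{i=1}^\ell u_i\le 1$ is exactly what Lemma~\ref{L-IT-prefix} needs. So the maximum on the left-hand side equals $d^{\IT(f),1}_k(u_1,\ldots,u_\ell,c,\ldots,c)$ with $c=(1-\sum_{i=1}^\ell u_i)/(k-\ell)\le\alpha$. It therefore suffices to show that this quantity is at most $f(u_1)\sum_{a=0}^{\ell-1} c_{\ell,a}(\beta)\,e_a(y_2,\ldots,y_\ell)$ for every fixed $k$, not only in the limit.

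First, I drop the constraint $a+b\le k-2$ in Corollary~\ref{C-IT-decompose}. All terms are nonnegative, so this can only increase the expression. Next, I repeat the manipulation in the proof of Lemma~\ref{L-IT-asymptotic}, but with $x=1-F(c)$ kept exact instead of replaced by $1-\beta/k$. Using the beta-integral identity for $1/\binom{n}{k}$ and the binomial theorem, the same algebra gives exactly
\[ c_{\ell,a}(k,c) \EQ \int_0^1 t^a(1-t)^{\ell-1-a}\bigl(1-tF(c)\bigr)^{k-\ell}\,dt\;. \]

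The key step, and the only place convexity enters beyond Lemma~\ref{L-IT-prefix}, is the comparison $(1-tF(c))^{k-\ell}\le \ee^{-\beta t}$ for $t\in[0,1]$. Since $F$ is convex on $[0,\alpha]$ with $F(0)=0$, it lies above its tangent line at $0$. Interpreting $f(0)$ as the right derivative of $F$ at $0$, this gives $F(c)\ge f(0)\,c$ for $c\in[0,\alpha]$. Note also that $0\le tF(c)\le 1$. Hence
\[ (1-tF(c))^{k-\ell}\LE(1-t f(0)c)^{k-\ell}\LE \exp\bigl(-t f(0)c(k-\ell)\bigr)\EQ \ee^{-\beta t}, \]
because $c(k-\ell)=1-\sum_{i=1}^\ell u_i$. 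Integrating against the nonnegative weight $t^a(1-t)^{\ell-1-a}$ gives $c_{\ell,a}(k,c)\le c_{\ell,a}(\beta)$. Multiplying by $f(u_1)\,e_a(y_2,\ldots,y_\ell)\ge 0$ and summing over $a$ yields the first claim. The main point to get right is this pointwise inequality: the naive asymptotic $F(c)\approx\beta/k$ must be replaced by the one-sided convexity bound, which happens to go in the right direction.

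For the ``in particular'' part, assume $\sum_{i=1}^\ell u_i\ge 1-\alpha$. Then every admissible suffix automatically has all $u_i\le\alpha$, for every $k$, so $d^{\IT(f),1}_k(u_1,\ldots,u_\ell)$ equals the maximum bounded above. Taking the supremum over $k$ gives $d^{\IT(f),1}_\infty(u_1,\ldots,u_\ell)\le f(u_1)\sum_a c_{\ell,a}(\beta)e_a(y_2,\ldots,y_\ell)$. Conversely, $d^{\IT(f),1}_k(u_1,\ldots,u_\ell)\ge d^{\IT(f),1}_k(u_1,\ldots,u_\ell,c,\ldots,c)$. By Lemma~\ref{L-IT-asymptotic}, the right-hand side tends to the same expression as $k\to\infty$. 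Since $d^{\IT(f),1}_\infty$ is that limit (Definition~\ref{D:inf}), equality follows.
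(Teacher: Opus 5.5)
Your proof is correct, but it reaches the main inequality by a different route than the paper. The paper also starts from Lemma~\ref{L-IT-prefix}. It then observes that the maximum, and hence $d^{\IT(f),1}_k(u_1,\ldots,u_\ell,c,\ldots,c)$, is non-decreasing in $k$, since the feasible set effectively grows when a zero coordinate is appended. Each finite-$k$ value is therefore at most the limit, and Lemma~\ref{L-IT-asymptotic} identifies that limit as $f(u_1)\sum_a c_{\ell,a}(\beta)\,e_a(y_2,\ldots,y_\ell)$.

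You avoid monotonicity and limits altogether for the upper bound. You redo the beta-integral computation exactly to get $\int_0^1 t^a(1-t)^{\ell-1-a}(1-tF(c))^{k-\ell}\,dt$. You then dominate the integrand by $e^{-\beta t}$, using $F(c)\ge f(0)\,c$ (convexity with $F(0)=0$) together with $1-y\le e^{-y}$.

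What this buys you is an explicit, non-asymptotic bound valid for each fixed $k$. It also makes the upper bound independent of the heuristic steps in the proof of Lemma~\ref{L-IT-asymptotic}: replacing $F(c)$ by roughly $\beta/k$, and discarding the constraint $a+b\le k-2$. You show that both of these err in the safe direction.

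The price is a second use of convexity, and you must read $f(0)$ as (at most) the right derivative of $F$ at $0$. That reading is harmless under $\alpha$-compliance, where $f$ is non-decreasing on $[0,\alpha]$ and $F(c)\ge f(0)\,c$ is immediate. For the reverse inequality in the ``in particular'' part you still need the limit from Lemma~\ref{L-IT-asymptotic}, exactly as the paper does.
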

\begin{proof}
    By Lemma~\ref{L-IT-prefix}, we have for $c=(1-\sum_{i=1}^\ell u_i)/(k-\ell)$,
        \[
        \max_{\substack{0\leq u_{\ell + 1}, \ldots, u_{k} \leq \alpha \\ \sum_{i = 1}^ku_i = 1}}\,\,d^{\IT(f),1}_k(u_1,u_2,\ldots, u_k) \EQ d^{\IT(f),1}_k(u_1,u_2,\ldots,u_\ell,\underbrace{c,c,\ldots,c}_{k-\ell})\;,
        \]
    In particular, the expression on the right hand side is non-decreasing as $k$ grows. It then follows from Lemma~\ref{L-IT-asymptotic} that
    \begin{align*}
        & \, \max_{\substack{0\leq u_{\ell + 1}, \ldots, u_{k} \leq \alpha \\ \sum_{i = 1}^ku_i = 1}}\,\,d^{\IT(f),1}_k(u_1,u_2,\ldots,u_\ell, u_{\ell + 1}, \ldots, u_k) \\
        \LE &\, \lim_{k \to \infty}d^{\IT(f),1}_k(u_1,u_2,\ldots,u_\ell,\underbrace{c,c,\ldots,c}_{k-\ell}) \\
        \EQ &\, f(u_1) \cdot \sum_{a=0}^{\ell-1} c_{\ell,a}(\beta)\, e_a(y_2,\ldots,y_\ell)\;.
    \end{align*}
    
    If $\sum_{i=1}^\ell u_i \geq 1 - \alpha$, then any new coordinate after $u_\ell$ must be at most $\alpha$, so
    \[
    d^{\IT(f),1}_\infty(u_1,u_2,\ldots,u_\ell) \EQ \lim_{k \to \infty} d^{\IT(f),1}_k(u_1,u_2,\ldots,u_\ell) \EQ f(u_1) \cdot \sum_{a=0}^{\ell-1} c_{\ell,a}(\beta)\, e_a(y_2,\ldots,y_\ell)\;.\qedhere
    \]
\end{proof}

Regarding the coefficients $c_{\ell, a}$, it is interesting to note that $c_{\ell,a}(\beta)=\E[\ee^{-\beta T}]$, where $T\sim \text{Beta}(a+1,\ell-a)$ is a beta random variable. We have no intuitive explanation of this identity. As special cases we get:
\[
c_{2,0}(\beta) \EQ \int_0^1 (1-t)\ee^{-\beta t}\,dt \EQ \frac{\ee^{-\beta}-1+\beta}{\beta^{2}}
\quad,\quad
c_{2,1}(\beta) \EQ \int_0^1 t\,\ee^{-\beta t}\,dt \EQ \frac{1-(1+\beta)\ee^{-\beta}}{\beta^{2}} \;.
\]
This matches the results of \cite{KKSTY04} for $\ell=2$.

By manipulating the expression obtained in Lemma~\ref{L-IT-asymptotic} it is possible to also obtain the following finite formula that does not explicitly involve the hypergeometric function:
\[ c_{\ell,a}(\beta) \EQ \sum_{j=0}^{\ell-a-1}(-1)^j\binom{\ell-a-1}{j}\,
\frac{(a+j)!}{\beta^{\,a+j+1}}\,
\Bigl(1-e^{-\beta}\sum_{r=0}^{a+j}\frac{\beta^r}{r!}\Bigr) \;.\]

\section{Discovery of Combined Rounding Schemes}\label{sec:minimax}

Now that we have discussed in detail the properties of our basic schemes, we turn to discuss how they can be combined to produce a mixture which beats the state of the art for \MC. The purpose of this section is to describe the high-level details of the computational procedure we used to discover the rounding schemes needed to prove Theorem~\ref{thm:main} and Theorem~\ref{thm:finite}. The rounding schemes found are described in Section~\ref{sec:new-schemes}.

\subsection{Zero-sum Games and the Minimax Theorem}

The main inspiration for our discovery procedure is a recent algorithm developed by Brakensiek, Huang, Potechin, and Zwick~\cite{BHPZ23} for discovering new rounding schemes for MAX DI-CUT and for related problems. However, the exact settings are quite different. In particular, the MAX DI-CUT rounding schemes round solutions of an SDP relaxation, while here we round solutions of an LP relaxation. What the two problems have in common is that they can both be thought of as $2$-player $0$-sum games.

In the context of \MC, we think of the two players as a \emph{cut player} and an \emph{edge player}. To make the game finite, assume the cut player has a finite list $\bR = \{R_1, \hdots, R_n\}$ of rounding schemes for \MC\ (in our case these will be the aforementioned \emph{basic} rounding schemes), and that the edge player has a finite set $\bU = \{\bu_1, \hdots, \bu_m\} \subset \Delta_k$ of simplex points.\footnote{We call the player the edge player and not the vertex player as the simplex points represent infinitesimal $(1,2)$-aligned edges.} We assume for now that~$k$, the number of terminals, is fixed. (For a discussion of the underlying infinite game, see Section~\ref{sec:discussion}.) The game proceeds with the cut player picking $i \in [n]$ and the edge player picking $j \in [m]$ simultaneously. The payoff for the edge player is then $d^{R_i}_k(\bu_j)$ and the payoff for the cut player is $-d^{R_i}_k(\bu_j)$. In other words, the cut player seeks to \emph{minimize} $d^{R_i}_k(\bu_j)$ while the edge player seeks to \emph{maximize} $d^{R_i}_k(\bu_j)$.

From the classical theory of $0$-sum games~\cite{neumann1928theorie}, we know that the game has a \emph{value} and a \emph{Nash equilibrium}~\cite{nash1951non} of mixed strategies. That is, there exists probability distribution $(p_1, \hdots, p_n)$ and $(q_1, \hdots, q_m)$ such that
\begin{align}
    \sum_{i=1}^n \sum_{j=1}^m p_iq_j\, d^{R_i}_k(\bu_j) \EQ \min_{i \in [n]} \sum_{j=1}^m q_j\, d^{R_i}_k(\bu_j) \EQ \max_{j \in [m]} \sum_{i=1}^n p_i\, d^{R_i}_k(\bu_j)\;.\label{eq:minimax}
\end{align}
The common value above is called the \emph{value} of the zero-sum game, which we denote by $\val(\bR, \bU)$. The distributions $(p_1, \hdots, p_n)$ and $(q_1, \hdots, q_m)$ are optimal mixed strategies of the two players. The value and the optimal strategies can be found efficiently by solving a linear program.

In general, $\val(\bR,\bU)$ gives neither an upper nor a lower bound on the approximation ratio of the \MC\ problem, since $\bR$ and $\bU$ may not include useful rounding schemes and simplex points.  We could, in principle, get an approximate upper bound on the ratio by choosing $\bU$ to be an $\eps$-net of the simplex. This, however, is impractical, since the size of such an $\eps$-net is $O(\eps^{1-k})$. Our approach, following \cite{BHPZ23}, is to adaptively grow~$\bR$ and $\bU$ until the value of the game cannot be changed by much by adding new rounding schemes or simplex points.

The discovery process is used to produce \emph{candidate} mixtures of rounding schemes that hopefully have good approximation ratios. These mixtures are subjected to a rigorous verification as described in Section~\ref{sec:verification}. There are thus no correctness concerns when it comes to the discovery process.

\subsection{Finding New Basic Rounding Schemes}

Let $\bR := \{R_1, \hdots, R_n\}$ and $\bU := \{\bu_1, \hdots, \bu_m\}$ be finite collections of basic rounding schemes and simplex points. Let $(p_1, \hdots, p_n)$ and $(q_1, \hdots, q_m)$ be optimal distributions of the two players, and let~$v$ be the value of the game. It is clear that using only a mixture of $\bR$ we cannot obtain an approximation ratio better than~$v$. We would thus like to find a new basic rounding scheme $R'$ whose addition to $\bR$ will enable us to obtain a better approximation ratio. Ideally, we would like to find a rounding scheme $R'$ that minimizes $\sum_{j=1}^m q_j\,d^R_k(\bu_j)$. 

Even if we restrict ourselves to finding the best rounding scheme $R'$ from one of the studied families of rounding schemes, this seems to be a daunting task. Suppose, for example, that we want to find the best GKT scheme. We then seek to find a density function $f$ for which $\sum_{j=1}^m q_j\,d^{\KT(f)}_k(\bu_j)$ is minimized.
This is a \emph{calculus of variations} problem as we are optimizing over functions rather than points in $\RR^n$ for some finite~$n$. Solving such problems, even approximately, seems out of reach at present.

To make the optimization problem more manageable, we need to restrict the search to density functions from fairly restricted families of functions. We chose to work with the family of \emph{piecewise linear} density functions composed of a limited number of pieces. A function $f:[0,1]\to\RR^+$ is piecewise linear if there exists $0=x_0<x_1<\dots x_r=1$ such that $f(x)$ is linear in each interval $[x_{i-1},x_i]$, for $i\in [r]$. If we let $y_i=f(x_i)$, for $i\in[0,r]$, then the $k+1$ pairs $(x_0,y_0),(x_1,y_1),\ldots,(x_r,y_r)$ determine the function exactly. For technical reasons, we also assume that $x_i-x_{i-1}\ge \delta$, for $i\in[r]$, for some positive parameter $\delta>0$. As~$r$ increases, and $\delta$ decreases, this provides better and better approximations of all continuous density functions. We are, however, restricted to work with fairly small values of~$r$.

Finding the piecewise linear function~$f$ with at most $r$ pieces that minimizes $\sum_{j=1}^m q_j\,d^{\KT(f)}_k(\bu_j)$ is still a hard non-convex optimization problem. That said, we are now optimizing over a finite number of variables and can thus use standard numerical techniques to try to find at least a local minimum.

An important aspect of our approach is that we are not dependent on finding the best density function $f$. Rather, we are making progress as long as we manage to find some density function~$f$ for which $\sum_{j=1}^m q_j\,d^{\KT(f)}_k(\bu_j)$ is smaller than the current value of the game. To find such a function we perform many numerical searches starting from various initial points, including, in particular, random initial points. 

To speed-up the verification process, the schemes to be verified should satisfy a condition that we call $\alpha$-compliance with $\alpha$ as large as possible. (For the details, see Section~\ref{sec:verification}.)

\begin{definition}[$\alpha$-compliance]\label{D:compliance}
    Let $\alpha \in (0, 1]$. We say that a mixture $\mathcal{R}$ over GKT, IT, ST, and DT rounding schemes is $\alpha$-compliant, if the following holds:
            
        \begin{itemize}
            \item For every $\IT(f)$ in the support of $\mathcal{R}$, $f$ is non-decreasing on $[0, \alpha]$;
            \item For every $\KT(f)$ in the support of $\mathcal{R}$, $f$ is constant and non-zero on $[0, \alpha]$.
        \end{itemize}
\end{definition}

It is of course easy to restrict the search new rounding functions so that the resulting schemes would be $\alpha$-compliant. In our schemes we typically have $\alpha=0.25$ or $\alpha=0.15$. 

\subsection{Finding New Simplex Points}

In our rounding scheme discovery, we do not want to \emph{only} find new basic rounding schemes,  as our finite sample of simplex points $\bU$ may be insufficient to accurately approximate the approximation ratio of the optimal mixture of schemes in $\bR$. To mitigate this, we optimize over potential simplex points $\bu \in \Delta_k$ to add to $\bU$.

More formally, we seek to solve the following optimization problem: $\max_{\bu \in \Delta_k} \sum_{i=1}^n p_i\,d^{R_i}_k(\bu)$. Unlike the optimization to find new basic rounding schemes, this is a rather standard non-linear optimization problem, so off-the-shelf methods suffice to find local maxima $\bu$.

However, unlike the task of finding new rounding schemes, to eventually certify the correctness of our rounding scheme (see Section~\ref{sec:verification}), we need much more assurance that there is not a significant global optimum we are missing. Toward this, we use a few strategies. First, we run the local optimizer from many more random points than we do for finding random schemes. Second, we also use all existing points in $\bU$ as starting points. The reason for this is that if we have been running many rounds of optimization already, the point in $\bU$ are likely close to (near) global optima, but the optima may have shifted slightly due to the addition of new schemes in $\bR$.

Another important strategy is to not search blindly over all $\bu \in \Delta_k$, but rather the inequalities in Section~\ref{sec:further} suggest that the optimal $\bu \in \Delta_k$ often have many coordinates of $\bu$ equal. By imposing constraints that some of the coordinates of $\bu$ must be equal, we can greatly reduce the dimension of our optimization, making it more likely we find global optima.

The discussion so far assumed that~$k$ is fixed, and rather small. Additional complications arise when~$k$ is arbitrary. In this case, we work with prefixes of simplex points, as explained in Section~\ref{sec:further}. More precisely, we use $\bu=(u_1,u_2,\dots,u_\ell)$, for $\ell\ge 2$, to represent a point $\bu=(u_1,u_2,\dots,u_\ell,c,c,\dots,c)$, where $c=(1-\sum_{i=1}^\ell u_i)/(k-\ell)$ and $k\to\infty$. This prefixed-based approach allows us to compactly represent the most difficult simplex points for our rounding schemes.

\subsection{The Heuristic Discovery Algorithm}

\newcommand{\AlgoComment}[2][1]{%
  \Statex\hspace*{#1\algorithmicindent}~#2%
}

\newlength{\algtmp}
\newcommand{\AlgTightSkip}[1][.3ex]{%
  \setlength{\algtmp}{\itemsep}%
  \Statex\vspace{-\algtmp}\vspace{#1}%
}

\begin{algorithm}[t]
    \caption{The Discovery algorithm for a fixed $k$}\label{alg:discover}
    \begin{algorithmic}[1]
        \AlgTightSkip[5pt] 
        \State \textbf{Input:} Initial collection of basic rounding schemes~$\bR_0$, an initial collection of simplex points~$\bU_0$, initial and final precision parameters $\eps_0$ and $\eps_1$, and a scale factor $\gamma$.
        \AlgTightSkip[-10pt] 
        \Procedure{Discover}{$\bR_0,\bU_0,\eps_0,\eps_1,\gamma$}
            \AlgTightSkip[-10pt] 
            \State $\bR,\bU\gets \bR_0,\bU_0$
            \State $v,p,q\gets \textsc{SolveGame}(\bR,\bU)$
            \State $\eps\gets \eps_0$
            \AlgTightSkip[-10pt] 
            \While {$\eps\ge\eps_1$}
                \AlgoComment[2]{\(\triangleright\) Discover new basic rounding schemes}
                \For {$\RT\in\{\ST,\KT,\IT,\DT\,\}$}
                    \For {$i\gets 1$ to $N$}
                        \State $\flag\gets \textbf{false}$
                        \For {$j\gets 1$ to $M$}
                            \State $f,v'\gets \textsc{NewScheme}(\RT,\bU,q)$
                            \If {$v'\le v-\eps$}
                                \State $\bR\gets \bR\cup\{RT(f)\}$
                                \State $v,p,q\gets\textsc{SolveGame}(\bR,\bU)$
                                \State $\flag\gets \textbf{true}$
                                \State \textbf{break}
                            \EndIf
                        \EndFor
                        \If {\textbf{not} $\flag$}
                            \State \textbf{break}
                        \EndIf
                    \EndFor
                \EndFor
                \AlgoComment[2]{\(\triangleright\) Discover new simplex points}                    
                \For {$i\gets 1$ to $N$}
                    \State $\flag\gets \textbf{false}$
                    \For {$j\gets 1$ to $M$}
                        \State $\bu,v'\gets \textsc{NewPoint}(\bR,p)$
                        \If {$v'\ge v+\eps$}
                            \State $\bU\gets \bU\cup\{\bu\}$
                            \State $v,p,q\gets\textsc{SolveGame}(\bR,\bU)$
                            \State $\flag\gets \textbf{true}$
                            \State \textbf{break}
                        \EndIf
                    \EndFor
                    \If {\textbf{not} $\flag$}
                        \State \textbf{break}
                    \EndIf
                \EndFor
                \AlgTightSkip[-10pt] 
                \State $\eps\gets \gamma \eps$
            \EndWhile
            \State \Return $v,(\bR,p),(\bU,q)$
        \EndProcedure
    \end{algorithmic}

\end{algorithm}

We now put all the pieces together and describe the algorithm used to discover our improved rounding schemes. We assume that we have the following three basic functions:
\begin{itemize}
    \item[] \textsc{SolveGame}$(\bR, \bU)$ -- Solves the game defined by $\bR$ and $\bU$ by solving a linear program. Returns a triplet $v,p,q$, where $v$ is the value of the game and $p$ and $q$ are the optimal strategies of the two players.
    \item[] \textsc{NewScheme}$(\RT,\bU,q)$ -- Tries to find a piecewise linear density function~$f$ composed of at most~$r$ pieces that minimizes $\sum_{j=1}^m q_j\,d^{\RT(f)}_k(\bu_j)$, where $\RT\in\{\ST,\KT,\IT,\DT\,\}$. Returns a pair $f,v$, where $f$ is the function found and $v$ is its value.
    \item[] \textsc{NewPoint}$(\bR,p)$ -- Tries to find a simplex point $\bu$ that maximizes $\sum_{i=1}^n p_i\,d^{R_i}_k(\bu)$. Returns a pair $\bu,v$, where $\bu$ is the point found and $v$ is its value.
\end{itemize}

The discovery algorithm \textsc{Discover}$(\bR_0,\bU_0,\eps_0,\eps_1,\gamma)$, whose pseudocode is given in Algorithm~\ref{alg:discover}, receives an initial collection $\bR_0$ of basic rounding schemes, an initial collection $\bU_0$ of simplex points, initial and final precision parameters $\eps_0$ and $\eps_1$, and a scale factor~$\gamma$. The algorithm also receives many other parameters that for simplicity are not shown explicitly. Three of these parameters are~$\alpha$ - the compliance parameter, $r$ - the maximum number of pieces that the piecewise linear density functions may have, and~$\delta$ - the minimum length of a piece.

The algorithm works in rounds. In each round it first tries to find new basic rounding schemes. More precisely, it tries to add up to~$N$ rounding schemes from each one of the four basic families. To find each such scheme it makes at most~$M$ calls to \textsc{NewScheme}, where $N$ and $M$ are some parameters. A new scheme is added only if it improves the value by at least~$\eps$. Next, the algorithm tries to add up to~$N$ new simplex points. To find each such simplex points it makes at most~$M$ calls to \textsc{NewPoint}. \textsc{Discover} also tries to add simplex points by starting searches from existing simplex points. This is not shown explicitly in the pseudocode. At the end of each round, $\eps$ is scalled down by a factor of~$\gamma$. When $\eps<\eps_1$, the algorithm returns $v$, $(\bR,p)$ and $(\bU,q)$. If $\eps_1$ is small enough, and~$N$ and~$M$ are large enough, then $\cR=\sum_{i=1}^n p_i R_i$ is a rounding scheme likely to give an approximation ratio not much larger than $v+\eps_1$. This is then verified rigorously. 

\subsection{Implementation Details}

Our implementation of \textsc{Discover} was done in Python. \textsc{SolveGame} solves the linear programs using Gurobi \cite{gurobi}, with the help of an API wrapper provided by CVXPY~\cite{diamond2016cvxpy}. The heuristics for finding new distributions \textsc{NewScheme} and \textsc{NewPoint} were implemented using the \textsc{scipy.optimize} library~\cite{2020SciPy-NMeth}. Of the heuristic provided by SciPy, ``Sequential Least Squares Programming (SLSQP)"~\cite{kraft1988software,lawson1995solving,nocedal2006numerical} was the most effective for solving our non-linear optimization problems. We also note that more than 99\% of the running time was spent looking for new rounding schemes and new simplex points. Less than 1\% of the running time was spent on solving linear programs.

Finding each new rounding scheme involved several runs of \textsc{Discover}, each with some adjustment of the parameters. Each one of these runs could of course start with the rounding scheme produced by the previous run. Most of these runs took more than a day.

The current implementation of \textsc{Discover} is sequential, as we resolve the game whenever a new basic rounding scheme or a new simplex point is added. We may try to experiment with a parallel implementation of \textsc{Discover} in which the game is resolved only after finding the $N$ new basic rounding schemes, or the $N$ new simplex points in each round.

\section{The New Rounding Schemes}\label{sec:new-schemes}

\begin{figure}[t]
    \centering
    \includegraphics[scale=0.5]{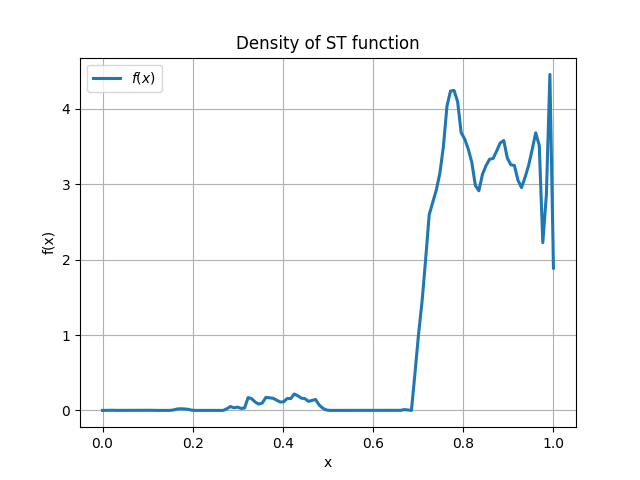}
    \includegraphics[scale=0.5]{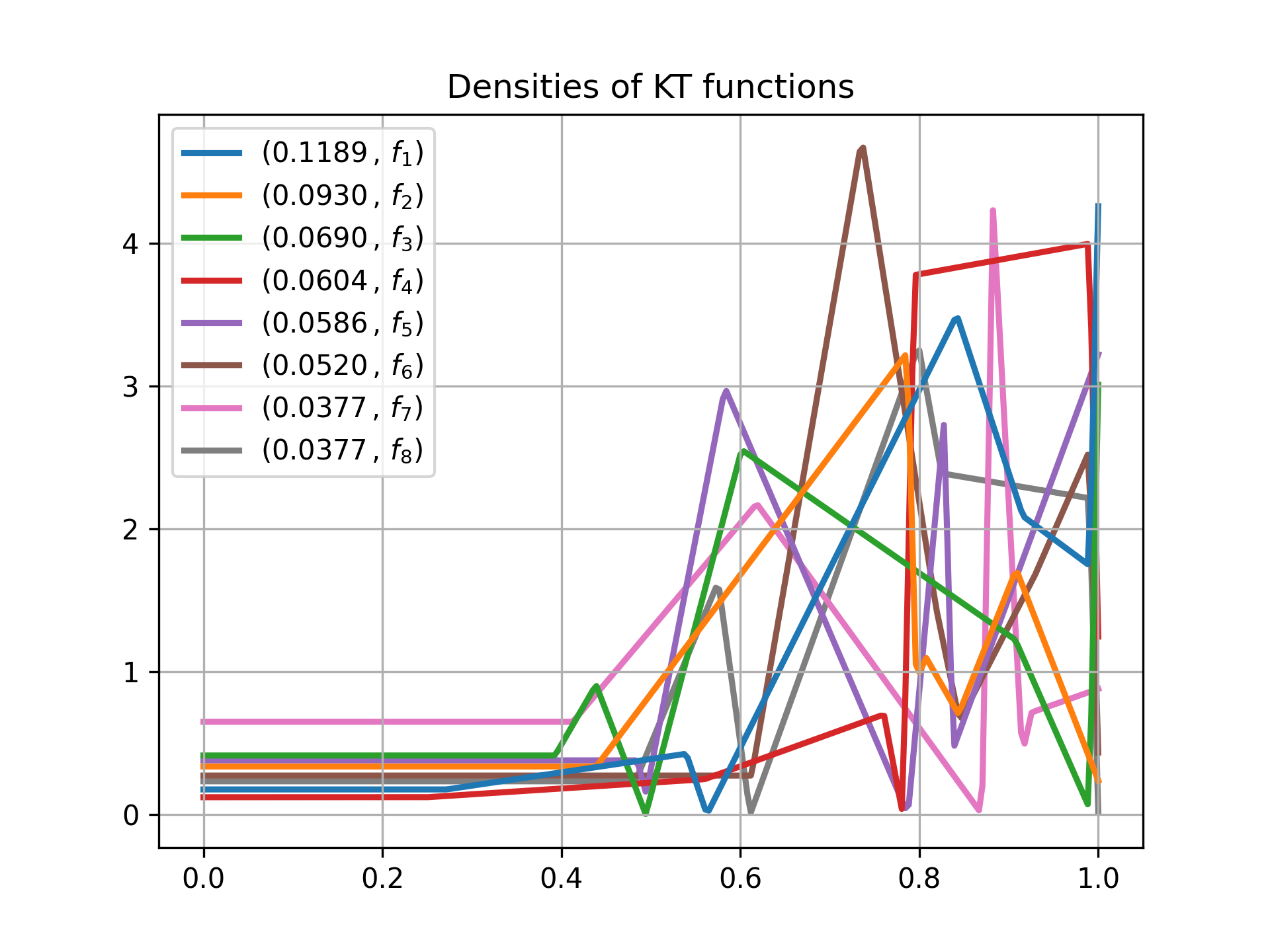}
    \includegraphics[scale=0.5]{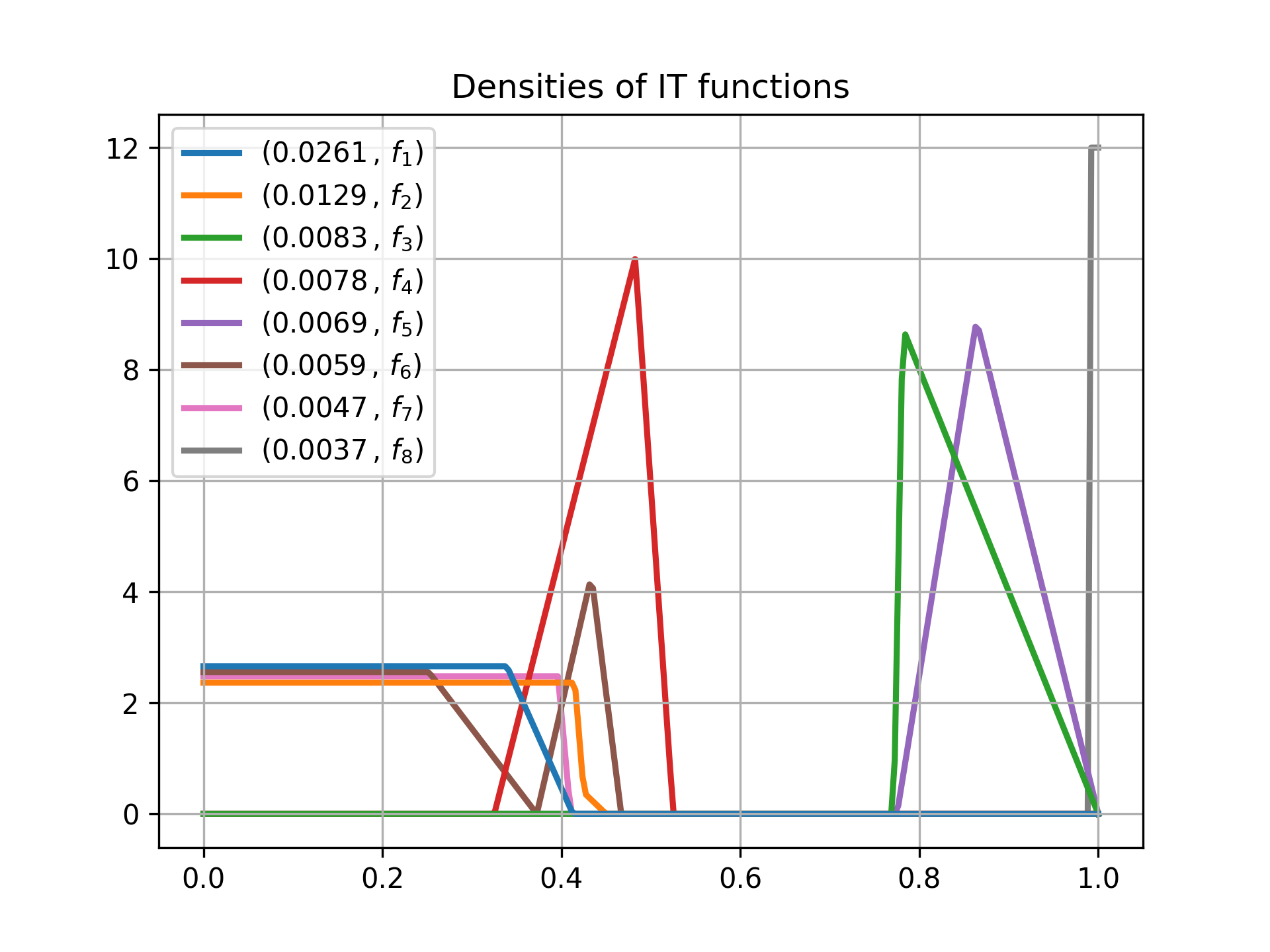}
    \includegraphics[scale=0.5]{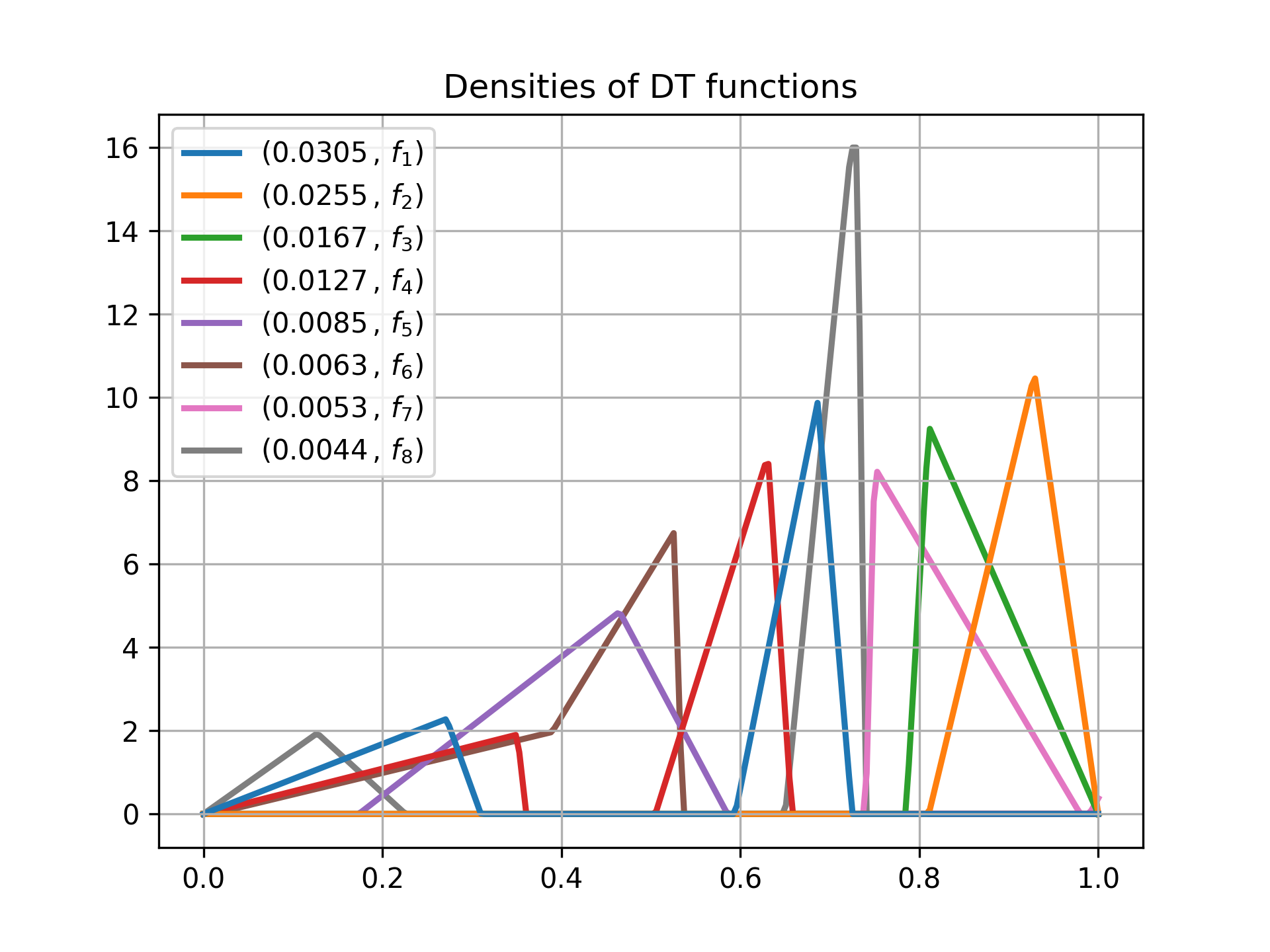}
    \caption{The density functions of the ST, KT, IT and DT schemes used in mixture that obtains an approximation ratio of~$\ourAPX$ for an arbitrary number of terminals. For ST, the combined density of all ST schemes is given. For each one of KT, IT and DT, the densities of the 8 rounding schemes of each family used with the largest probabilities are shown. Also shown are the probabilities with which each one of these rounding schemes is used. Plots in this paper are made with Matplotlib~\cite{Hunter:2007}.}
    \label{F-densities}
\end{figure}

\begin{figure}[t]
    \centering
    \includegraphics[scale=0.4]{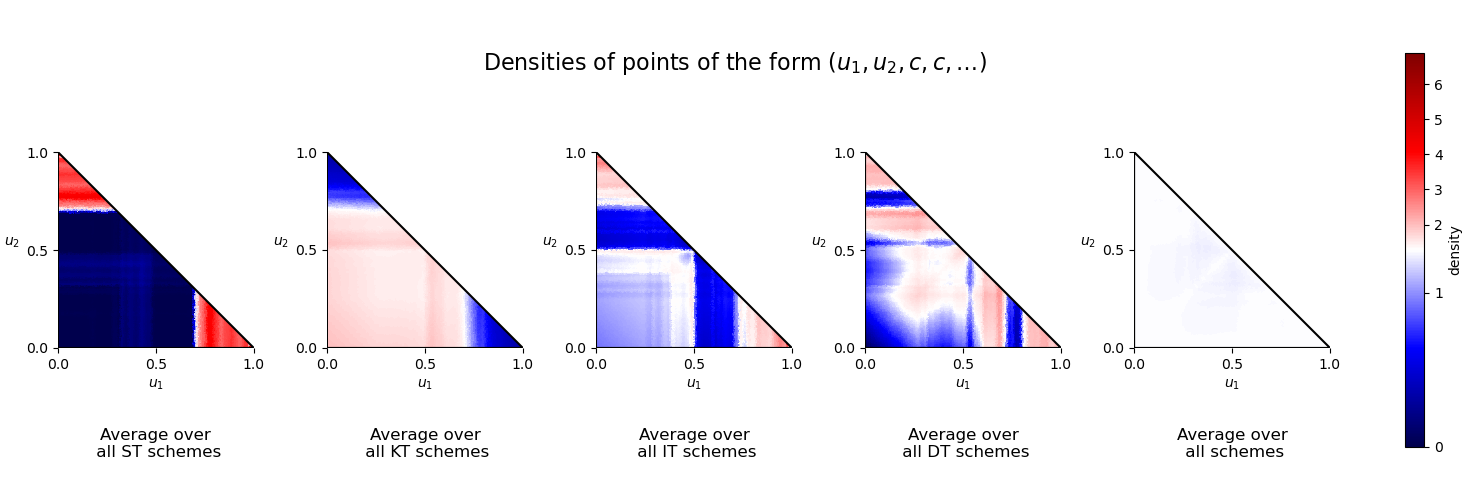}
    \includegraphics[scale=0.4]{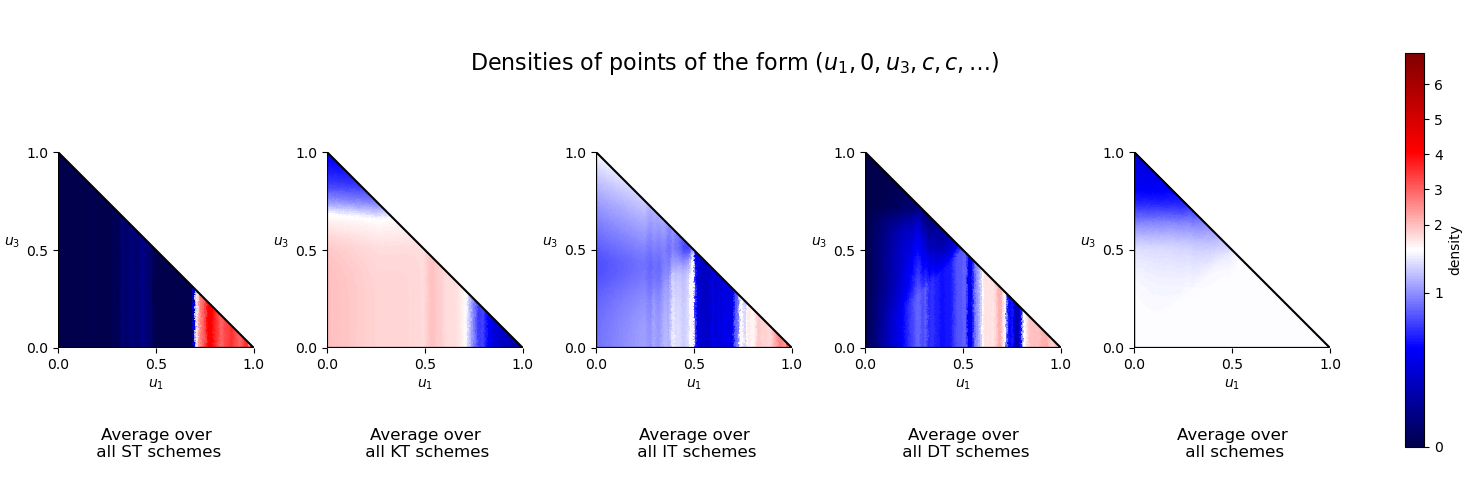}
    \includegraphics[scale=0.4]{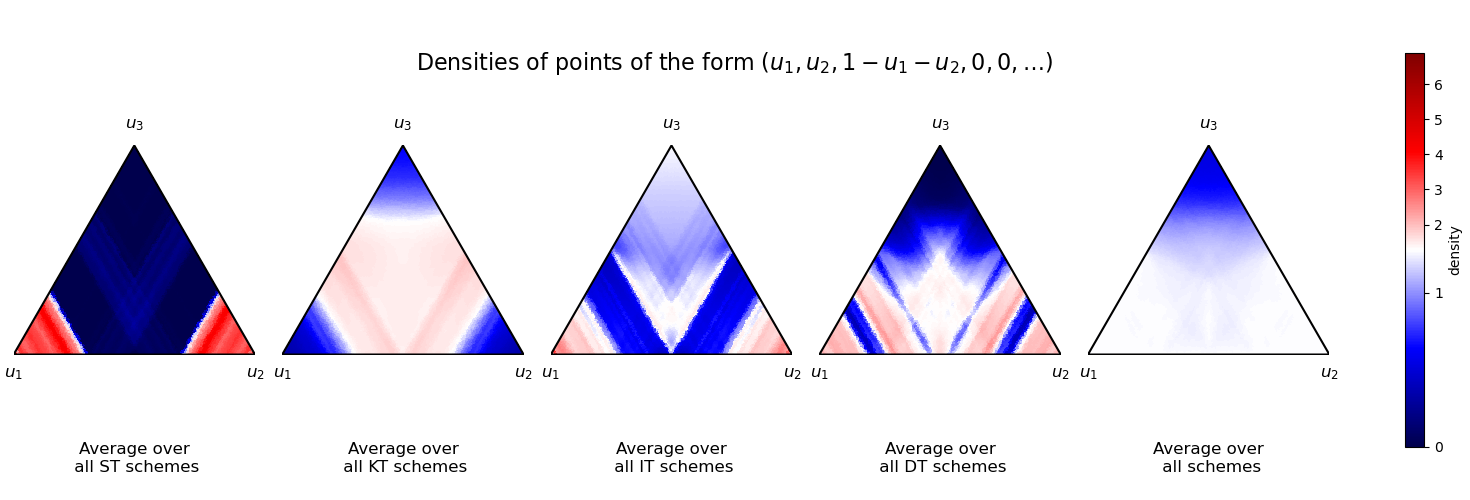}
    \caption{Heatmaps showing the densities of the various components of the algorithm, and of the whole algorithm, on simplex points of the form $(u_1,u_2,c,c,\ldots)$, $(u_1,0,u_3,c,c,\ldots)$ and $(u_1,u_2,u_3,0,\ldots,0)$. In points of the first form we have $c=(1-u_1-u_2)/(k-2)$, where $k\to\infty$, and similarly for points of the second form. The boundary value between blue and red shades is $1.278$. Although the different components have very large maximum densities, the mixture of all of them gives a final scheme in which the densities are almost constant in most the regions shown. Plots in this paper are made with Matplotlib~\cite{Hunter:2007}.}
    \label{F-heatmaps}
\end{figure}

\begin{figure}[t]
    \centering
    \includegraphics[scale=0.35]{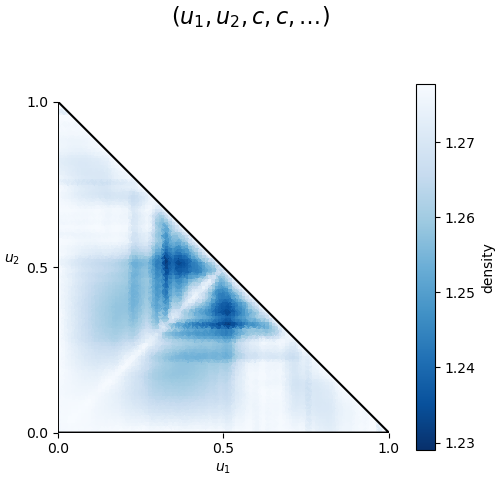}
    \includegraphics[scale=0.35]{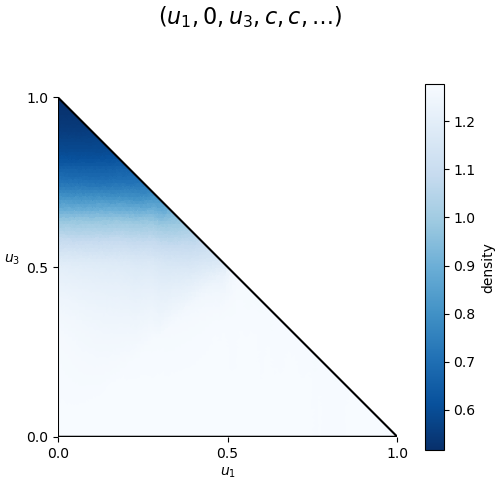}
    \includegraphics[scale=0.35]{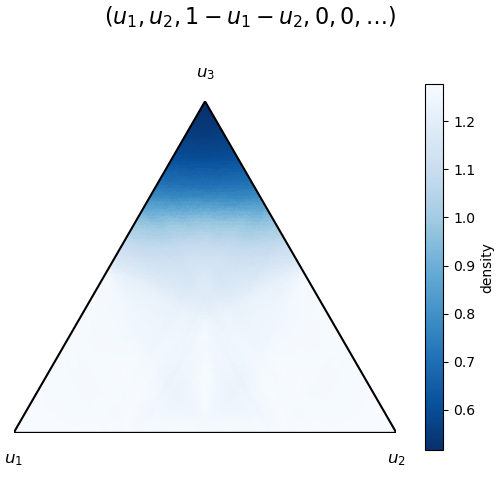}
    \caption{A closer look at the densities of the whole algorithm.}
    \label{F-closeup}
\end{figure}

We now give more details about the improved rounding schemes found by our computational approach. We concentrate on the scheme found for an arbitrary number of terminals. The details for the schemes found for small number of terminals are similar.

The best scheme we found so far for arbitrary~$k$ is a mixture of \numfun\ basic rounding schemes. Each basic rounding scheme is an ST$(f)$, KT$(f)$, IT$(f)$ or DT$(f)$ scheme, where $f$ is its piecewise linear density function, and used with a specific probability. (Due to a technical reason we allow several ST$(f)$ schemes. They could be replaced, however, by one equivalent ST$(f')$ scheme.) The scheme was chosen to be $0.25$-compliant (see Definition~\ref{D:compliance}). The \numfuna\ schemes with the largest probabilities are used with a total probability of $0.99$. KT, DT, ST and IT schemes are used with total probabilities of approximately $\KTprob$, $\DTprob$, $\STprob$ and $\ITprob$, respectively. It is interesting to note that the newly introduced family of GKT schemes is used with a probability of almost $0.6$, showing its great usefulness. Each of the other three families is used with a probability of about $0.14$, showing that each one of them is still important.

Figure~\ref{F-densities} shows the density functions used by some of the basic rounding schemes of the mixture. For ST, the combined density function of all ST schemes is shown. Also shown are the density functions of the 8 basic rounding schemes from each of the families KT, IT and DT that are used with the largest probabilities. (These probabilities can be found in the legend of the corresponding figures.)

It is interesting to note that the combined density of the ST schemes is essentially~$0$ for up to about $x\approx 0.684$ and then increases sharply. The ST density function used by \SV\ and by Karger et al.\cite{KKSTY04} exhibit somewhat similar behavior. (In~\cite{SV14} the density is small, but not~$0$, for $x\le 0.61$. In~\cite{KKSTY04}, the density is $0$ up to $6/11\approx 0.54545$ and is then constant.) We have no explanation at present for the `bumpy' behavior of the ST density for $x\in[0.8,1]$. We think that this behavior is not essential and can be compensated by suitably changing the density functions of the other rounding schemes.

It is interesting to note that the the first KT function is used with a probability of about $0.12$, and the second with a probability of about $0.09$. Also, each of the 8 KT density functions shown has two local maxima. We hope that further experiments will shed more light on the behavior of KT schemes.

The IT and DT density functions shown seem to be better behaved. They mostly have `triangular' shape and in particular are nonzero only on a relatively small sub-interval of $[0,1]$. It would be interesting to see whether they could be replaced by step functions. (We currently only allow piecewise linear density functions, with some minimum nonzero separation between two breakpoints.)

We refer to the collection of all basic rounding schemes from each one of the families ST, KT, IT and DT as a \emph{component} of the whole algorithm. Heatmaps showing the cut densities of the four components, and of the whole algorithms, on simplex points of the form, $(u_1,u_2,c,c,\ldots)$, $(u_1,0,u_3,c,c,\ldots)$ and $(u_1,u_2,u_3,0,\ldots,0)$ are shown in Figure~\ref{F-heatmaps}. In points of the first form we have $c=(1-u_1-u_2)/(k-2)$, where $k\to\infty$, and similarly for points of the second form. Simplex points of these forms are among the hardest points for the algorithm, i.e., points in which the combined cut density function attains some of its highest values. Most of the other hard points are of the form $(u_1,u_2,u_3,c,c,\ldots)$, where $c=(1-u_1-u_2-u_3)/(k-3)$, but it is harder to plot them.

A closer look at the densities obtained by the whole algorithm is given in Figure~\ref{F-closeup}. It is quite remarkable how the combined algorithm manages to balance the four components, each with a vastly different behavior, and obtain a density that is almost constant on most of the problematic simplex points.

\section{Verification Algorithm}\label{sec:verification}

We now present our verification algorithms for general $k$ and for small fixed $k$. We also provide some information on implementation details. The verification algorithms are very general and work as long as the mixture $\mathcal{R}$ is  \emph{$\alpha$-compliant} for some $\alpha>0$. (See Definition~\ref{D:compliance}.)

\subsection{Verification for General $k$}
We first focus on the case of general $k$. Let $\alpha \in (0, 1]$, and $\mathcal{R}$ be an $\alpha$-compliant mixture.
For any $\bu = (u_1, \ldots, u_k) \in \Delta_k$, the cut density achieved by $\mathcal{R}$ on $\bu$ is given by
\[
d^\mathcal{R}_k(\bu) \EQ \E_{R \sim \mathcal{R}} d^R_k(\bu) \;.
\]
Our goal is to provide a rigorous verification that for some $d_0 > 0$,
\[
d^\mathcal{R}_k(\bu) \LE d_0\quad, \quad\forall k \geq 2\;,\; \bu \in \Delta_k \;.
\]

As mentioned earlier, our verification algorithm works with tuples of intervals which represent sets of prefixes of simplex points. We formalize this in the following definition.
\begin{definition}
    Let $\bU = (U_1, \ldots, U_\ell)$, where $\ell \geq 2$ and each $U_i$ is an interval within $[0, 1]$. For any simplex point $\bu \in \Delta_k$ where $k \geq \ell$, we say that $\bu$ is \emph{compatible} with $\bU$ if $u_i \in U_i$ for every $1 \leq i \leq \ell$. Furthermore, for any $\alpha > 0$, we say that $\bu$ is \emph{$(\ell, \alpha)$-light}, if $u_i \leq \alpha$ for every $i \in \{\ell + 1, \ldots, k\}$.
\end{definition}

For $\bu = (u_1,u_2,\ldots,u_\ell)$, consider the function $\faked^\mathcal{R}(\bu) = \E_{R \sim \mathcal{R}} \faked^R(\bu)$ where we define
\begin{equation*}
    \faked^R(\bu) \EQ \lim_{k \to \infty}d^{R}_{k}(u_1,u_2,\ldots,u_\ell,\underbrace{c,c,\ldots,c}_{k-\ell})\quad, \quad c=\frac{1-\sum_{i=1}^\ell u_i}{k-\ell}\;,
\end{equation*}
 if $R$ is an IT or GKT scheme, and
\begin{equation*}
    \faked^R(\bu) \EQ  f(u_1) \cdot \prod_{i:u_i\ge u_1}(1-F(u_1,u_i)) + f(u_2) \cdot \prod_{i:u_i\ge u_2}(1-F(u_2,u_i))
\end{equation*}
if $R$ is an DT scheme, and finally
\begin{equation*}
    \faked^R(\bu) \EQ   \frac{f(u_{\min})}{2 + \left|\{i \in [3, \ell] \mid u_i > u_{\min}\}\right|} + \frac{f(u_{\max})}{1 + \left|\{i \in [3, \ell] \mid u_i > u_{\max}\}\right|}
\end{equation*}
where $u_{\min} = \min\{u_1, u_2\}$ and $u_{\max} = \max\{u_1, u_2\}$ if $R$ is an ST scheme. We show that $\faked^R(\bu)$ is an upper bound for the cut density of $R$ at any point $\bu'$ which has $\bu$ as prefix and whose remaining coordinates are all at most $\alpha$:
\begin{lemma}\label{lem:upper_bound_general_k}
    Let $R$ be an IT, GKT, DT, or ST scheme in the support of an $\alpha$-compliant mixture~$\mathcal{R}$. Then, for every $\bu = (u_1, \ldots, u_\ell)$ where $\ell \geq 2$, we have 
    \[
    \faked^R(\bu) \GE \sup_{k \geq \ell}\,\sup_{\substack{0 \leq u_{\ell + 1}, \ldots, u_k \leq \alpha \\ \sum_{i = 1}^k u_i = 1}} d_k^R(u_1, \ldots, u_k)\;.
    \]
\end{lemma}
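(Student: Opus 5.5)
The plan is a case analysis over the four families. In each case I fix $k\ge\ell$ and an $(\ell,\alpha)$-light point $\bu'=(u_1,\ldots,u_k)\in\Delta_k$ with prefix $\bu$, and show $d^R_k(\bu')\le\faked^R(\bu)$. I treat $d^{R,1}_k$ and $d^{R,2}_k$ separately when convenient, using the relation $d^{R,2}_k(u_1,u_2,\ldots)=d^{R,1}_k(u_2,u_1,\ldots)$ for the continuous schemes.

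\textbf{GKT.} By $\alpha$-compliance, $f$ is constant and nonzero on $[0,\alpha]$. Lemma~\ref{L-KT-prefix} applies to both $d^{\KT(f),1}_k$ and $d^{\KT(f),2}_k$ (the latter by swapping $u_1,u_2$). It shows that $d^{\KT(f)}_k(\bu')$ depends only on the prefix. Since the all-$c$ extension is itself $(\ell,\alpha)$-light for $k$ large ($c\to0$), the value equals the limit defining $\faked^R(\bu)$. Equality holds, so the bound follows.

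\textbf{IT.} Here $f$ is non-decreasing on $[0,\alpha]$, so $F$ is convex there. Lemma~\ref{L-IT-prefix} (applicable since $u_{\ell+1}+\cdots+u_k\le(k-\ell)\alpha$) shows that $d^{\IT(f),1}_k(\bu')\le d^{\IT(f),1}_k(u_1,\ldots,u_\ell,c,\ldots,c)$. The same lemma applied to the swapped prefix $(u_2,u_1,u_3,\ldots,u_\ell)$ gives the corresponding bound for $d^{\IT(f),2}_k$.

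Next I must show that the right-hand side is non-decreasing in $k$, so that it is at most the limit. This is the point I expect to be the main subtlety: the claim in the corollary after Lemma~\ref{L-IT-asymptotic} uses it. My first approach is a padding argument. Appending a coordinate $0$ to the all-$c$ point gives a point of $\Delta_{k+1}$ whose extension coordinates are at most $\alpha$. By the same maximization (Lemma~\ref{L-IT-prefix} with $k+1$), its value is at most that of the all-$c'$ point in $\Delta_{k+1}$. Also, $d_{k+1}(\ldots,0)=d_k(\ldots)$, as noted after the definition of prefix density; this needs a short check that a zero coordinate, whose threshold never captures, does not change $d^{\IT}$. Hence the sequence increases and is bounded by its limit, which Lemma~\ref{L-IT-asymptotic} identifies as $\faked^R(\bu)$.

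\textbf{DT.} By Corollary~\ref{DT-density}, $d^{\DT(f),1}_k$ is at most $f(u_1)\prod_{i:u_i\ge u_1}(1-F(u_1,u_i))$, obtained by dropping the subtracted nonnegative product. Every factor lies in $[0,1]$, so dropping the factors with $i>\ell$ only increases the bound. This yields the first term of $\faked^R$; the $d^{\DT(f),2}_k$ term is handled symmetrically. No compliance assumption is needed here.

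\textbf{ST.} I use Lemmas~\ref{ST-density-1} and~\ref{ST-density-2}, bounding the factor $1-\frac1k$ by $1$. Suppose $u_1\le u_2$. For $d^{\ST,1}_k$, the count $|\{i\in[k]:u_i\ge u_1\}|$ includes $i=1$ and $i=2$, plus every $i\in[3,\ell]$ with $u_i>u_1$. Discarding coordinates beyond $\ell$ only lowers the count, so $d^{\ST,1}_k\le f(u_{\min})/(2+|\{i\in[3,\ell]:u_i>u_{\min}\}|)$. For $d^{\ST,2}_k$, the count $1+|\{i:u_i>u_2\}|$ is at least $1+|\{i\in[3,\ell]:u_i>u_{\max}\}|$, and the "first case" bound $f(u_2)$ is consistent with this. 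The case $u_1>u_2$ is symmetric with the roles of the two lemmas exchanged, taking care with the strict versus non-strict inequalities.

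Finally, I sum the two terms in each case and take the supremum over $k$ and over light extensions.
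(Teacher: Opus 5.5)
Your proposal is correct and follows the same route as the paper: GKT and IT via Lemmas~\ref{L-KT-prefix} and~\ref{L-IT-prefix}, with the passage from finite $k$ to the limit supplied by the monotonicity-in-$k$ fact recorded in the corollary after Lemma~\ref{L-IT-asymptotic}. DT and ST are handled as in the paper, by dropping the subtracted product, bounding $1-\frac1k$ by $1$, and discarding coordinates beyond $\ell$. One small correction to your padding step: appending a zero coordinate does change $d^{\IT}$, because the new terminal may be the default last one, and then terminal~$1$ receives a threshold even when it is last among the original $k$. Hence $d^{\IT(f),1}_{k+1}(u_1,\ldots,u_k,0)=d^{\IT(f),1}_k(u_1,\ldots,u_k)+\frac{f(u_1)}{k(k+1)}\prod_{i=2}^k(1-F(u_i))$, which is an increase and therefore exactly the direction your monotonicity argument needs, so the proof goes through.
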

\begin{proof}
    We need to show that for every $k \geq \ell$ and every $0 \leq u_{\ell + 1}, \ldots, u_k \leq \alpha$ such that $\sum_{i = 1}^k u_i = 1$, we have
    \begin{equation}\label{eq:prefix_upper_bound}
    \faked^R(\bu) \GE d_k^R(u_1, \ldots, u_k)\;.
    \end{equation}
    If $R$ is an IT or GKT scheme, then \eqref{eq:prefix_upper_bound} follows from Lemma~\ref{L-IT-prefix} and Lemma~\ref{L-KT-prefix} respectively. If $R$ is a DT scheme, then by Corollary~\ref{DT-density}, we have for every $0 \leq u_{\ell + 1}, \ldots, u_k \leq \alpha$
    \begin{align*}
        &\, d^{R}_k(u_1,u_2,\ldots,u_k) \\
        \EQ &\, \sum_{j = 1}^2 f(u_j) \cdot \left(\prod_{\substack{i:u_i\ge u_j \\ 1 \leq i \leq k}}(1-F(u_j,u_i))-\prod_{\substack{i \ne j \\ 1 \leq i \leq k}}F(\max\{u_j,u_i\},1) \right)  \\
        \LE &\,  f(u_1) \cdot \prod_{\substack{i:u_i\ge u_1 \\ 1 \leq i \leq k}}(1-F(u_1,u_i)) + f(u_2) \cdot \prod_{\substack{i:u_i\ge u_2 \\ 1 \leq i \leq k}}(1-F(u_2,u_i)) \\
        \LE &\, f(u_1) \cdot \prod_{\substack{i:u_i\ge u_1 \\ 1 \leq i \leq \ell}}(1-F(u_1,u_i)) + f(u_2) \cdot \prod_{\substack{i:u_i\ge u_2 \\ 1 \leq i \leq \ell}}(1-F(u_2,u_i)) = \faked^{R}(\bu)\;.
    \end{align*}
    Finally, if $R$ is an ST scheme, assuming $u_1 \leq u_2$ without loss of generality, we have by Lemma~\ref{ST-density-1} and Lemma~\ref{ST-density-2}
    \begin{align*}
        &\, d^{R}_k(u_1,u_2,\ldots,u_k) \\
        \EQ &\, \frac{f(u_1)}{|\{i \in [k] \mid u_i\geq u_1\}|} + \frac{f(u_2)}{1 + |\{i \in [k] \mid u_i >  u_2\}|} \\
        \LE &\, \frac{f(u_1)}{2 + |\{i \mid u_i > u_1, 3 \leq i \leq k\}|} + \frac{f(u_2)}{1 + |\{i \mid u_i >  u_2, 3 \leq i \leq k\}|} \\
        \LE &\, \frac{f(u_1)}{2 + |\{i \mid u_i > u_1, 3 \leq i \leq \ell\}|} + \frac{f(u_2)}{1 + |\{i \mid u_i >  u_2, 3 \leq i \leq \ell\}|} = \faked^R(\bu)\;.
    \end{align*}
    This completes the proof.
\end{proof}

\begin{definition}
    $\textsc{BoundDensity}^{\mathcal{R}}(\bU)$ is an interval arithmetic implementation for $\faked^{\mathcal{R}}(\bu)$ which satisfies the following guarantee: if $\bU = (U_1, \ldots, U_\ell)$ and $D = \textsc{BoundDensity}^{\mathcal{R}}(\bU)$ \footnote{Note that $\textsc{BoundDensity}^{\mathcal{R}}(\bU)$ returns an interval.}, then
    for every $\bu = (u_1, \ldots, u_\ell)$ compatible with $\bU$, $\faked^\mathcal{R}(\bu) \leq \sup(D)$. 
\end{definition}

\begin{remark}
    In the usual interval arithmetic guarantee, we would have $\faked^\mathcal{R}(\bu) \in D$. Here we only obtain an upper bound (which is sufficient for our purpose) because of the evaluation of $\faked^{\ST(f)}(\bu)$, whose expression involves comparison between coordinates. In our interval arithmetic implementation, the comparison between two intervals $U_i$ and $U_j$, say $U_i < U_j$, is only true when $u_i < u_j$ for every $u_i \in U_i$ and $u_j \in U_j$, so we will often undercount the denominators in the expression of $\faked^{\ST(f)}(\bu)$.   
\end{remark}

We will also make the following assumption on the precision of interval arithmetic. This assumption is only needed to prove termination of our verification algorithms.
\begin{assumption}\label{assumption:ia_precision}
    For every $\eps > 0$ and integer $\ell \geq 2$, there exists some $\delta = \delta(\eps, \ell) > 0$ such that the following holds: for any $\bU = (U_1, \ldots, U_\ell)$, where $\ell \geq 2$ and each $U_i$ is an interval within $[0, 1]$, if $|U_i| \leq \delta$ for every $i \in \{1, 2, \ldots \ell\}$, then for every $d \in D = \textsc{BoundDensity}^{\mathcal{R}}(\bU)$, there exists some $\bu = (u_1, \ldots, u_\ell)$ that is compatible with $\bU$ such that $|\faked^\mathcal{R}(\bu) - d|\leq  \eps$.
\end{assumption}

Informally, the assumption says that by making the input intervals sufficiently small, we may get arbitrarily good precision for the output interval. This, of course, would require arbitrarily good precision. (An interval arithmetic implementation is usually not restricted by the machine precision as it can have its own implementation of arithmetic operations.~\cite{fousse2007mpfr}) In actual runs we only need this for $\eps > 10^{-4}$. 

\begin{algorithm}
    \caption{The verification algorithm for general $k$}\label{alg:general_k}
    \begin{algorithmic}[1]
        \State \textbf{Input:} $\bU = (U_1, \ldots, U_\ell)$, where $\ell \geq 2$ and each $U_i$ is an interval within $[0, 1]$; target density $d_0$
        \Procedure{Verify}{$\bU, d_0$}
            \State $D \gets \textsc{BoundDensity}^{\mathcal{R}}(\bU)$ \label{line:compute_D}
            \If {$\min(D) > d_0$} 
                \State \Return $\false$\label{Line:min_check}
            \EndIf
            \If {$\max(D) \leq d_0$}
                \If {$1 - \min(\sum_{i = 1}^\ell U_i) < \alpha$}\label{line:check_b}
                    \State \Return $\true$
                \Else 
                    \State $U_{\ell + 1} \gets [\alpha, 1 - \min(\sum_{i = 1}^\ell U_i)]$ \label{line:add_U}
                    \State $\bU' = (U_1, \ldots, U_{\ell + 1})$
                    \State \Return \textsc{Verify}$(\bU', d_0)$ \label{line:bU'}
                \EndIf 
            \EndIf
            \State Let $U_i$ be a longest interval in $\bU$
            \State Split $U_i$ into two equal-length subintervals $U_i^\leftsf$ and $U_i^\rightsf$
            \State $\bU^{\leftsf} \gets (U_1, \ldots, U_{i - 1}, U_i^{\leftsf}, U_{i + 1}, \ldots, U_\ell)$
            \State $\bU^{\rightsf} \gets (U_1, \ldots, U_{i - 1}, U_i^{\rightsf}, U_{i + 1}, \ldots, U_\ell)$
            \State \Return \textsc{Verify}$(\bU^\leftsf, d_0)$ $\wedge$ \textsc{Verify}$(\bU^\rightsf, d_0)$
        \EndProcedure
    \end{algorithmic}
\end{algorithm}

We now present and analyze our verification algorithm for general $k$. The pseudocode for the algorithm can be found in Algorithm~\ref{alg:general_k}. In the following analysis, when we use the phrase \emph{the execution of }$\textsc{Verify}(\bU, d_0)$, we refer to the entire computation including all recursive calls made by it and its subroutines.
\begin{lemma}\label{lem:true_covered}
    Let $\bU = (U_1, \ldots, U_\ell)$, where $\ell \geq 2$ and each $U_i$ is an interval within $[0, 1]$. Assume that $\textsc{Verify}(\bU, d_0)$ returns $\true$ for some $d_0 > 0$. Let $\bu \in \Delta_k$ be compatible with $\bU$, and let $p$ be the minimum value in $\{\ell, \ldots, k\}$ such that $\bu$ is $(p, \alpha)$-light. Then, there must be some $\bU^{(p)} = (U_1^{(p)}, \ldots, U_p^{(p)})$ with which $\bu$ is compatible such that the execution of $\textsc{Verify}(\bU, d_0)$ contains a recursive call to $\textsc{Verify}(\bU^{(p)}, d_0)$.
\end{lemma}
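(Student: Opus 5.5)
I will argue by strong induction on the quantity $p-\ell$, with a nested induction on the depth of the splitting recursion. The key invariant is this: whenever a call $\textsc{Verify}(\bV,d_0)$ appears in the execution of a call that returned $\true$, it must itself have returned $\true$. This holds because the only way to return $\true$ is via the conjunction of the two split calls or via the single extension call on Line~\ref{line:bU'}. So every call in the execution tree returns $\true$, and in particular none of them exits at Line~\ref{Line:min_check}.

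\textbf{Descending through splits.} Fix $\bu\in\Delta_k$ compatible with $\bU=(U_1,\ldots,U_\ell)$. I first show that the execution reaches some call $\textsc{Verify}(\bV,d_0)$ with $\bV=(V_1,\ldots,V_\ell)$, $\bu$ compatible with $\bV$, and $\max(D)\le d_0$ for $D=\textsc{BoundDensity}^{\mathcal{R}}(\bV)$.

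Starting from $\bU$, as long as the current call has $\max(D)>d_0$ (and, by the invariant, $\min(D)\le d_0$), it splits a longest interval into halves. Since $U_i=U_i^\leftsf\cup U_i^\rightsf$, the point $\bu$ is compatible with at least one of $\bU^\leftsf,\bU^\rightsf$, and both are recursive calls in the execution. So I follow that child.

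This descent must stop. The whole execution is finite, since the procedure returned, so the tree of calls is finite. Hence we reach a call with $\max(D)\le d_0$. (Assumption~\ref{assumption:ia_precision} is only needed for termination in general; here termination is given.)

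\textbf{Base case and extension.} At this call $\bV$ we have $\max(D)\le d_0$, and one of two things happens.

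\emph{First branch.} The test on Line~\ref{line:check_b} succeeds, i.e.\ $1-\min(\sum_i V_i)<\alpha$. Then every coordinate $u_j$ with $j>\ell$ satisfies $u_j\le 1-\sum_{i\le\ell}u_i\le 1-\min(\sum V_i)<\alpha$. So $\bu$ is $(\ell,\alpha)$-light, which gives $p=\ell$, and $\bU^{(p)}=\bV$ works.

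\emph{Second branch.} The call recurses on $\bV'=(V_1,\ldots,V_\ell,[\alpha,1-\min(\sum V_i)])$.
\begin{itemize}
\item If $\bu$ is $(\ell,\alpha)$-light, then $p=\ell$ and again $\bV$ is the desired call.
\item Otherwise $p>\ell$, and some $u_j>\alpha$ with $j>\ell$. The density $d^R_k$ is symmetric in the coordinates $u_3,\ldots,u_k$, so we may permute the suffix coordinates so that $u_{\ell+1}$ is a largest one. This is the WLOG convention I would state explicitly: the order of coordinates beyond position $2$ is immaterial. After the permutation, $u_{\ell+1}>\alpha$ and $u_{\ell+1}\le 1-\sum_{i\le\ell}u_i\le 1-\min(\sum V_i)$, so $\bu$ is compatible with $\bV'$.
\end{itemize}

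\textbf{Closing the induction.} Call the minimal lightness index of the permuted point $p'$. Any index $q>\ell$ with $u$ being $(q,\alpha)$-light after the reordering still requires all heavy coordinates to lie among the first $q$ positions. So I would fix the ordering once at the start: all coordinates beyond position $\ell$ that exceed $\alpha$ are placed first, in positions $\ell+1,\ldots,p$. This reordering does not change $p$, and it makes the extension step valid at every level. Now apply the induction hypothesis to the call $\textsc{Verify}(\bV',d_0)$, which returned $\true$ and has prefix length $\ell+1$ with $p-(\ell+1)<p-\ell$. It yields a call $\textsc{Verify}(\bU^{(p)},d_0)$ inside its execution, and hence inside the execution of $\textsc{Verify}(\bU,d_0)$.

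\textbf{Main obstacle.} The delicate point is this reordering. The lemma speaks of $\bu$ itself being compatible, so I must either argue that the heavy coordinates can be assumed to come first (by symmetry of $d^R_k$ in $u_3,\ldots,u_k$, which is all the lemma is used for), or add that normalization to the statement's interpretation.
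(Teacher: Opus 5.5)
Your proof is correct for the normalized statement discussed below, and it organizes the induction differently from the paper. The paper inducts on $p$ with the root call $\textsc{Verify}(\bU,d_0)$ kept fixed. For $p>\ell$ it builds an auxiliary point $\bu'$ that keeps $u_1,\ldots,u_{p-1}$ and spreads the remaining mass over many equal tiny coordinates, so that $\bu'$ is $(p-1,\alpha)$-light. It then gets from the hypothesis a length-$(p-1)$ call compatible with $\bu'$, takes the last such call, and argues that this call must take the extension branch, whose child has length $p$ and is compatible with $\bu$. You instead perform the first extension at the top and re-root the induction at the child call on $\boldsymbol{V}'$, applying the hypothesis to the same point $\bu$ with $p-(\ell+1)<p-\ell$. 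This removes the auxiliary point and the choice of $k'$. It also makes explicit two facts the paper uses tacitly: every call inside a $\true$-returning execution itself returns $\true$ (so none exits at Line~\ref{Line:min_check}), and the descent through splits is finite because the execution terminated.

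One correction concerns your normalization remark. Moving the heavy suffix coordinates to positions $\ell+1,\ldots,p$ can change $p$. Take $\ell=2$, $\alpha=0.25$ and $\bu=(0.3,0.3,0.1,0.3)\in\Delta_4$: then $p=4$, but after the swap $p=3$. Moreover, for this $\bu$ the conclusion of the lemma as literally stated cannot hold. Every call of length at least $3$ below a length-$2$ root has its third interval inside $[\alpha,1]$, so no such call is compatible with $\bu$. Hence $u_{\ell+1},\ldots,u_p\ge\alpha$ (for instance, $u_{\ell+1}\ge\cdots\ge u_k$) is a genuine hypothesis, not a WLOG. Put it into the statement you induct on; it is inherited when you pass from $\bU$ to $\boldsymbol{V}'$, and with it your argument is complete. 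This is also how the paper actually uses the lemma: it sorts the suffix in the proof of Theorem~\ref{thm:ia_correctness}(2). The paper's own inductive step tacitly needs the same ordering, to know that applying the hypothesis to $\bu'$ yields a call of length exactly $p-1$.
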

\begin{proof}
    We prove this by induction on $p$. If $p = \ell$ then we can take $\bU^{(p)} = \bU$. Assume $p > \ell$. Let $\bu' = (u_1', \ldots, u_{k'}')$ be such that $u_i' = u_i$ for every $1 \leq i \leq p - 1$ and $u_p' = \cdots = u_{k'}' = \frac{1 - \sum_{i = 1}^{p-1} u_i}{k' - p + 1}$. We may choose $k'$ to be sufficiently large so that $\bu'$ is $(p - 1, \alpha)$-light. By the inductive hypothesis, there exists a recursive call $\textsc{Verify}(\bU^{(p-1)}, d_0)$ such that $\bU^{(p-1)}$ has length $p-1$, $\bu'$ is compatible with~$\bU^{(p-1)}$, and the execution of $\textsc{Verify}(\bU, d_0)$ contains a recursive call to $\textsc{Verify}(\bU^{(p-1)}, d_0)$. We take $\textsc{Verify}(\bU^{(p-1)}, d_0)$ to be the last such recursive call, which exists since $\textsc{Verify}(\bU, d_0)$ halts (it returns $\true$). For this call, the if-else block starting on Line~\ref{line:check_b} must be executed, for otherwise the program would create two recursive calls $\textsc{Verify}(\bU^{(p-1), l}, d_0)$ and $\textsc{Verify}(\bU^{(p-1), r}, d_0)$, one of which $\bu'$ must be compatible with. Now, the if condition on Line~\ref{line:check_b} does not hold, since 
    \[
        1 - \min\left(\sum_{i = 1}^{p-1}U^{(p-1)}_i\right) \GE 1 - \sum_{i = 1}^{p-1}u_i \GE u_{p} \GE \alpha\;.
    \]
    Therefore, Line~\ref{line:bU'} will be called with some $\bU'$ of length $p$ which $\bu$ is compatible with, so we may take $\bU^{(p)} = \bU'$. 
\end{proof}

\begin{theorem}\label{thm:ia_correctness}
    Let $\bU = (U_1, \ldots, U_\ell)$, where $\ell \geq 2$ and each $U_i$ is an interval within $[0, 1]$ and $d_0 > 0$. Then we have the following correctness guarantee:
    \begin{itemize}
        \item[(1)] Assume that $\textsc{BoundDensity}^{\mathcal{R}}(\bU)$ satisfies Assumption~\ref{assumption:ia_precision}. If there exists some $\eps > 0$ such that $\faked^\mathcal{R}(\bu) \leq d_0 - \eps$ for every $\bu$ compatible with $\bU$, then $\textsc{Verify}(\bU, d_0)$ returns $\true$.
        \item[(2)] If $\textsc{Verify}(\bU, d_0)$ returns $\true$, then $\faked^\mathcal{R}(\bu) \leq d_0$ for every $\bu$ compatible with $\bU$. In particular, $d^\mathcal{R}_k(\bu) \leq d_0$ for every $k \geq \ell$ and $\bu \in \Delta_k$ compatible with $\bU$.
    \end{itemize}
\end{theorem}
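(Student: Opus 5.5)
The two parts call for different arguments: (2) follows from the structure of the recursion tree together with Lemma~\ref{lem:true_covered} and Lemma~\ref{lem:upper_bound_general_k}, while (1) is a termination argument based on Assumption~\ref{assumption:ia_precision}.

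\textbf{Part (2).} Suppose $\textsc{Verify}(\bU,d_0)$ returns $\true$. Then every recursive call in its execution also returns $\true$, since the return value is a conjunction of recursive results or a recursive result itself. In particular, no call exits at Line~\ref{Line:min_check}.

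A call that reaches Line~\ref{line:check_b} has $\max(D)\le d_0$. By the \textsc{BoundDensity} guarantee, this gives $\faked^{\mathcal R}(\bu)\le d_0$ for all $\bu$ compatible with its argument.

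Now take $\bu=(u_1,\ldots,u_\ell)$ compatible with $\bU$. The splitting branch partitions the box, so by induction on the recursion tree (finite, because the call halts) some call reached via splitting only reaches Line~\ref{line:check_b} with a box containing $\bu$. This gives the first claim.

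For the second claim, take $k\ge\ell$ and $\bu\in\Delta_k$ compatible with $\bU$, and let $p$ be minimal such that $\bu$ is $(p,\alpha)$-light. By Lemma~\ref{lem:true_covered} there is a recursive call on some $\bU^{(p)}$ with which $\bu$ is compatible, and it returns $\true$. Applying the first claim to that call, we get $\faked^{\mathcal R}(u_1,\ldots,u_p)\le d_0$. Since the coordinates $u_{p+1},\ldots,u_k$ are all at most $\alpha$, Lemma~\ref{lem:upper_bound_general_k}, averaged over $R\sim\mathcal R$, gives $d^{\mathcal R}_k(\bu)\le\faked^{\mathcal R}(u_1,\ldots,u_p)\le d_0$.

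\textbf{Part (1).} Here I must show the call never returns $\false$ and always halts.

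\emph{It never returns $\false$.} Every recursive call at level $\ell'\ge\ell$ has an argument whose first $\ell$ coordinates lie inside $\bU$. The hypothesis only controls $\faked^{\mathcal R}$ on prefixes of length $\ell$, so I need monotonicity: $\faked^{\mathcal R}(u_1,\ldots,u_{\ell'})\le\faked^{\mathcal R}(u_1,\ldots,u_\ell)$ whenever the extra coordinates are at least $\alpha$. Concretely:
\begin{itemize}
\item For ST and DT schemes this is immediate, since the extra coordinates only enlarge denominators or add factors $\le1$ to products.
\item For IT and GKT schemes, $\faked^R$ is a limit of $d^R_k$ on a point that is also a legitimate extension of the shorter prefix. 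Hence it is at most the supremum of $d^R_k$ over extensions, which is what $\faked^R$ of the shorter prefix represents by Lemma~\ref{L-IT-prefix} and Lemma~\ref{L-KT-prefix}.
\end{itemize}
I expect this comparison to be the delicate step. If it fails in general, I would restrict to the form needed: the light-suffix maximizer is the equal-$c$ point, and appending a coordinate $\ge\alpha$ corresponds to a non-light extension, so the comparison goes through the supremum over all extensions.

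Granting monotonicity, the value at every prefix produced is at most $d_0-\eps$. By the \textsc{BoundDensity} guarantee, $\min(D)\le\faked^{\mathcal R}(\bu)\le d_0-\eps<d_0$ for any compatible $\bu$, so Line~\ref{Line:min_check} never fires. (The new interval $U_{\ell+1}$ is nonempty because the check at Line~\ref{line:check_b} failed, so a compatible $\bu$ always exists.)

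\emph{It halts.} First, the depth in $\ell'$ is bounded. Each appended coordinate has minimum at least $\alpha$, so after more than $1/\alpha$ additions the sum of minima exceeds $1-\alpha$ and Line~\ref{line:check_b} returns. Second, at each fixed length $\ell'\le\ell+1/\alpha$, bisection of the longest side makes every side at most $\delta(\eps/2,\ell')$ within finite depth. At that point Assumption~\ref{assumption:ia_precision} gives $\max(D)\le d_0-\eps+\eps/2<d_0$, so the call stops splitting. König's lemma (finite branching, bounded depth at each level) then shows the whole execution is finite, and hence it returns $\true$.
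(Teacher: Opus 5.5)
Your Part~(2) and your termination argument follow the paper; your bounded-depth/K\"onig argument is equivalent to its ``maximal $p$'' contradiction. The gap is in the ``never returns $\false$'' step of Part~(1): the monotonicity claim it rests on is false for GKT (and likewise for IT). Lemmas~\ref{L-KT-prefix} and~\ref{L-IT-prefix} only say that $\faked^R(u_1,\ldots,u_\ell)$ dominates \emph{light} extensions, i.e.\ those whose further coordinates are all at most $\alpha$. Appending $u_{\ell+1}\ge\alpha$ produces exactly a non-light extension, so your main argument does not apply. Your fallback does not apply either, because it compares against a supremum over all extensions, which $\faked^R$ does not represent.

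Here is a concrete failure. Take the single $\tfrac14$-compliant scheme $\KT(f)$ with $f=4$ on $[0,\tfrac14]$ and $f=0$ on $(\tfrac14,1]$.
\begin{itemize}
\item Lemma~\ref{L-KT-prefix} gives $\faked^{\mathcal R}(0.2,0)=0.8+1=1.8$.
\item At the simplex point $(0.2,0,0.8)$ we have $\sum_iF(u_i)=1.8$, and Corollary~\ref{C-KT} gives density $\frac{4}{1.8}\bigl(1-\frac{0.8}{1.8}\bigr)+\frac{4}{1.8}\approx 3.46$.
\item Set $\bU=([0.2,0.2],[0,0])$, $d_0=2$, $\eps=0.1$. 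The hypothesis as you read it holds, yet by Part~(2) $\textsc{Verify}(\bU,d_0)$ cannot return $\true$.
\end{itemize}
So under the length-$\ell$ reading, statement~(1) is simply false, and no monotonicity lemma can rescue it.

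The repair is a matter of reading, not a new lemma. Compatibility is defined for points of any length $k\ge\ell$ whose first $\ell$ coordinates lie in $\bU$. The hypothesis $\faked^{\mathcal R}(\bu)\le d_0-\eps$ is meant for all such $\bu$, including the longer prefixes the recursion generates. This is how the paper's proof uses it: it says ``the same logic applies to all subsequent recursive calls'', and its termination step applies the bound to a compatible prefix of length $p$. With that reading, your ``never $\false$'' step is immediate, and so is your use of Assumption~\ref{assumption:ia_precision} at length $\ell'$.

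There is also a smaller omission in Part~(2). Before invoking Lemma~\ref{lem:true_covered}, reorder $u_{\ell+1},\ldots,u_k$ in nonincreasing order, as the paper does; this is legitimate because the densities are symmetric in $u_3,\ldots,u_k$. The recursion only appends intervals contained in $[\alpha,1]$, so a point with $u_{\ell+1}<\alpha<u_{\ell+2}$ is compatible with no recursive box of length $\ell+2$. The lemma's induction tacitly needs $u_{p-1}>\alpha$.
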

\begin{proof}
    For part (1), consider the execution of $\textsc{Verify}(\bU, d_0)$ before any recursive call is made. Let~$D$ be the output of $\textsc{BoundDensity}^{\mathcal{R}}(\bU)$ which we obtain on Line~\ref{line:compute_D}. Then by interval arithmetic,
    \[
    \{\faked^\mathcal{R}(\bu) \mid \bu = (u_1, \ldots, u_\ell) \text{ compatible with } \bU\} \subseteq D \;.
    \]
    Since $\faked^\mathcal{R}(\bu) \leq d_0 - \eps$ for every $\bu$ compatible with $\bU$, we have $\min(D) \leq \faked^\mathcal{R}(\bu) < d_0$, so the call does not return $\false$ on Line~\ref{Line:min_check}. The same logic applies to all subsequent recursive calls, so the program never returns $\false$. Now it is sufficient to argue that the program will only make finitely many recursive calls, so it must halt and return $\true$. 
    
    Since each new coordinate we add at Line~\ref{line:add_U} has minimum value $\alpha > 0$, we can add at most $1/\alpha$ coordinates. Assume for the sake of contradiction that $\textsc{Verify}(\bU, d_0)$ never halts, then for some $p \leq \ell + 1 / \alpha$ the execution of $\textsc{Verify}(\bU, d_0)$ must contain infinitely many calls of the form $\textsc{Verify}(\bU', d_0)$ where $\bU'$ contains $p$ intervals. Choose a $p$ to be maximal with this property (so that only finitely many calls occur with $\bU'$ containing more than $p$ intervals). Then there must be some $\bU'$ of length $p$ satisfying the following:
    \begin{itemize}
        \item the execution of $\textsc{Verify}(\bU', d_0)$ does not halt.
        \item $\textsc{Verify}(\bU', d_0)$ is called during the execution of $\textsc{Verify}(\bU, d_0)$.
        \item For any $\bU''$ whose length is greater than $p$, the execution of $\textsc{Verify}(\bU', d_0)$ never calls $\textsc{Verify}(\bU'', d_0)$ as a subroutine.
    \end{itemize}

    Let $\delta = (\eps / 2, p) > 0$ be guaranteed as in Assumption~\ref{assumption:ia_precision}. In the recursive tree for the execution of $\textsc{Verify}(\bU', d_0)$, the maximal length of any interval is decreased by at least half if the depth is increased by $p$, we may additionally assume that 
    \begin{itemize}
        \item $\bU' = (U_1', \ldots, U_p')$ where $|U_i'| < \delta$ for any $i \in \{1, \ldots, p\}$. 
    \end{itemize}
    Now we let $d = \max(D')$ where $D' = \textsc{BoundDensity}^{\mathcal{R}}(\bU')$, then by Assumption~\ref{assumption:ia_precision} we can find some $\bu$ compatible with $\bU'$ such that $\faked^\mathcal{R}(\bu) \geq \max(D) - \eps / 2$. It follows that $\max(D) \leq \faked^\mathcal{R}(\bu) + \eps / 2 \leq d_0 - \eps / 2$, so the if-else statement starting on Line~\ref{line:check_b} will be executed when we call $\textsc{Verify}(\bU', d_0)$. But this means either $\textsc{Verify}(\bU', d_0)$ returns $\true$ or it makes another recursive call with some $\bU''$ with length greater than $p$, contradicting our assumption on $\bU'$.
    
    For part (2), let $\bu = (u_1, \ldots, u_k)$ be compatible with $\bU$. By symmetry, we may permute some of the indices greater than $\ell$ and assume that $u_{\ell + 1} \geq \cdots \geq u_k$. Find the minimum $p \in \{\ell, \ell + 1, \cdots, k\}$ such that $\bu$ is $(p, \alpha)$-light. By Lemma~\ref{lem:true_covered}, there exists a recursive call $\textsc{Verify}(\bU^{(p)}, d_0)$. We take the last such call. Then, the if-else block starting on Line~\ref{line:check_b} must be executed, which means in the previous line $\max(D) \leq d_0$ holds. So we have $\faked^\mathcal{R}(\bu) \leq \max(D) \leq d_0$.
\end{proof}

\begin{remark}
We remark that the ``$\eps$-room'' is needed in the first part because if $d(\bu) \leq d_0$ but the equality is achieved at some $\bu_0$, then due to the inherent loss in interval arithmetic, we can never distinguish with certainty whether $d(\bu_0) = d_0$ or $d(\bu_0)$ is slightly above $d_0$ using interval arithmetic calculations.
\end{remark}

\subsection{Verification for Fixed Finite $k$}

The verification for fixed $k$ (see Algorithm~\ref{alg:finite_k} for pseudocode) is similar to that of general $k$, with a few differences which we now explain in detail. Again, we assume that we are given $\alpha > 0$ and an $\alpha$-compliant mixture $\mathcal{R}$.

We first give the new evaluation function $\textsc{BoundDensity}_k(\bU)$. Given as input $\bU = (U_1, \ldots, U_\ell)$, where $2 \leq \ell \leq k$, $\textsc{BoundDensity}_k(\bU)$ computes, for every $\bu = (u_1, \ldots, u_\ell)$ compatible with $\bU$, the function $\faked^\mathcal{R}_k(\bu) = \E_{R \sim \mathcal{R}} \faked^R_k(\bu)$ where 
\begin{equation*}
    \faked^R_k(\bu) = d^{\mathcal{R}}_{k}(u_1,u_2,\ldots,u_\ell,\underbrace{c,c,\ldots,c}_{k-\ell}) \quad, \quad c=\frac{1-\sum_{i=1}^\ell u_i}{k-\ell}\;,
\end{equation*}
if $R$ is an IT or GKT scheme, and
\begin{equation*}
    \faked^R_k(\bu) \EQ \sum_{j = 1}^2 f(u_j) \cdot \left(\prod_{i:u_i\ge u_j}(1-F(u_j,u_i))-\left(\prod_{i\ne j}F(\max\{u_j,u_i\},1)\right) F(\max\{u_j, \alpha\}, 1)^{k - \ell}\right)
\end{equation*}
if $R$ is an DT scheme, and
\begin{equation*}
    \faked^R_k(\bu) \EQ   \frac{f(u_{\min})}{2 + \left|\{i \in [3, \ell] \mid u_i > u_{\min}\}\right|} + f(u_{\max}) \cdot \min\left\{1 - \frac{1}{k}, \frac{1}{1 + \left|\{i \in [3, \ell] \mid u_i > u_{\max}\}\right|}\right\}
\end{equation*}
where $u_{\min} = \min\{u_1, u_2\}$ and $u_{\max} = \max\{u_1, u_2\}$ if $R$ is an ST scheme.

\begin{lemma}\label{lem:upper_bound_small_k}
    Let $R$ be an IT, GKT, DT, or ST scheme in the support of an $\alpha$-compliant mixture~$\mathcal{R}$. For every $\bu = (u_1, \ldots, u_\ell)$ where $2 \leq \ell \leq k$, we have 
    \[
    \faked_k^R(\bu) \GE \max_{\substack{0 \leq u_{\ell + 1}, \ldots, u_k \leq \alpha \\ \sum_{i = 1}^k u_i = 1}} d_k^R(u_1, \ldots, u_k) \;.
    \]
\end{lemma}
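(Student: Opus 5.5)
We follow the structure of the proof of Lemma~\ref{lem:upper_bound_general_k} and treat the four families separately. Fix $k\ge\ell$ and $u_{\ell+1},\ldots,u_k\in[0,\alpha]$ with $\sum_{i=1}^k u_i=1$; the goal is $d^R_k(u_1,\ldots,u_k)\le\faked^R_k(\bu)$.

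\textbf{IT and GKT schemes.} For IT we would invoke Lemma~\ref{L-IT-prefix} directly, since $\alpha$-compliance gives $f$ non-decreasing on $[0,\alpha]$ and hence $F$ convex there. That lemma says the maximum of $d^{\IT(f),1}_k$ over the admissible suffixes is attained at the constant suffix $c$. The same holds for the component $d^{\IT(f),2}_k$, because by symmetry it is the same expression with $u_1,u_2$ swapped, and Corollary~\ref{C-IT-decompose} still exhibits it as a nonnegative combination of $e_b(y_{\ell+1},\ldots,y_k)$. Summing the two components gives the bound. The hypothesis $1-(k-\ell)\alpha\le\sum_{i\le\ell}u_i$ holds automatically whenever the feasible set is nonempty, and if it is empty the claim is vacuous. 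For GKT, Lemma~\ref{L-KT-prefix} shows that the density depends only on the prefix once the suffix lies in $[0,\alpha]$, so it equals its value at the constant suffix.

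\textbf{DT schemes.} We start from Corollary~\ref{DT-density} and handle the two parts of the product separately, for each $j=1,2$.
\begin{itemize}
\item In the positive product $\prod_{i:u_i\ge u_j}(1-F(u_j,u_i))$, every factor lies in $[0,1]$. Dropping the indices $i>\ell$ can therefore only increase it, which gives the prefix product.
\item In the subtracted term, each index $i>\ell$ satisfies $u_i\le\alpha$, so $\max\{u_j,u_i\}\le\max\{u_j,\alpha\}$. Since $1-F$ is non-increasing, $F(\max\{u_j,u_i\},1)\ge F(\max\{u_j,\alpha\},1)$. The subtracted product over $i\ne j$ is thus at least $\prod_{i\le\ell,\,i\ne j}F(\max\{u_j,u_i\},1)\cdot F(\max\{u_j,\alpha\},1)^{k-\ell}$.
\end{itemize}
Subtracting a smaller quantity gives an upper bound, so summing over $j=1,2$ yields $\faked^R_k$.

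\textbf{ST schemes.} Assume without loss of generality that $u_1\le u_2$; the case $u_1>u_2$ is symmetric, with the strict/non-strict conventions of Lemmas~\ref{ST-density-1} and~\ref{ST-density-2} handled case by case.

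For the first term, Lemma~\ref{ST-density-1} gives either $f(u_1)/|\{i:u_i\ge u_1\}|$ or, if $u_1$ is the strict maximum, $(1-\frac1k)f(u_1)$. The second case cannot occur here, since $u_2\ge u_1$. The denominator counts $1$ and $2$ plus all $i\ge3$ with $u_i\ge u_1$. Restricting to $i\in[3,\ell]$ with $u_i>u_1$ shrinks the denominator, so the term is at most $f(u_1)/(2+|\{i\in[3,\ell]:u_i>u_1\}|)$.

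For the second term, Lemma~\ref{ST-density-2} gives either $(1-\frac1k)f(u_2)$, when $u_2$ is maximal, or $f(u_2)/(1+|\{i:u_i>u_2\}|)$. In the first case the value is at most $f(u_2)\min\{1-\frac1k,\frac{1}{1+|\{i\in[3,\ell]:u_i>u_2\}|}\}$, because maximality forces the count to be $0$. In the second case we have $1+|\{\cdot\}|\ge2$, so the value is at most $f(u_2)/2\le(1-\frac1k)f(u_2)$ for $k\ge2$. It is also at most the restricted-count fraction, since restricting the count to $[3,\ell]$ only shrinks the denominator.

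\textbf{Main obstacle.} The delicate points are the $\min$ in the ST term and the exact tie conventions when $u_1=u_2$. Both are handled by the case split above, so no deeper difficulty is expected.
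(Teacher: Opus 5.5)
Your proposal is correct and follows the paper's proof essentially step for step: IT and GKT go through Lemmas~\ref{L-IT-prefix} and~\ref{L-KT-prefix}; DT goes by dropping the suffix factors from the positive product and lower-bounding each suffix factor of the subtracted product by $F(\max\{u_j,\alpha\},1)$; and ST goes through Lemmas~\ref{ST-density-1} and~\ref{ST-density-2} with the counts restricted to $[3,\ell]$. Your extra remarks spell out details the paper leaves implicit: the second IT component, the feasibility hypothesis of Lemma~\ref{L-IT-prefix}, and the explicit case split behind the $\min$ in the ST term.
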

\begin{proof}
    For IT and GKT, the lemma follows from Lemma~\ref{L-IT-prefix} and Lemma~\ref{L-KT-prefix} respectively. If $R$ is a DT scheme, then by Corollary~\ref{DT-density}, we have for every $0 \leq u_{\ell + 1}, \ldots, u_k \leq \alpha$
    \begin{align*}
        &\, d^{R}_k(u_1,u_2,\ldots,u_k) \\
        \EQ &\, \sum_{j = 1}^2 f(u_j) \cdot \left(\prod_{\substack{i:u_i\ge u_j \\ 1 \leq i \leq k}}(1-F(u_j,u_i))-\prod_{\substack{i \ne j \\ 1 \leq i \leq k}}F(\max\{u_j,u_i\},1) \right)  \\
        \LE &\, \sum_{j = 1}^2 f(u_j) \cdot \left(\prod_{\substack{i:u_i\ge u_j \\ 1 \leq i \leq l}}(1-F(u_j,u_i))-\prod_{\substack{i \ne j \\ 1 \leq i \leq \ell}}F(\max\{u_j,u_i\},1) \prod_{\substack{i \ne j \\ \ell+1 \leq i \leq k}}F(\max\{u_j,u_i\},1) \right)  \\
        \LE &\, \sum_{j = 1}^2 f(u_j) \cdot \left(\prod_{\substack{i:u_i\ge u_j \\ 1 \leq i \leq l}}(1-F(u_j,u_i))-\prod_{\substack{i \ne j \\ 1 \leq i \leq \ell}}F(\max\{u_j,u_i\},1) \cdot F(\max\{u_j, \alpha\}, 1)^{k - \ell}\right) \EQ \faked_k^{R}(\bu)\;.
    \end{align*}
    The last inequality follows since $0 \leq u_i \leq \alpha$ for every $\ell + 1\leq i \leq k$ so $F(\max\{u_j, \alpha\}, 1) = 1 - F(\max\{u_j, \alpha\}) \leq 1 - F(\max\{u_j, u_i\}) = F(\max\{u_j, u_i\}, 1)$. Finally, if $R$ is an ST scheme, assuming $u_1 \leq u_2$ without loss of generality, we have by Lemma~\ref{ST-density-1} and Lemma~\ref{ST-density-2}
    \begin{align*}
        &\, d^{R}_k(u_1,u_2,\ldots,u_k) \\
        \EQ &\, \frac{f(u_1)}{|\{i \in [k] \mid u_i\geq u_1\}|} + f(u_2)\cdot \min\left\{1 - \frac{1}{k}, \frac{1}{1 + |\{i \in [k] \mid u_i >  u_2\}|}\right\} \\
        \LE &\, \frac{f(u_1)}{2 + |\{i \mid u_i > u_1, 3 \leq i \leq k\}|} + f(u_2)\cdot \min\left\{1 - \frac{1}{k}, \frac{1}{1 + |\{i \mid u_i >  u_2, 3 \leq i \leq k\}|}\right\} \\
        \LE &\, \frac{f(u_1)}{2 + |\{i \mid u_i > u_1, 3 \leq i \leq \ell\}|} + f(u_2)\cdot \min\left\{1 - \frac{1}{k}, \frac{1}{1 + |\{i \mid u_i >  u_2, 3 \leq i \leq \ell\}|}\right\} \EQ \faked^R_k(\bu)\;.
    \end{align*}
    This completes the proof.
\end{proof}

By Lemma~\ref{lem:upper_bound_small_k}, we know that if $d_0 \geq \faked^R_k(\bu)$, then the density will not exceed $d_0$ unless we add a coordinate greater than $\alpha$, so the recursion structure from the general $k$ case still applies here. On the other hand, when $d_0 < \faked^R_k(\bu)$, it is possible that we overestimated the density (note that $\faked^R_k(\bu)$ is only an upper bound on the density) since different rounding schemes may achieve their worst-case density on different points, so we need to check if adding one more coordinate decreases the density.  This explains Line~\ref{Line:min_check_finite_k}, where we do not return $\false$ unless $\ell = k$, and Line~\ref{Line:add_new_interval_finite_k}, where we need to check if adding a new coordinate helps ($\eps > 0$ here does not affect correctness, though it does affect the actual running time) . The final difference is on Line~\ref{line:check_alpha_finite_k}, where compared to general~$k$ we also have the halting condition $\ell = k$, which is self-evident. The formal correctness proof is similar to the general $k$ case, and we omit it here.

\begin{algorithm}
    \caption{The verification algorithm for finite $k$}\label{alg:finite_k}
    \begin{algorithmic}[1]
        \State \textbf{Input:} $\bU = (U_1, \ldots, U_\ell)$, where $2 \leq \ell \leq k$ and each $U_i$ is an interval within $[0, 1]$; target density $d_0$; some parameter $\eps > 0$.
        \Procedure{Verify}{$\bU, d_0$}
            \State $D \gets \textsc{BoundDensity}_k(\bU)$ 
            \If {$\min(D) > d_0$ and $\ell = k$} \label{Line:min_check_finite_k}
                \State \Return $\false$
            \EndIf
            \If {$\max(D) \leq d_0$}
                \If {$1 - \min(\sum_{i = 1}^\ell U_i) < \alpha$ or $\ell = k$}\label{line:check_alpha_finite_k}
                    \State \Return $\true$
                \Else 
                    \State $U_{\ell + 1} \gets [\alpha, 1 - \min(\sum_{i = 1}^\ell U_i)]$ 
                    \State $\bU' = (U_1, \ldots, U_{\ell + 1})$
                    \State \Return \textsc{Verify}$(\bU', d_0)$
                \EndIf 
            \EndIf
            \State Let $U_i$ be a longest interval in $\bU$ \Comment{If we are here, then $\max(D) > d_0$}
            \If {$|U_i| \leq \eps$ and $\ell < k$} \Comment{Add new interval if max interval length is small} \label{Line:add_new_interval_finite_k}
            
                \State $U_{\ell + 1} \gets [0, 1 - \min(\sum_{i = 1}^\ell U_i)]$ 
                \State $\bU' = (U_1, \ldots, U_{\ell + 1})$
                \State \Return \textsc{Verify}$(\bU', d_0)$ 
            \EndIf 
            \State Split $U_i$ into two equal-length subintervals $U_i^\leftsf$ and $U_i^\rightsf$
            \State $\bU^{\leftsf} \gets (U_1, \ldots, U_{i - 1}, U_i^{\leftsf}, U_{i + 1}, \ldots, U_\ell)$
            \State $\bU^{\rightsf} \gets (U_1, \ldots, U_{i - 1}, U_i^{\rightsf}, U_{i + 1}, \ldots, U_\ell)$
            \State \Return \textsc{Verify}$(\bU^\leftsf, d_0)$ $\wedge$ \textsc{Verify}$(\bU^\rightsf, d_0)$
        \EndProcedure
    \end{algorithmic}
\end{algorithm}

\subsection{Implementation and Verification Details}\label{subsec:ver-impl}

Now that we have discussed the interval arithmetic verification algorithms, we now give a few details about their implementation.
The core of the interval arithmetic verification is implemented using the Arb library~\cite{johansson2017arb} as part of the FLINT C libraries~\cite{flint}. Arb itself builds on the GNU MP~\cite{Granlund24} and MPFR~\cite{fousse2007mpfr} libraries for its arbitrary precision floating point arithmetic. To implement the bound in Lemma~\ref{L-IT-asymptotic}, we make sure of Arb's implementation of hypergeometric functions~\cite{johansson2019computing}. We also make use of a recently made C++ wrapper for a fragment of Arb~\cite{BHZ24}.

\begin{table}
\begin{center}
\begin{tabular}{c@{\hspace{2em}}c@{\hspace{1em}}c@{\hspace{1em}}c@{\hspace{1em}}c@{\hspace{1em}}c}
$k$ & Verified Ratio & N & Cores & Wall Time (hr) & Total Work (hr)\\[3pt]
\hline
\noalign{\vskip 3pt}\noalign{\vskip 3pt}
$4$ & \ourAPXd & $20000$ & $192$ & 10.5 & 1646\\
$5$ & \ourAPXe & $8000$ & $128$ & 21.9 & 2762\\
$6$ & \ourAPXf & $8000$ & $128$ & 11.7 & 1461\\
$7$ & \ourAPXg & $2000$ & $128$ & 2.4 & 296\\
$8$ & \ourAPXh & $2000$ & $128$ & 3.0 & 370\\
$9$ & \ourAPXi & $2000$ & $128$ & 4.0 & 456\\
$10$ & \ourAPXj & $2000$ & $128$ & 5.3 & 487\\
\hline
\noalign{\vskip 3pt}\noalign{\vskip 3pt}
any & \ourAPX & $4000$ & $192$ & 5.3 & 967\\
\end{tabular}
\end{center}
\caption{This table summarizes the results of our verifications of each of our results. Here ``$N$'' is the number of tasks the verification was split into, ``Wall Time'' refers to amount of physical time that passed to run the verification, and ``Total Work'' refers to the total amount of time used by all the cores.}\label{table:verification}
\end{table}

\paragraph{Parallel Verification.} In order make the verification computationally feasible, we take advantage of an intuitive source of parallelism. Recall that it suffices to $\textsc{Verify}(\bU, d_0)$, where $d_0$ is our target approximation ratio and $\bU = ([0,1])$. To parallelize this task, we pick a natural number $N \in \mathbb N$ (e.g., $N=2000$) and consider $N$ tasks $\bU_i := ([\frac{i-1}{N},\frac{i}{N}])$ for $i \in [N] := \{1,2\hdots, N\}$.

For each of our verified bounds, we use GNU Parallel~\cite{Tange2011} to deploy these $N$ tasks across machines at the University of California, Berkeley's Savio3 HPC cluster using Intel Xeon Skylake chips with 32 cores.\footnote{Some machines used had 40 cores, although in such cases GNU Parallel was configured to only take advantage of 32 of them.} We summarize the results of the verifications in Table~\ref{table:verification}. In total, over 8000 core-hours we used to verify the results in this paper.

\section{Discussion}\label{sec:discussion}

In this section, we give further discussion on how our new rounding schemes are related to the goal of eventually finding truly optimal rounding schemes for \MC. It essentially follows from arguments given in Karger et al.~\cite{KKSTY04}, combined with result of Manokaran et al.~\cite{MNRS08}, that the best approximation ratio that can be achieved for the \MC\ problem with~$k$ terminals, under the Unique Games Conjecture (UGC), is equal to the value of the following infinite $0$-sum game. An \emph{edge player} chooses two distinct points $\bu\ne\bv\in\Delta_k$, while a \emph{cut player} simultaneously chooses a $k$-cut, or more precisely, a measurable labeling $c:\Delta_k\to[k]$ satisfying $c(\be_i)=i$, for $i\in[k]$. The outcome of the game is $\frac{{[c(\bu)\ne c(\bv)]}}{\frac12 \|\bu-\bv\|_1}$. (Here $[c(\bu)\ne c(\bv)]$ is equal to~$1$ if $c(\bu)\ne c(\bv)$ and to~$0$ otherwise.) The cut player tries to minimize this value while the edge player tries to maximize it. A mixed strategy of the edge player is a distribution over pairs $(\bu,\bv)\in \Delta_k\times \Delta_k$.  A mixed strategy of the cut player is a distribution over labelings $c:\Delta_k\to[k]$.

The set of mixed strategies of the edge player, though infinite, is relatively manageable. In particular, it is compact. It is also not difficult to show that the cut player has an optimal mixed strategy that is symmetric, in which case the edge player may restrict herself to infinitesimal edges that are $(1,2)$-aligned, so her mixed strategy reduces to a distribution over $\Delta_k$.

By contrast, the space of mixed strategies available to the cut player is far less manageable. In particular, suitable measurability assumptions are needed to ensure that the game is well defined and that it has a value.

Our computation approach approximates the value of this infinite game using mixed strategies with finite support. However, the strategies in the cut player's support are not pure, i.e., deterministic, labelings, but rather basic rounding schemes, i.e., continuous distributions of fairly restricted forms over labelings.\footnote{It is not difficult to check that any distribution that gives a single labeling $c_0:\Delta_k\to[k]$ a positive probability has an infinite value.} We currently use four different families of basic rounding schemes: the newly introduced Generalized Kleinberg-Tardos (GKT), Single Threshold (ST), Independent Thresholds (IT) and Descending Thresholds (DT).

We say that a labeling $c:\Delta_k\to[k]$ is \emph{convex} if $c^{-1}(i)$ is convex, for every $i\in[k]$.
We say that a labeling $c:\Delta_k\to[k]$ is \emph{polytopal} if $c^{-1}(i)$ is a polytope, for every $i\in[k]$. Furthermore, a labeling $c:\Delta_k\to[k]$ is \emph{facet-parallel polytopal} if each facet of each one of the polytopes $c^{-1}(i)$, for $i\in[k]$, is parallel to one of the facets of the simplex. Each of ST, IT and DT uses only facet-parallel polytopal labeling. In fact, they only use what Karger et al.~\cite{KKSTY04} call \emph{side-parallel cuts (sparcs)}, defined by a permutation~$\sigma$ on the terminals and a sequence of thresholds, as in the definition of IT schemes. The cuts used by EC are polytopal, but not of the other forms. The cuts obtained using the simplex transformations of Buchbinder et al. \cite{BSW21}, as far as we can see, are not polytopal.

It is interesting to note that GKT, and even KT, while only branching on conditions of the form $u_i\ge t$, may generate cuts that are not convex. For example, if $k=3$, the sequence of terminals considered is $1,2,1,3,\dots$, and the sequence of thresholds is $0.2,0.25,0.1,0,\dots$, then $c^{-1}(1)=\{(u_1,u_2,u_3)\in\Delta_3 \mid u_1\ge 0.2 \lor (u_1\ge 0.1 \land u_2<0.25)\}$. If $\bu_1=(0.24,0.76,0)$ and $\bu_2=(0.14,0,0.86)$, then $c(\bu_1)=c(\bu_2)=1$ while $c(\frac{1}{2}(\bu_1+\bu_2))=2$.

It would be extremely interesting to know which types of cuts are needed for obtaining optimal approximation algorithms for the \MC\ problem. Are sparcs enough? \KKSTYal\ show that they are enough for $k=3$, but the question is open for $k\ge 4$. Are non-convex cuts needed? If not, why are the GKT family so useful?

Each of the families ST$(f)$, IT$(f)$ and DT$(f)$, while inducing distributions over sparcs, cannot induce an arbitrary distribution over sparcs. To describe such a general distribution we need a density function $f:[0,1]^{k}\to\RR^+$ of a continuous $k$-dimensional random variable. As discussed by \SV, such a distribution should be symmetric, and in particular the marginal distribution induced on any two terminals should be the same. This discussion lead them to define the DT family of rounding schemes.

A smaller family of rounding schemes that at least in the limit can approximate a general distribution over sparcs is what we can call IT$(f_1,f_2,\ldots,f_k)$. This is a version of IT in which each threshold is chosen according to its own distribution. symmetry is assured by the choice of the random permutation. The threshold for the $i$-th terminal in the permutation is chosen using $f_i$. \KKSTYal\ use such rounding schemes to obtain their best ratios for $k=4,5$. More specifically, for some value of~$N$, and for every $i_1,i_2,\ldots,i_k\in [N]$ they use IT$(f_1,f_2,\ldots,f_k)$ where~$f_j$ is the density function of a uniformly random variable on $[\frac{i_j-1}{N},\frac{i_j}{N}]$. They solve a huge linear program to find the best mixture of such schemes. For $k\ge 6$, the values of~$N$ that could be used are too small to provide good approximation ratios.

Mixtures of KT$(f)$, ST$(f)$, IT$(f)$ and DT$(f)$, while unlikely to give optimal algorithms, seem to give substantially improved results. It would be interesting to understand the limits of what can be achieved using them. (In particular, we note that they cannot be used to obtain an optimal algorithm for $k=3$, though they come very close.) It would also be interesting to see whether other interesting families can be added to the mix.

We hope that a closer look of the density functions used for each one of these families will reveal some interesting patterns. Understanding these patterns may lead to \emph{meta rounding schemes} in which we choose a continuous distribution ${\mathcal F}_{\KT}$ over simple density functions to be used by KT schemes, etc. The simple density functions may, for example, be step functions that are nonzero only on two intervals so that a continuous distribution over them would be simple to describe.

\section{Concluding Remarks and Further Directions}\label{sec:concl}

\MC\ is a fundamental and intriguing optimization problem. Although some improved approximation ratios were obtained in this paper, we are still far from obtaining optimal, or close to optimal, approximation algorithms for it. We believe that our computational techniques can be pushed a bit further and we hope to present some further improved results in the final version of this paper. We also plan to take a closer look at the newly discovered rounding schemes and try to gain some insights from them that may potentially lead to more improvements.

The two most interesting setting of the problem at present are the general case with an arbitrary number of terminals, which has been the main focus of previous papers, and the case of $k=4$ terminals. Since optimal approximation algorithms are known for the $k=3$ case~\cite{CT99,KKSTY99,KKSTY04,CCT06}, it is natural to seek a better understanding of the $k=4$ case, though this appears to be a highly non-trivial task.

We conclude the paper with a few directions toward further improving our understanding of \MC\ in both the general and fixed $k$ settings.

\paragraph{Richer Distributions of Rounding Schemes.}

Although our current methods could certainly be refined to give modest improvements to Theorem~\ref{thm:main} and Theorem~\ref{thm:finite}, new ideas are needed to give substantial improvements. In Section~\ref{sec:discussion}, we discuss how even for $k=3$ a mixture of $\ST, \DT, \IT$, and GKT does not appear to be enough to achieve the (known) optimal approximation ratio. As such, more effort needs to be put into designing novel rounding schemes which can help push the approximation ratio of \MC\ even lower. Some candidates for richer families of rounding schemes are described in Section~\ref{sec:discussion}, but care is needed to not make the rounding schemes \emph{too} general so that any optimization runs into the ``curse of dimensionality.''

As such, we ask the following open ended question: can we prove \emph{any} structural characteristics of the truly optimal \MC\ schemes either for $k$ general or $k=4$? For example, does the cut regions of \MC\ need to be convex, or is the non-convexity of (G)KT a necessary feature?

\paragraph{Improved Verification Techniques.}

Due to the complexity of the schemes we studied in this paper, using interval arithmetic verification was essential for giving a rigorous bound on their accuracy. As noted in Table~\ref{table:verification}, the interval arithmetic verifications we conducted requires significant computing resources. We leave improving the runtime of the verification scheme as the subject of future work. The following are a few concrete directions in which this could be possible.

One potential source of improvement is exploring different interval splitting techniques. Right now, given a prefix $\bU = (U_1, \hdots, U_\ell)$, we pick the longest interval $U_i$ to split in our recursion. However, the longest interval may not necessarily be the largest contributor to the error of the rounding scheme. For example, some other interval $U_j$ might lie in a region where the probability density functions have larger derivatives, contributing more error. 

Second, more understanding is needed of how to take advantage of parallelism in the verification. As mentioned in Section~\ref{subsec:ver-impl}, we do a ``uniform'' case split, where each task is based on an interval of the form $(\tfrac{i-1}{N}, \tfrac{i}{N})$. However a more dynamic case split (possibly in more than one dimension) would likely help distribute the workload more accurately.

Finally, another approach is to instead verify \emph{smoother} rounding schemes. One approach toward this would be to understand ``Meta schemes'' (i.e., parameterized mixtures of basic rounding schemes, see Section~\ref{sec:discussion}) and coming up with new analytical formulas for their cut densities. This could greatly decrease the numerical error in the interval arithmetic calculations, thus leading to faster running times.

\paragraph{Lower Bounds.}

While this paper focuses on improved approximation algorithms for \MC, there appears to be room for improvement in terms of integrality gap (and thus UG-hardness) lower bounds for \MC.  As previously mentioned, the current best lower bound for small values of $k$ is $6/(5+\frac{1}{k-1})$ due to Angelidakis, Makarychev, and Manurangsi~\cite{AMM17}, improving on a long-standing bound of $8/(7+\frac{1}{k-1})$ by Freund and Karloff~\cite{FK00}. The main source of the improved analysis in \cite{AMM17} is from looking at a variant of \MC, which they call \emph{Non-opposite} \MC. A precise description is beyond the scope of this article, but the key idea is that by studying \emph{Non-opposite} \MC\ in a fixed-dimensional simplex $\Delta_k$, one can deduce lower bounds on the traditional \MC\ problem for any number of terminals $k' \ge k$. More precisely, Angelidakis, Makarychev, and Manurangsi~\cite{AMM17} adapt the techniques of Cheung, Cunningham, and Tang \cite{CT99,CCT06} and Karger et al.~\cite{KKSTY99,KKSTY04} to tightly understand the behavior of non-opposite cuts in $\Delta_3$, yielding a number of new lower bounds including $1.2$ for general $k$.

However, it appears that even better lower bounds can be found by studying non-opposite cuts in~$\Delta_k$ for $k\ge 4$. In fact, B{\'e}rczi, Chandrasekaran, Kir{\'a}ly, and Madan~\cite{BCKM20}, found an improved integrality gap for $\Delta_4$ with a ratio of $1.20016$, improving the lower bound of \MC\ for general $k$ accordingly. However, their methods do not appear to yield improvements for any (reasonably) small fixed $k$. 

Given this landscape, we believe the next major frontier in the study of \MC\ is to obtain a tight understanding of both traditional cuts and non-opposite cuts in $\Delta_4$. Furthermore, the minimax framework we use in this paper to design algorithms for \MC\ could also be useful in this regard. However, much modification of the techniques is needed, as an adversarial Cut player could design \emph{any} legal cut of $\Delta_4$ rather than just the basic schemes (such as $\KT$, $\ST$, $\IT$ and $\DT$) studied in this paper. We leave further investigation of how to overcome these challenges to future work.

\section*{Acknowledgments}

We acknowledge help from ChatGPT while writing the code for discovering new rounding schemes and while preparing some of the plots. We emphasize that we did \emph{not} use ChatGPT or any other LLM model while writing our verification code. We thank the University of California, Berkeley, for providing the HPC resources used in this project.

\bibliographystyle{alphaurl}
\bibliography{ref}

\newpage
\clearpage
\appendix
\pagestyle{appendix}

\section{The LP Relaxation of \MC}\label{A:LP-relax}

The LP relaxation of \MC\ introduced by \cite{CKR00} is given in Figure~\ref{F-LP-relax}. Let $G=(V,E,w)$, where $w:E\to\RR^+$ be the weighted undirected input graph. We assume that $V=[n]=\{1,2,\ldots,n\}$ and that the $k$ terminals are $[k]=\{1,2,\ldots,k\}$ On the left, it is given as a simplex embedding problem. The relaxation assigns to each vertex $i\in V$ a vector $\bu_i\in \Delta_k$ in the $k$-simplex $\Delta_k=\{ (x_1,x_2,\ldots,x_k) \mid \sum_{i=1}^k x_i=1 , x_1,x_2,\ldots,x_n\ge 0\}$. For a terminal $i\in[k]$ we require that $\bu_i=\be_i$, where $\be_i$ is the $i$-th unit vector. Note that $\be_1,\be_2,\ldots,\be_k$ are the vertices of $\Delta_k$. The objective is to minimize half the weighted sum of the $L_1$-distances between simplex points assigned to adjacent vertices in the graph. This is indeed a relaxation of the problem as any integral solution, i.e., any multiway cut separating the terminals, can be realized by assigning each $\bu_i$ to one of the unit vectors $\be_1,\be_2,\ldots,\be_k$. Note that for every $i\ne j\in [k]$ we have $\frac12\|\be_i-\be_j\|_1=1$.

The simplex embedding program has a clear intuitive meaning, but it is not a linear program, due to the appearance of the nonlinear terms $\|\bu_i-\bu_j\|_1$ in the objective function. However, it is not difficult to see that it is equivalent to the linear program given on the right of Figure~\ref{F-LP-relax}. 

\begin{figure}
\[
\begin{array}{llll}
 \min & \multicolumn{3}{l}{\displaystyle\frac12 \sum_{\{i,j\}\in E} w_{i,j}\|\bu_i-\bu_j\|_1} \\[5pt]
 \text{s.t.} & \bu_i\in \Delta_k &,& i\in V\\
             & \bu_i=\be_i &,& i\in[k] \\
\end{array}
\hspace{1cm}
\begin{array}{llll}
 \min & \multicolumn{3}{l}{\displaystyle\frac12 \sum_{\substack{\{i,j\}\in E\\\ell\in[k]}} w_{ij}d_{i,j,\ell}} \\[5pt]
 \text{s.t.} & d_{i,j,\ell} \ge u_{i,\ell}-u_{j,\ell} &,& \{i,j\}\in E \;,\; \ell\in[k]\\
             & d_{i,j,\ell} \ge u_{j,\ell}-u_{i,\ell} &,& \{i,j\}\in E \;,\; \ell\in[k]\\[4pt]
             & \sum_{\ell=1}^k u_{i,\ell} = 1 &,& i\in V\\[4pt]
             & u_{i,i}=1 &,& i\in[k] \\
             & u_{i,\ell}\ge 0 &,& i\in V \;,\;\ell\in[k] \\
\end{array}
\]
\caption{The LP relaxation of \MC\ introduced in \cite{CKR00}.}\label{F-LP-relax}
\end{figure}

\section{A Integral Formula for the Density of IT Schemes}\label{A-IT-integral}

We now provide a proof of Lemma~\ref{L-IT-int-formula} which we repeat for convenience.

\begin{replemma}{L-IT-int-formula}
    For every $k\ge 3$ we have:
    \[ d_k^{\IT(f),1}(u_1,\ldots,u_k)
    \LE f(u_1)\cdot \left( \int_0^1 \prod_{i=2}^k(1-tF(u_i))\,dt - \frac{1}{k}\prod_{i=2}^k (1-F(u_i)) \right)\;.\]
\end{replemma}

\begin{proof}
    A convenient way of choosing a random permutation on $[k]$ is to choose for each $i\in [k]$ a uniformly distributed random number $X_i$ in $[0,1]$ and use the permutation that sorts these values in increasing order. Condition on the random number $t=X_1$. Terminal~$i$ catches the edge before it can be cut by terminal~$1$, only if $X_i<t$ and the random threshold chosen for terminal~$i$ is at most~$u_i$. The probability of this event is $t F(u_i)$, as the two events are independent. The events for different $i$'s are also independent. Thus, the probability that no terminal captures the edge before it is cut by terminal~$1$ is exactly $\prod_{i=2}^k(1-tF(u_i))$. Integrating over~$t$ we get the integral term in the statement of the lemma.

    The above expression ignores the fact that terminal~$1$ does not cut the edge if it is the last terminal in the permutation. To obtain the correct expression we need to subtract $\frac{1}{k}\prod_{i=2}^k(1-F(u_i))$ which is the probability that terminal~$1$ is the last in the permutation but would otherwise cut the edge.
\end{proof}

\section{Generalized Exponential Clocks}\label{app:GEC}

Introduced by Buchbinder, Naor, and Schwartz \cite{BNS18}, the \emph{exponential clocks} rounding scheme works as follows. Sample $k$ values $X_1, \hdots, X_k \ge 0$ from the exponential distribution with scale parameter $1$. Then, for any simplex point $u \in \Delta_k$, we color the $u$ according to $\argmax_{i \in [k]} \frac{u_i}{X_i}$ (or equivalently $\argmin_{i \in [k]} \frac{X_i}{u_i}$). We let EC be the resulting rounding scheme. The key property of EC proved by Buchbinder et al.~\cite{BNS18} is that the cut density of $EC$ is $d^{EC}_k(u) = 2 - u_1 - u_2$. This very elementary expression for the cut density allowed for it to be seamlessly analyzed in mixtures with other rounding schemes such as single threshold (ST) and independent thresholds (IT) rounding schemes.

A rather natural avenue toward generalizing the exponential clocks rounding scheme would be to replace the use of the exponential distribution with an arbitrary probability distribution over the positive reals. However, without taking advantage of the unique properties of the exponential distribution, the cut density functions become largely impractical. For completeness, we derive here these cut density functions.
In the competing clock algorithm, we choose $k$ i.i.d. random variables $Z_1, \ldots, Z_k$ from some distribution $\mathcal{D}$. For any $u = (u_1, \ldots, u_k)$, we assign $\ell(u) = \argmin_i\frac{Z_i}{u_i}$. 

We would like to compute the cut density produced by this algorithm. Let $f$, $F$ be the pdf and cdf of $\mathcal{D}$ respectively, and $f_t$, $F_t$ the pdf and cdf of $Z/t$ where $Z \sim \mathcal{D}$. Following the analysis in~\cite{BNS18}, let $u = (u_1, \ldots, u_k), v = (u_1 + \eps, u_2 - \eps, u_3, \ldots, u_k)$, and let $A_i$ be the event that $\ell(u) = \ell(v) = i$. We have
\begin{align*}
    \Pr[A_1] & \EQ \Pr\left[\;\frac{Z_1}{u_1} = \min\left\{\frac{Z_1}{u_1}, \frac{Z_2}{u_2}, \ldots, \frac{Z_k}{u_k}\right\} \;,\; \frac{Z_1}{u_1 + \eps} \EQ \min\left\{\frac{Z_1}{u_1 + \eps}, \frac{Z_2}{u_2 - \eps}, \ldots, \frac{Z_k}{u_k}\right\}\;\right] \\
    & \EQ \Pr\left[\;\frac{Z_1}{u_1} = \min\left\{\frac{Z_1}{u_1}, \frac{Z_2}{u_2}, \ldots, \frac{Z_k}{u_k}\right\}\;\right] \\
    & \EQ \int_0^\infty f_{u_1}(x) \cdot \prod_{j \geq 2} (1 - F_{u_j}(x))\dd x \;, \\
    \Pr[A_2] & \EQ \Pr\left[\;\frac{Z_2}{u_2} = \min\left\{\frac{Z_1}{u_1}, \frac{Z_2}{u_2}, \ldots, \frac{Z_k}{u_k}\right\} \;,\; \frac{Z_2}{u_2 - \eps} \EQ \min\left\{\frac{Z_1}{u_1 + \eps}, \frac{Z_2}{u_2 - \eps}, \ldots, \frac{Z_k}{u_k}\right\}\;\right] \\ 
    & \EQ \Pr\left[ \;\frac{Z_2}{u_2 - \eps} = \min\left\{\frac{Z_1}{u_1 + \eps}, \frac{Z_2}{u_2 - \eps}, \ldots, \frac{Z_k}{u_k}\right\}\;\right] \\
    & \EQ \int_0^\infty f_{u_2-\eps}(x) \cdot (1 - F_{u_1 + \eps}(x)) \cdot \prod_{j \geq 3} (1 - F_{u_j}(x))\dd x \;, \\
    \Pr[A_i] & \EQ \Pr\left[\;\frac{Z_i}{u_i} = \min\left\{\frac{Z_1}{u_1}, \frac{Z_2}{u_2}, \ldots, \frac{Z_k}{u_k}\right\} \;,\; \frac{Z_i}{u_i} \EQ \min\left\{\frac{Z_1}{u_1 + \eps}, \frac{Z_2}{u_2 - \eps}, \ldots, \frac{Z_k}{u_k}\right\}\;\right] \\
    & \EQ \Pr\left[\;\frac{Z_i}{u_i} = \min\left\{\frac{Z_1}{u_1 + \eps}, \frac{Z_2}{u_2}, \ldots, \frac{Z_k}{u_k}\right\}\;\right] \\
    & \EQ \int_0^\infty f_{u_i}(x) \cdot (1 - F_{u_1 + \eps}(x)) \cdot \prod_{j \geq 2, j \neq i} (1 - F_{u_j}(x))\dd x \;,
\end{align*}
where $i \geq 3$. Note that if we plug in the pdf and cdf for exponential random variables ($f_t(x) = te^{-tx}, F_t(x) = 1 - e^{-tx}$), we recover the analysis of Lemma 3.1 in~\cite{BNS18}. When $\eps = 0$, we have that $\sum_{i = 1}^k\Pr[A_i] = 1$, so the cut density is exactly
\[
\lim_{\eps \to 0} \frac{1 - \sum_{i = 1}^k\Pr[A_i]}{\eps} \EQ \left.-\frac{\partial}{\partial \eps} \left(\sum_{i = 1}^k \Pr[A_i]\right)\right|_{\eps = 0} \;.
\]
To compute these derivatives, we use the fact that $f_t(x) = tf(tx)$, $1-F_t(x) = \int_x^\infty f_t(y)\dd y = \int_x^\infty tf(ty)\dd y$. It follows that
\begin{align*}
    \left.\frac{\partial}{\partial \eps} (1 - F_{t + \eps}(x))\right|_{\eps = 0} & \EQ \left.\frac{\partial}{\partial \eps}\int_x^\infty (t+\eps) f((t+\eps)y)\dd y\right|_{\eps = 0} \\
    & \EQ \left.\int_{x}^\infty (t+\eps)\cdot y \cdot f'((t + \eps)y) + f((t+\eps)y) \dd y\right|_{\eps = 0} \\
    & \EQ \int_{x}^\infty t \cdot y \cdot f'(ty) + f(ty) \dd y\\
    & \EQ \left.yf(ty)\right|_x^{+\infty} - \int_{x}^\infty f(ty) \dd y + \int_{x}^\infty f(ty) \dd y \\
    & \EQ -x \cdot f(tx) \;,
\end{align*}
and 
\begin{align*}
    \left.\frac{\partial}{\partial \eps} f_{t-\eps}(x)\right|_{\eps = 0} & \EQ \left.\frac{\partial}{\partial \eps} (t - \eps)f((t-\eps)x)\right|_{\eps = 0} \\
    & \EQ \left. -f((t-\eps)x) - (t-\eps)xf'((t-\eps)x)\right|_{\eps = 0} \\
    & \EQ  -f(tx) - txf'(tx) \;.
\end{align*}
So we have, for $i \geq 3$, 
\begin{align*}
    \left.\frac{\partial}{\partial \eps} \Pr[A_i]\right|_{\eps = 0} \EQ \int_0^\infty f_{u_i}(x) \cdot (-x \cdot f(u_1x)) \cdot \prod_{j \geq 2, j \neq i} (1 - F_{u_j}(x))\dd x \;,
\end{align*}
and
\begin{align*}
    \left.\frac{\partial}{\partial \eps} \Pr[A_2]\right|_{\eps = 0} & \EQ \int_0^\infty (-f(u_2x) - u_2xf'(u_2x)) \cdot \prod_{j \neq 2} (1 - F_{u_j}(x))\dd x \\
    & \qquad + \int_0^\infty f_{u_2}(x) \cdot (-x\cdot f(u_1x)) \cdot \prod_{j \geq 3} (1 - F_{u_j}(x))\dd x\;.
\end{align*}
For the first term in the sum, we have
\begin{align*}
    &\int_0^\infty (-f(u_2x) - u_2xf'(u_2x)) \cdot \prod_{j \neq 2} (1 - F_{u_j}(x))\dd x \\
     =\,\, &  -\int_0^\infty f(u_2x)\cdot \prod_{j \neq 2} (1 - F_{u_j}(x))\dd x - \int_0^\infty u_2xf'(u_2x) \cdot \prod_{j \neq 2} (1 - F_{u_j}(x))\dd x \\
     =\,\, &  -\int_0^\infty f(u_2x)\cdot \prod_{j \neq 2} (1 - F_{u_j}(x))\dd x - \left(\left.x\prod_{j \neq 2} (1 - F_{u_j}(x))f(u_2x)\right|_0^\infty - \int_0^\infty f(u_2x) \dd(x\prod_{j \neq 2} (1 - F_{u_j}(x))) \right)\\     
     =\,\, &  -\int_0^\infty f(u_2x)\cdot \prod_{j \neq 2} (1 - F_{u_j}(x))\dd x + \int_0^\infty f(u_2x) \dd(x\prod_{j \neq 2} (1 - F_{u_j}(x))) \\
     =\,\, & \sum_{j \neq 2}\int_{0}^\infty f(u_2x) (-xf_{u_j}(x))\prod_{\ell \neq 2,j}(1 - F_{u_\ell}(x))\dd x \;.
\end{align*}
In summary, we have that the cut density is equal to
\begin{align*}
    \left.-\frac{\partial}{\partial \eps} \left(\sum_{i = 1}^k \Pr[A_i]\right)\right|_{\eps = 0} & \EQ \frac{1}{u_1}\sum_{i \neq 1}\int_0^\infty x f_{u_1}(x) f_{u_i}(x)\prod_{j \neq 1, i}(1 - F_{u_j}(x)) \dd x \\
    & \qquad + \frac{1}{u_2}\sum_{i \neq 2}\int_0^\infty x f_{u_2}(x) f_{u_i}(x)\prod_{j \neq 2, i}(1 - F_{u_j}(x)) \dd x \;. \\
\end{align*}
As a sanity check, for exponential random variables the above simplifies to $(1 - u_1) + (1 - u_2) = 2 - u_1 - u_2$.

\end{document}